\DeclareMathOperator*{\argmin}{\arg\min}
\newcommand{\blue}{\color{black}}
\newcommand{\black}{\color{black}}
\newcommand{\red}{\color{black}}
\begin{document}

\title{Dynamic Partial Cooperative MIMO System for Delay-Sensitive
Applications with Limited Backhaul Capacity}

\newtheorem{Thm}{Theorem}
\newtheorem{Lem}{Lemma}
\newtheorem{Cor}{Corollary}
\newtheorem{Def}{Definition}
\newtheorem{Exam}{Example}
\newtheorem{Alg}{Algorithm}
\newtheorem{Prob}{Problem}
\newtheorem{Rem}{Remark}
\newtheorem{Proof}{Proof}
\newtheorem{Subproblem}{Subproblem}
\newtheorem{assumption}{Assumption}

%\author{\authorblockN{Ying Cui,  Vincent K. N. Lau and Huang Huang}
%\thanks{Ying Cui is with the Department of ECE, Northeastern University, Boston, MA, USA.  Vincent K. N. Lau is with the Department of ECE, Hong Kong University of Science and Technology (HKUST), Hong
%Kong. Huang Huang is with Huawei Technologies Co., Ltd., Shenzhen, P.R.China.}
%}

\author{\authorblockN{Ying Cui, \IEEEmembership{MIEEE}}
\authorblockA{Department of ECE\\ Northeastern University, USA} \and \authorblockN{Vincent K.N. Lau, \IEEEmembership{FIEEE}} \authorblockA{ Department of ECE \\ HKUST, Hong
Kong}\and\authorblockN{Huang Huang, \IEEEmembership{MIEEE}}\authorblockA{Huawei Technologies Co., Ltd.\\ Shenzhen, China}}
\maketitle

\begin{abstract}
\textcolor{black}{Considering backhaul consumption in practical systems, it may not be the best choice to engage all the time in full cooperative MIMO  for interference mitigation.} 
In this paper, we  propose a novel downlink partial cooperative MIMO
(Pco-MIMO) physical layer   (PHY) scheme, which allows flexible tradeoff between
the {\em partial data cooperation} level and the {\em backhaul
consumption}. Based on this Pco-MIMO scheme,  we consider  \textcolor{black}{dynamic} transmit power and  rate  allocation according to the \textcolor{black}{imperfect} channel state information at transmitters (CSIT) and  the queue state information (QSI) to minimize the average delay \textcolor{black}{cost} 
subject to \textcolor{black}{average} backhaul consumption constraints and average power
constraints. The delay-optimal control problem is formulated as an
infinite horizon average cost constrained partially observed Markov
decision process (CPOMDP). \textcolor{black}{By} exploiting the special structure in our
problem, we  derive an {\em equivalent Bellman Equation} to solve
the CPOMDP. \textcolor{black}{To reduce computational complexity} and facilitate distributed implementation, we propose
a distributed online learning algorithm to estimate the per-flow
potential functions and Lagrange multipliers (LMs) and
a distributed online stochastic partial  gradient algorithm to
obtain the power and rate control policy.  The proposed \textcolor{black}{low-complexity distributed} solution is 
based on local observations of the system states at the BSs and is very robust against model variations.  We also  prove the convergence and the asymptotic optimality of the proposed solution.
\end{abstract}

\begin{keywords}
partial cooperative MIMO, delay-sensitive, limited Backhaul
capacity, imperfect CSIT.
\end{keywords}

\newpage

%\IEEEpeerreviewmaketitle
\section{Introduction}\label{sec:intro}

%There are {\red many} works focusing on interference mitigation
%techniques for \textcolor{black}{downlink} wireless systems. According to the
%backhaul consumption  \textcolor{black}{requirement}, these   \textcolor{black}{techniques} can be roughly
%classified into two types, namely {\em coordinative MIMO} techniques
%and {\em cooperative MIMO} techniques.  For the coordinative MIMO
% \cite{IA:conventional:2008,IA:M*N_MIMO:2008} , only the  channel state information (CSI)  \textcolor{black}{(not the payload data)} is shared among  the base stations (BSs) for the beamformer design at  each BS to combat interference. Hence, the backhaul consumption   \textcolor{black}{of the coordinative MIMO techniques is relatively}  small at
%the expense of performance, e.g., degrees of freedom (DoF). On the other hand, for the
%cooperative MIMO  \cite{Gesbert:jsac:2010,zhang:coor:2004}, both the CSI and the payload data are shared among the BSs for the joint precoder designs at the BSs  \textcolor{black}{to  increase the
%spectral efficiency dramatically}. However, the  \textcolor{black}{significant}
%performance  \textcolor{black}{gain} of \textcolor{black}{the} cooperative MIMO techniques \textcolor{black}{comes} at the expense of
%\textcolor{black}{increased} backhaul consumption to deliver the shared payload data between
%the BSs.

\subsection{Background}

There are many works focusing on interference mitigation techniques for downlink wireless systems. According to the backhaul consumption requirement, these techniques can be roughly classified into two types, namely, coordinative MIMO techniques and cooperative MIMO techniques. \textcolor{black}{For the coordinative MIMO \cite{DahroujYu:2010,Gesbertadaptation:2007,RenLaDLbf:2006}, each base station (BS) serves a disjoint set of mobile users (MSs), but designs its beamformer jointly with all other BSs to reduce inter-cell interference. Therefore, only the channel state information (CSI) (not the payload data) is shared among BSs for the beamformer design at each BS to combat interference. The backhaul consumption of the coordinative MIMO techniques is relatively small at the cost of performance, e.g., degrees of freedom (DoF). On the other hand, for the cooperative MIMO \cite{FoschiniNetworkMIMO:2006, ShamaicoMIMO:2001,zhang:coor:2004}, all BSs serve and coordinate interference to all MSs. Therefore, both the CSI and the payload data are shared among the BSs and the network becomes a broadcast channel topology with joint precoding at the BSs. However, the significant performance gain of the cooperative MIMO techniques \textcolor{black}{comes} at the cost of increased backhaul consumption to deliver the shared payload data among the BSs. (See \cite{Gesbert:jsac:2010}, \cite{BjornsonEduardMulticellMIMO:2012} and references therein for surveys of recent results on coordinative and cooperative MIMO.)}

It is obvious that
when backhaul  constraints are  imposed, \textcolor{black}{it may not be optimal to always engage in full MIMO cooperation  to mitigate interference}.
 Recently, there have been some research works on  partial MIMO cooperation. For example, \textcolor{black}{in \cite{Marsch08onmulti-cell}, the authors considered joint user selection, antenna selection and power control for backhaul constrained downlink cooperative transmission in a multi-cell network where each BS has multiple antennas, while each MS has one  antenna.  MIMO cooperation is only done among the selected antennas. A heuristic solution adaptive to the CSI was proposed.} In
\cite{MIMO:int:huang}, the authors proposed a uni-directional \textcolor{black}{MIMO cooperation design (called Uco-MIMO here)} \textcolor{black}{for a two multi-antenna transmitter and two multi-antenna receiver setup} to reduce the backhaul consumption. However, the design is static in the sense
that it always  engages in the same uni-directional data
sharing (consuming the same backhaul capacity) in the entire
communication session and fails to  capture good channel opportunities in \textcolor{black}{dynamic wireless systems}. In \cite{RandapartialcoMIMO:2011} and \cite{SiddarthpartialcoMIMO:2010}, the authors proposed partial MIMO cooperation designs \textcolor{black}{for a two multi-antenna transmitter and two single-antenna receiver setup} based on common-private rate splitting schemes under backhaul constraints. The rate splitting schemes are adaptive to the CSI only.

\textcolor{black}{In this paper,
we are
interested in designing a novel downlink partial cooperative MIMO
(Pco-MIMO) physical layer   (PHY) scheme, which allows flexible tradeoff between
the {\em partial data cooperation} level and the {\em backhaul
consumption}. Based on this Pco-MIMO scheme,  we consider  \textcolor{black}{dynamic} transmit power and  rate  allocation according to the channel state information at transmitters (CSIT) and  the queue state information (QSI) to minimize the average delay \textcolor{black}{cost} 
subject to \textcolor{black}{average} backhaul consumption constraints and average power
constraints. The motivations and challenges of this work are summarized below.}

$\bullet$ {\bf Flexible Partial Cooperative MIMO PHY Scheme:} \textcolor{black}{The existing partial cooperative MIMO designs  in \cite{Marsch08onmulti-cell,MIMO:int:huang,RandapartialcoMIMO:2011, SiddarthpartialcoMIMO:2010} have certain restrictions on the  cooperation level (e.g., MIMO cooperation among selected antennas in \cite{Marsch08onmulti-cell} and uni-directional MIMO cooperation in \cite{MIMO:int:huang}) or  the network configuration (e.g., the two multi-antenna
transmitter and two single-antenna receiver configuration in \cite{RandapartialcoMIMO:2011} and \cite{SiddarthpartialcoMIMO:2010}).} It is \textcolor{black}{quite challenging} 
to design a  PHY  scheme that supports \textcolor{black}{flexible adjustment of the cooperation level (embracing the full coordinative MIMO, partial cooperative MIMO and full cooperative MIMO  schemes as special cases)  as illustrated in Fig. \ref{fig:coor_coop}.  Furthermore, the scheme should also be applicable to a general multi-BS multi-antenna configuration.}

\begin{figure}[h]
 \begin{center}
\includegraphics[height=1.4cm, width=8cm]{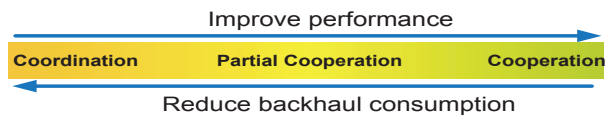}
 \end{center}
    \caption{Illustration of the family of {\em interference mitigation techniques}.}
    \label{fig:coor_coop}
\end{figure}

$\bullet$  {\bf Delay-Aware Dynamic Partial Cooperative MIMO Control:}
\textcolor{black}{The existing resource control designs for the partial cooperative MIMO schemes  in \cite{Marsch08onmulti-cell} and \cite{RandapartialcoMIMO:2011} are adaptive to the CSI only. A common assumption in these existing works is that there  are infinite backlogs at the
transmitters and the applications are delay-insensitive.  However, in practice, a lot of applications have bursty data arrivals and they are
delay-sensitive. Therefore, it is very important to take into account the delay performance in designing 
partial cooperative MIMO schemes.}
To support delay-sensitive
applications,
the dynamic resource control  should be jointly adaptive to the CSI
and the QSI   in the system \textcolor{black}{to exploit the 
information regarding the {\em transmission opportunity} (provided by the CSI) and 
the {\em data urgency} (provided by the QSI).}  Striking an optimal balance between the \textcolor{black}{transmission}
opportunity  and the data urgency for delay-sensitive applications is very
{\red challenging},  \textcolor{black}{because it involves solving an infinite horizon stochastic optimization problem \cite[Chap. 4]{Bertsekas:2007}.} \textcolor{black}{The brute force solutions using stochastic optimization techniques \cite[Chap. 4]{Bertsekas:2007} lead to centralized delay-optimal control policies. These brute force solutions have exponential complexity with respect to (w.r.t.) the number of data streams and  require the global QSI of all data streams. Therefore, it is highly desirable and challenging to obtain a low-complexity distributed delay-aware  resource control design for practical  multi-cell MIMO networks.}

$\bullet$  {\bf Impact of Imperfect CSIT:} \textcolor{black}{The resource control designs for the partial cooperative MIMO schemes  in \cite{Marsch08onmulti-cell} and \cite{RandapartialcoMIMO:2011} assume perfect CSIT.}
In practice, the CSIT may  be imperfect due to \textcolor{black}{the duplexing delay  in TDD systems \cite{TDD:2006} or feedback latency and quantization in FDD systems \cite{FDD:2007}}.
With  imperfect CSIT,  there may be packet errors in each frame due to the
uncertainty of the mutual information  at the
transmitters. \textcolor{black}{Thus, it is important to take into account  the CSIT errors in the  resource optimization design. Yet, this requires explicit knowledge of the statistics of the CSIT errors and the bursty data arrivals. It is quite challenging to have a robust solution w.r.t. uncertainty in the modeling.}

\subsection{Main Contributions}

\textcolor{black}{In this paper, we consider a general multi-BS multi-antenna MIMO network, the system model of which is presented in  Section \ref{sec:model}. In Section \ref{sec:partMIMO},} we propose a novel Pco-MIMO PHY scheme,  \textcolor{black}{which allows flexible tradeoff between the MIMO cooperation level and the backhaul consumption}.  \textcolor{black}{In Section \ref{sec:delay-optimal-form},  we formulate the delay-optimal transmit
power and  rate allocation  (according to the imperfect 
CSIT and  the QSI)  under  average power and backhaul consumption constraints as an infinite horizon
average cost constrained partially observed Markov decision process
(CPOMDP) \cite[Chap. 4]{Bertsekas:2007}, \cite{Borkaractorcritic:2005}, \cite{Meuleau:1999}.}   By exploiting the special structure in our problem, we
derive an  equivalent \textcolor{black}{optimality equation} to solve the CPOMDP in Section \ref{sec:opt_solution}. 
\textcolor{black}{In Section \ref{sec:low-complexity-solution}, we propose a  distributed online power and rate control solution using a distributed stochastic learning algorithm and a distributed online stochastic partial gradient algorithm.
The proposed solution has very low complexity and requires only 
 local observations of the system states at the BSs. Hence, it can be implemented distributively and is very robust against model variations. We also establish technical conditions for the convergence and asymptotic optimality of the proposed solution.
We demonstrate the significant performance gain of the proposed scheme
compared with various baseline schemes using numerical simulations  in Section
\ref{sec:simu}. Finally, conclusions are provided in Section
\ref{sec:summary}.}

\subsection{Notation}

$\mathbb{C}$ and $\mathbb{R}$ denote the sets of complex and real
numbers, respectively. $\mathbb{E}[\cdot]$ and $\mathbf{1}(\cdot)$
denote expectation and the indicator function, respectively.
$|\cdot|$ denotes the absolute value function for a scalar or the
cardinality  for a set. $(\cdot)^T$ and $(\cdot)^\dag$
denote the transpose and Hermitian transpose, respectively.
$\lfloor\cdot\rfloor$ denotes the floor function. $[x]^+=\max(x,0)$
and $[x]_{\bigwedge N_Q}=\min(x,N_Q)$. $\text{diag}(a_1,\cdots,a_n)$
denotes a diagonal matrix with diagonal entries $a_1,\cdots,a_n$.
$\text{Null}(\mathbf{H})$ denotes the null space of matrix
$\mathbf{H}$. $[\mathbf{H}]_{(l,m)}$ denotes the $(l,m)$-th
element of $\mathbf{H}$.

\section{System Model}\label{sec:model}

\subsection{System Topology}

\textcolor{black}{We consider the downlink transmission of a MIMO network  with $K$ multi-antenna
BSs delivering $K$ delay-sensitive data flows to $K$ multi-antenna
MSs, where $K\geq 2$.   Fig. \ref{fig:system_model} illustrates an example with $K=2$.} Specifically, each BS is
equipped with $M$ antennas, while each MS is equipped with \textcolor{black}{$N$ antennas, where $\frac{M}{K}< N\leq M$} 
 \footnote{If \textcolor{black}{$\frac{M}{K}\geq N$}, we could simply  apply \textcolor{black}{coordinative MIMO} to achieve the maximum DoF of $KN$ without any
data cooperation. Hence, we do not consider  this trivial case.}.
Furthermore, we consider the e-NodeB architecture in LTE systems
\cite{LTE:2008}, where  \textcolor{black}{every} two BSs are connected by a bi-directional
backhaul link with limited capacity  in each direction. \textcolor{black}{Denote  $\mathcal K\triangleq\{1,2,\cdots, K\}$.  Each BS $k$ has one MS (indexed by $k$) in its cell and maintains a queue for the bursty
data flow towards MS $k$, where $k\in \mathcal K$. BS $k$ is the master BS for serving MS $k$, while the other BSs $n\in \mathcal K, n\neq k$ cooperatively serve MS $k$ according to the proposed Pco-MIMO scheme, which will be illustrated in Section \ref{sec:pco-mimo}.} The time
dimension is partitioned into scheduling frames indexed by $t$  with  frame duration
$\tau$ (seconds).

\begin{figure}[h]
 \begin{center}
\includegraphics[height=3.5cm, width=12cm]{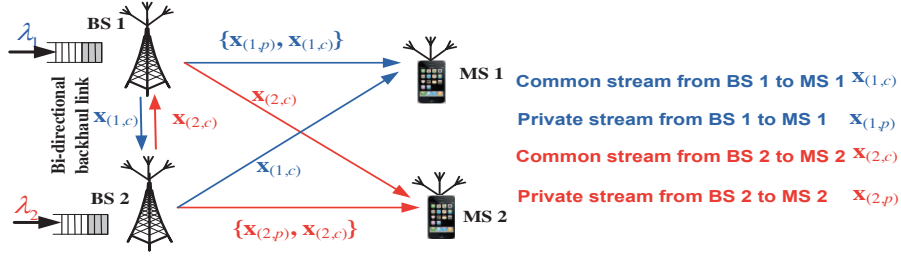}
 \end{center}
    \caption{
\scriptsize{System Model for $K=2$. There are two multi-antenna BSs connected by a
bi-directional backhaul link with limited backhaul capacity. The
flows into BS 1 and BS 2 are targeted to MS 1 and MS 2,
respectively. \textcolor{black}{The data streams to each MS are split into 
{\em common streams} and  {\em private streams}.  The common streams
are shared  through the backhaul and jointly transmitted by the two BSs. The private streams are
transmitted locally at  each BS to its target MS. }}}
    \label{fig:system_model}
\end{figure}

\subsection{MIMO Channel and Imperfect CSIT Models}
Let $\mathbf{s}_n\in\mathbb{C}^{M\times 1}$ be the complex
signal vector transmitted by BS  $n$ and
$\mathbf{z}_k\in\mathbb{C}^{N\times 1}$ be the circularly symmetric
Additive White Gaussian Noise (AWGN) vector at MS $k$, \textcolor{black}{where $n,k\in \mathcal K$}. We assume all
noise terms are i.i.d. zero mean complex Gaussian with
$\mathbb{E}[\mathbf{z}_k(\mathbf{z}_k)^{H}]=\mathbf{I}_{N}$.   The MIMO channel output
$\mathbf{y}_k\in\mathbb{C}^{N\times 1}$ \textcolor{black}{at MS $k\in \mathcal K$} is given by:
\begin{equation}\label{eq:system_model}
\mathbf{y}_k = \sum\nolimits_{n\in\mathcal{K}}
\mathbf{H}_{kn}\textcolor{black}{ \mathbf{s}_n} + \mathbf{z}_k,
\end{equation}
where $\mathbf{H}_{kn}\in\mathcal H^{N\times
M}$ is the MIMO complex fading coefficient (CSI) from BS $n$ to MS $k$ and $\mathcal H\subset \mathbb C$ denotes the finite discrete  CSI state space.
Let $\mathbf{H}_{kn}(t)$ denote the CSI from BS $n$ to MS $k$ at frame $t$. We have the following assumption on the CSI.
\begin{assumption}[CSI Model]\label{ass:channel_model}
The $(i,l)$-th element $[\mathbf{H}_{kn}]_{(i,l)}(t)$  is constant within each frame and  i.i.d. over scheduling frame $t$ following a general distribution  over $\mathcal H$. $\{[\mathbf{H}_{kn}]_{(i,l)}(t)\}$ is
 independent w.r.t. $\{k,n,i,l\}$. Assume $\text{rank}(\mathbf{H}_{kn}(t)) = \min(M,N)$ with probability 1. The BSs  do not have knowledge of the CSI distribution $\Pr\{\mathbf{H}_{kn}\}$.~\hfill\IEEEQED
\end{assumption}

We
assume that each MS has perfect knowledge of the CSI (perfect CSIR)  but
each BS only has imperfect knowledge of the CSI (imperfect CSIT)\footnote{\textcolor{black}{Most of  CSIR estimation errors come from the pilot/preamble estimation noise at the receiver. In practical systems, such as LTE and Wimax, the pilot power is designed to be sufficient for CSIR estimations at the receiver to support the demodulation of 64QAM. On the other hand, CSIT errors come from duplexing delay in TDD systems or feedback latency/quantization in FDD systems. Hence, they are usually much larger than CSIR errors. As a result, we consider perfect CSIR,  but imperfect CSIT  \cite{TDD:2006},\cite{FDD:2007}.}}.   \textcolor{black}{Thus, $\mathbf{H}_{kn}\in\mathcal H^{N\times
M}$ also denotes the (accurate) MIMO complex fading coefficient  from BS $n$ to MS $k$ estimated at MS $k$.} Let $\hat{\mathbf H}_{kn}\in \hat{\mathcal H}^{N\times
M}$ denote the imperfect MIMO complex fading coefficient from BS $n$ to MS $k$ estimated \textcolor{black}{(with error)} at BS $n$, where $\hat{\mathcal H}\subset \mathbb C$ denotes the  finite discrete CSIT state space. 
\begin{assumption}[Imperfect CSIT Model]\label{ass:csit_model}
The imperfect CSIT $\mathbf{\hat{H}}_{kn}$ is stochastically related to
the actual CSI $\mathbf{H}_{kn}$ via  the CSIT error kernel
$\Pr\{\mathbf{\hat{H}}_{kn}|\mathbf{H}_{kn}\}$. Assume
$\text{rank}(\hat{\mathbf{H}}_{kn}) = \min(M,N)$ with probability 1.
The BSs \textcolor{black}{do not have knowledge of the} CSIT error kernel.~\hfill\IEEEQED
\end{assumption}
The imperfect CSIT model in Assumption \ref{ass:csit_model} is very
general and  covers  most of the cases we encounter in practice, e.g., \textcolor{black}{the imperfect CSIT due to duplexing delay in TDD systems \cite{TDD:2006} or feedback latency and quantization in FDD systems \cite{FDD:2007}.} We denote \textcolor{black}{$\mathbf{H}=\{\mathbf{H}_{kn}:k,n\in \mathcal K\}$ and $\hat{\mathbf H}=\{\hat{\mathbf H}_{kn}: k,n\in \mathcal K\}$} as the {\em global CSI} and the {\em global CSIT}, respectively.

\subsection{Bursty Source Model and Queue Dynamics}

Let \textcolor{black}{$\mathbf{A}(t) = \{A_k(t):k\in\mathcal K\} $} be the random new arrivals
(number of bits) to the $K$ BSs at the end of
frame \textcolor{black}{$t$}.
\begin{assumption}[Bursty Source Model]
The arrival  $A_k(t)$ is i.i.d. over scheduling frame \textcolor{black}{$t$} and
\textcolor{black}{follows} a general distribution. The
average arrival rate is $\lambda_{k}=\mathbb{E}[A_k]$. Furthermore, the
random arrival process $\{A_k\textcolor{black}{(t)}\}$ is independent w.r.t. $k$.
~\hfill\IEEEQED
\end{assumption}

Let \textcolor{black}{$\mathbf{Q}(t) =\{Q_k(t): k\in \mathcal K\}\in\boldsymbol{\mathcal{Q}}$} denote the
global \textcolor{black}{QSI}  at the
beginning of frame $t$, where $\boldsymbol{\mathcal{Q}}$ is the state space for
the global QSI. $N_Q$ denotes the buffer size (number of bits).  
The
queue dynamics of \textcolor{black}{MS $k\in \mathcal K$} is given by:
\begin{equation}\label{eq:Q_org}\begin{array}{lll}\blue
Q_{k}(t+1) = \Big[\big[Q_k(t)-U_k(t)\big]^+ + A_k(t)
\Big]_{\bigwedge N_Q},
\end{array}
\end{equation}
where  $U_k(t)$  is the goodput (number
of bits successfully received) at  MS $k$ \textcolor{black}{at the end of frame $t$}. The expression of  $U_k(t)$ will be given in \eqref{eq:Ubit_k} .

\subsection{Power Consumption Model}

At frame $t$, the total power consumption $P_k(t)$ of BS $k\in \mathcal K$ is contributed by  the constant circuit power
of the RF chains (such as the mixers, synthesizers and
digital-to-analog converters)  $P_{cct}$ and the
transmit power of the power amplifier (PA) $P_k^{tx}(t)$ as follows:
\begin{equation}\label{eq:p_k}
P_k(t) = P_k^{tx}(t) + P_{\text{cct}}\mathbf{1}(P_k^{tx}(t)>0).
\end{equation}
$P_{\text{cct}}$ is constant irrespective of $P_k^{tx}(t)$. \textcolor{black}{The expression of $P_k^{tx}(t)$ will be given in \eqref{eq:p_tx}.}

\section{Partial Cooperative MIMO and DoF Analysis}\label{sec:partMIMO}

In this section, we first propose a novel {\em Partial Cooperative MIMO}
(Pco-MIMO) \textcolor{black}{PHY} scheme. \textcolor{black}{Then, we analyze the associated system DoF performance.}

\subsection{Partial Cooperative MIMO Scheme}\label{sec:pco-mimo}
\textcolor{black}{As illustrated in Fig. \ref{fig:system_model},} the data streams to each MS $k\in \mathcal K$ are split into $d_{(k,c)}\in \mathbb N$
{\em common streams} and $d_{(k,p)}\in \mathbb N$ {\em private streams}, \textcolor{black}{where $\mathbb N$ denotes the set of natural numbers}.  The common streams
are shared  through the backhaul and jointly transmitted by the $K$ BSs.  As a result, some backhaul
capacity is consumed. On the other hand, the private streams are
transmitted   locally at  each BS and no backhaul consumption is
incurred. \textcolor{black}{We adopt zero-forcing (ZF) precoder and decorrelator designs at the BSs and MSs, respectively.  To fully eliminate interference and recover $d_{(k,c)}$ common streams and $d_{(k,p)}$ private streams at each MS $k$ when the CSIT is perfect, we have some conditions on  $d_{(k,c)}$ and  $d_{(k,p)}$ for all $k\in \mathcal K$.  First, to transmit common streams using MIMO cooperation at the \textcolor{black}{$K$} BSs, we require $d_{(k,c)}\leq \min(KM,N)=N$. Next, $d_{K,M,N}\triangleq[M-(K-1)N]^+$ private streams to MS $k$ can be zero-forced  at  BS $k$ to eliminate interference at MS $n\in \mathcal K, n\neq k$. Thus, we can choose $d_{(k,p)}$ satisfying $ d_{K,M,N}\leq d_{(k,p)}\leq \min(M,N)=N$. (Note that this condition is valid as the assumption $\frac{M}{K}<N$ implies $d_{K,M,N}<N$.)  Finally, for each MS $k$ to  eliminate the residual interference from the remaining $d_{n,p}-d_{K,M,N}$ private streams to MS $n$ using ZF decorrelation and detect all the desired streams, we require  $d_{(k,p)}+d_{(k,c)}+\sum_{n\in \mathcal K,  n\neq k}\left(d_{(n,p)}-d_{K,M,N}\right) \leq
N$. Therefore, we have the following feasibility constraints on $\{d_{(k,c)},d_{(k,p)}:k\in \mathcal K\}$: \footnote{\textcolor{black}{Note that $d_{(k,p)}\geq  d_{K,M,N}$ and $d_{(k,p)}+d_{(k,c)}+\sum_{n\in \mathcal K, n\neq k}\left(d_{(n,p)}-d_{K,M,N}\right) \leq
N$ imply  $d_{(k,c)}\leq N$ and $d_{(k,p)}\leq N$.}}
\begin{align}\label{eq:d_k}
d_{(k,p)}\geq  d_{K,M,N}, \ d_{(k,c)}+\sum_{n\in \mathcal K}d_{(n,p)} \leq
N+(K-1)d_{K,M,N}, \ \forall k \in \mathcal K.
\end{align}
Note that \eqref{eq:d_k} implies 
\begin{align}
d_{(k,c)}\leq N- d_{K,M,N}, \ \forall k\in \mathcal K.\label{eqn:d_kc}
\end{align}}

Let $x_{(k,c)}^i$ denote the $i$-th common stream transmitted
from the \textcolor{black}{$K$} BSs to MS $k$, where $i\in\{1,\cdots,d_{(k,c)}\}$. Let
$x_{(k,p)}^i$ denote the $i$-th private stream transmitted from
BS $k$ to MS $k$, where $i\in\{1,\cdots,d_{(k,p)}\}$. Furthermore,
let $P_{(k,c)}^i$ and $P_{(k,p)}^i$ denote the \textcolor{black}{transmit power for}
 $x_{(k,c)}^i$ and $x_{(k,p)}^i$,
respectively. Let $\mathbf{b}_{(k,c)}^i\in\mathbb{C}^{\textcolor{black}{K}M\times1}$ 
denote the joint precoder for $x_{(k,c)}^i$ at the \textcolor{black}{$K$} BSs, where $|\mathbf{b}_{(k,c)}^i|=1$.  
Let $\mathbf{b}_{(k,p)}^i\in\mathbb{C}^{M\times1}$ denotes the precoder for
$x_{(k,p)}^i$  at BS $k$, where $|\mathbf{b}_{(k,p)}^i|=1$. \textcolor{black}{Let $\mathbf s_k\in\mathbb{C}^{M\times1}$ be the complex signal vector transmitted by BS $k$.} Then, the complex signal vector transmitted by
the  \textcolor{black}{$K$} BSs is given by: 
\textcolor{black}{
\begin{align}
\begin{bmatrix}
\mathbf s_1\\ \vdots  \\ \mathbf s_K
\end{bmatrix}=\sum_{k\in \mathcal K}\left(\mathbf{B}_{(k,c)}\Sigma_{(k,c)}\mathbf{x}_{(k,c)}\right)+
\begin{bmatrix}
\mathbf{B}_{(1,p)}\Sigma_{(1,p)}\mathbf{x}_{(1,p)}\\ \vdots  \\ \mathbf{B}_{(K,p)}\Sigma_{(K,p)}\mathbf{x}_{(K,p)}
\end{bmatrix},\label{eqn:s-x}
\end{align}}
%where \textcolor{black}{for all $k\in \mathcal K$, we have}
$$\text{where }\mathbf{B}_{(k,c)}=[\mathbf{b}_{(k,c)}^1,\cdots,\mathbf{b}_{(k,c)}^{d_{(k,c)}}]\in\mathbb{C}^{\textcolor{black}{K}M\times{d_{(k,c)}}}, \quad \mathbf{B}_{(k,p)}=[\mathbf{b}_{(k,p)}^1,\cdots,\mathbf{b}_{(k,p)}^{d_{(k,p)}}]\in\mathbb{C}^{M\times{d_{(k,p)}}},$$
$$\Sigma_{(k,c)}=\text{diag}(\sqrt{P_{(k,c)}^1},\cdots,
\sqrt{P_{(k,c)}^{d_{(k,c)}}}),\quad \Sigma_{(k,p)}=\text{diag}(\sqrt{P_{(k,p)}^1},\cdots,
\sqrt{P_{(k,p)}^{d_{(k,p)}}})$$
$$\mathbf{x}_{(k,c)}=[x_{(k,c)}^1,\cdots,x_{(k,c)}^{d_{(k,c)}}]^T,\quad \mathbf{x}_{(k,p)}=[x_{(k,p)}^1,\cdots,x_{(k,p)}^{d_{(k,p)}}]^T.$$
Substituting \eqref{eqn:s-x} into \eqref{eq:system_model}, 
the received signal \textcolor{black}{$\mathbf{y}_k\in \mathbb C^{N\times1}$} at each MS $k\in \mathcal K$ is given by:
\begin{equation}\label{eq:y_k}\begin{array}{lll}
&\mathbf{y}_k=
\underbrace{
\mathbf{H}_{k}\mathbf{B}_{(k,c)}\Sigma_{(k,c)}\mathbf{x}_{(k,c)}+\mathbf{H}_{kk}\mathbf{B}_{(k,p)}\Sigma_{(k,p)}\mathbf{x}_{(k,p)}}_{\text{desired
signals for MS $k$}}  +\underbrace{\textcolor{black}{\sum_{n\in \mathcal K, n\neq k}}
\mathbf{H}_{k}\mathbf{B}_{(n,c)}\Sigma_{(n,c)}\mathbf{x}_{(n,c)}}_{\text{interference
from  common streams \textcolor{black}{to MS} $n$}}
\\
&+\underbrace{\textcolor{black}{\sum_{n\in \mathcal K, n\neq k}}\mathbf{H}_{kn}\mathbf{B}_{(n,p)}^{(1)}\Sigma_{(n,p)}^{(1)}\mathbf{x}_{(n,p)}^{(1)}}_{\text{interference
from  first $\textcolor{black}{d_{K,M,N}}$ private streams \textcolor{black}{to MS} $n$}}+\underbrace{\textcolor{black}{\sum_{n\in \mathcal K, n\neq k}}\mathbf{H}_{kn}\mathbf{B}_{(n,p)}^{(2)}\Sigma_{(n,p)}^{(2)}\mathbf{x}_{(n,p)}^{(2)}}_{\text{interference
from   last $d_{(n,p)}-\textcolor{black}{d_{K,M,N}}$ private streams \textcolor{black}{to MS} $n$}}\\
&+ \mathbf{z}_k, 
\end{array}
\end{equation}
where \textcolor{black}{$\mathbf{H}_{k}=\left[\mathbf{H}_{k1},\cdots,
\mathbf{H}_{kK}\right]\in\mathbb{C}^{N\times KM}$} and \textcolor{black}{
$$\mathbf{B}_{(n,p)}^{(1)}=[\mathbf{b}_{(n,p)}^1,\cdots,\mathbf{b}_{(n,p)}^{d_{K,M,N}}]\in\mathbb{C}^{M\times d_{K,M,N}}, \ \mathbf{B}_{(n,p)}^{(2)}=[\mathbf{b}_{(n,p)}^{d_{K,M,N}+1},\cdots,\mathbf{b}_{(n,p)}^{d_{(n,p)}}] \in\mathbb{C}^{M\times \left(d_{(k,p)}-d_{K,M,N}\right)},$$
$$\Sigma_{(k,p)}^{(1)}=\text{diag}(\sqrt{P_{(k,p)}^1},\cdots,
\sqrt{P_{(k,p)}^{d_{K,M,N}})},\ \Sigma_{(k,p)}^{(2)}=\text{diag}(\sqrt{P_{(k,p)}^{d_{K,M,N}+1}},\cdots,
\sqrt{P_{(k,p)}^{d_{(k,p)}})},$$
$$\mathbf{x}_{(n,p)}^{(1)}=[x_{(n,p)}^1,\cdots,x_{(n,p)}^{d_{K,M,N}}]^T, \ \mathbf{x}_{(n,p)}^{(2)}=[x_{(n,p)}^{d_{K,M,N}+1},\cdots,x_{(n,p)}^{d_{(n,p)}}]^T.$$ Note that 
$\mathbf{B}_{(n,p)}=[\mathbf{B}_{(n,p)}^{(1)},\mathbf{B}_{(n,p)}^{(2)}]$,
$\mathbf{\Sigma}_{(n,p)}=\left[\mathbf{\Sigma}_{(n,p)}^{(1)},0\atop
0,\mathbf{\Sigma}_{(n,p)}^{(2)}\right]$, and
$\mathbf{x}_{(n,p)}=[(\mathbf{x}_{(n,p)}^{(1)})^T,(\mathbf{x}_{(n,p)}^{(2)})^T]^T$.}\footnote{\textcolor{black}{Without loss of generality, we present the precoder design for the case where $M>N$. When $M=N$ (i.e., $d_{K,M,N}=0$), we can directly adopt the precoder design for the last  $d_{(n,p)}-d_{K,M,N}$ private streams in the case where $M>N$.}}
\textcolor{black}{Let $\mathbf{u}_{(k,p)}^i\in\mathbb{C}^{N\times1}$ and
$\mathbf{u}_{(k,c)}^i\in\mathbb{C}^{N\times1}$ be the decorrelators
for $x_{(k,p)}^i$ and $x_{(k,c)}^i$, respectively. After decorrelation, the recovered signals $r_{(k,c)}^i$ and $r_{(k,p)}^i$ for   $x_{(k,c)}^i$ and  $x_{(k,p)}^i$ at MS $k$ are:
\begin{align}
r_{(k,c)}^i=(\mathbf{u}_{(k,c)}^i)^{\dag}\mathbf{y}_k, \quad r_{(k,p)}^i=(\mathbf{u}_{(k,p)}^i)^{\dag}\mathbf{y}_k.\label{eqn:recover-r}
\end{align}}

In the following, we present the \textcolor{black}{precoder and
decorrelator} designs for the Pco-MIMO under perfect CSIT, \textcolor{black}{i.e., $\hat{\mathbf H}=\mathbf H$}. \textcolor{black}{Note that when the CSIT is imperfect, the precoders at the BSs are designed according to imperfect CSIT $\hat{\mathbf H}$ instead of $\mathbf H$. Thus, there will be residual
interference  from the common streams and the first $d_{K,M,N}$ private streams for other MSs even after the imperfect ZF
precoding.} The impact of the 
imperfect CSIT will be discussed in  \textcolor{black}{Section}
\ref{sec:imp_CSI}.

\subsubsection{Precoder Design for Pco-MIMO}\textcolor{black}{First, we design the precoder  $\mathbf{B}_{(n,c)}$ at the $K$ BSs for the common streams $\mathbf{x}_{(n,c)}$, where $ n\in \mathcal K$.}
To eliminate the interference term 
$\mathbf{H}_{k}\mathbf{B}_{(n,c)}\Sigma_{(n,c)}\mathbf{x}_{(n,c)}$ 
in \eqref{eq:y_k} experienced by MS $k\neq n$,  the joint ZF precoder   $\mathbf{B}_{(n,c)}\in\mathbb{C}^{\textcolor{black}{K}M\times{d_{(n,c)}}}$ at the \textcolor{black}{$K$} BSs is given
by \cite{weiyu:2010}:
\begin{equation}\label{eq:v_kc}\blue
\mathbf{B}_{(n,c)}=\widetilde{\mathbf{B}}_{\textcolor{black}{n}}\mathbf{F}_{(n,c)}, 
\end{equation} 
where the columns of
$\widetilde{\mathbf{B}}_n\in\mathbb{C}^{\textcolor{black}{K}M\times (\textcolor{black}{K}M-N)}$
form the orthonormal basis of $$\text{Null}(\textcolor{black}{\left[\mathbf{H}_{1}^T,\cdots,\mathbf{H}_{n-1}^T, \mathbf{H}_{n+1}^T,\cdots, \mathbf{H}_{K}^T\right]^T }).$$  
$\mathbf{F}_{(n,c)}=[\mathbf{\overline{b}}_{(n,c)}^{1},\cdots,\mathbf{\overline{b}}_{(n,c)}^{d_{(n,c)}}]\in\mathbb{C}^{(\textcolor{black}{K}M-N)\times d_{(\textcolor{black}{n},c)}}$ is
designed by performing SVD on
$\mathbf{H}_{n}\widetilde{\mathbf{B}}_{\textcolor{black}{n}}$
\cite{Spencer:Tsig2004}, i.e., $\mathbf{H}_{n}\widetilde{\mathbf{B}}_{\textcolor{black}{n}}
=\mathbf{\overline{U}}_{(n,c)}\overline{\Sigma}_{(n,c)}
[\mathbf{\overline{b}}_{(n,c)}^{1},\cdots,\mathbf{\overline{b}}_{(n,c)}^{\textcolor{black}{K}M-N}]^{\dag}$, 
%\begin{equation}\begin{array}{lll}\label{eq:c_common}\blue
%\mathbf{H}_{n}\widetilde{\mathbf{B}}_{\textcolor{black}{n}}
%=\mathbf{\overline{U}}_{(n,c)}\overline{\Sigma}_{(n,c)}
%[\mathbf{\overline{b}}_{(n,c)}^{1},\cdots,\mathbf{\overline{b}}_{(n,c)}^{\textcolor{black}{K}M-N}]^{\dag},
%\end{array}
%\end{equation}
where the eigenvalues in $\overline{\Sigma}_{(n,c)}$ are sorted in 
decreasing order along the diagonal. Therefore, the common streams $\mathbf{x}_{(n,c)}$
are transmitted on the dominant eigenmodes for the desired MS $n$.

\textcolor{black}{Next, we design the precoders $\mathbf{B}_{(n,p)}^{(1)}$ and 
$\mathbf{B}_{(n,p)}^{(2)} $ at BS $n$ for the first $\textcolor{black}{d_{K,M,N}}$ private streams $\mathbf{x}_{(n,p)}^{(1)}$ and the last $d_{(n,p)}-\textcolor{black}{d_{K,M,N}}$ private streams $\mathbf{x}_{(n,p)}^{(2)}$, respectively. To eliminate the interference term
$\mathbf{H}_{kn}\mathbf{B}_{(n,p)}^{(1)}\Sigma_{(n,p)}^{(1)}\mathbf{x}_{(n,p)}^{(1)}$  in \eqref{eq:y_k} experienced by MS $k$, the ZF precoder  $\mathbf{B}_{(n,p)}^{(1)}\in\mathbb{C}^{M\times \textcolor{black}{d_{K,M,N}}}$ at BS $n$ is given by:}
\begin{equation}\label{eq:v_kz}
\mathbf{B}_{(n,p)}^{(1)}=\widetilde{\mathbf{B}}_{kn}\mathbf{F}_{(n,\textcolor{black}{p})},
\end{equation}
where the columns of
$\widetilde{\mathbf{B}}_{kn}\in\mathbb{C}^{M\times \textcolor{black}{d_{K,M,N}}}$
form the orthonormal basis of $\text{Null}(\mathbf{H}_{kn})$ and
$\mathbf{F}_{(n,\textcolor{black}{p})}\in\mathbb{C}^{\textcolor{black}{d_{K,M,N}}\times \textcolor{black}{d_{K,M,N}}}$ is designed by
performing SVD on
$\mathbf{H}_{nn}\widetilde{\mathbf{B}}_{kn}$, i.e.,
$\mathbf{H}_{nn}\widetilde{\mathbf{B}}_{kn}
=\mathbf{\overline{U}}_{(n,\textcolor{black}{p})}\overline{\Sigma}_{(n,z)}
(\mathbf{F}_{(n,\textcolor{black}{p})})^{\dag}.$ The eigenvalues in
$\overline{\Sigma}_{(n,\textcolor{black}{p})}$ are sorted in decreasing order along
the diagonal. \textcolor{black}{The precoder $\mathbf{B}_{(n,p)}^{(2)} \in\mathbb{C}^{M\times \left(d_{(n,p)}-\textcolor{black}{d_{K,M,N}}\right)}$ at BS $n$ 
is chosen to maximize the SNR of for the remaining $d_{(n,p)}-\textcolor{black}{d_{K,M,N}}$ private streams
$\mathbf{x}_{(n,p)}^{(2)}$  of the  MS $n$, i.e.,}
\begin{equation}\label{eq:v_ks}
\mathbf{B}_{(n,p)}^{(2)} =
[\mathbf{\overline{b}}_{nn}^1,\cdots,\mathbf{\overline{b}}_{nn}^{d_{(n,p)}-\textcolor{black}{d_{K,M,N}}}]
\end{equation} is obtained by
performing SVD on $\mathbf{H}_{nn}$, i.e.,
$\mathbf{H}_{nn}=\mathbf{\widetilde{U}}_{nn}\widetilde{\Sigma}_{nn}
[\mathbf{\overline{b}}_{nn}^1,\cdots,\mathbf{\overline{b}}_{nn}^{M}]^{\dag}.$
The eigenvalues in $\widetilde{\Sigma}_{nn}$ are sorted in 
decreasing order along the diagonal.

\subsubsection{Decorrelator Design for Pco-MIMO}
\textcolor{black}{First, we design the decorrelator $\mathbf{u}_{(k,p)}^i$ at MS $k$ for the $i$-th desired common stream $x_{(k,c)}^i$. To  eliminate the residual interference from the remaining $d_{n,p}-\textcolor{black}{d_{K,M,N}}$ private streams to MS $n\in \mathcal K, n\neq k$ and detect   $x_{(k,c)}^i$,  the
decorrelator $\mathbf{u}_{(k,c)}^i\in\mathbb{C}^{N\times1}$ at MS $k$ is given
by:}
\begin{equation}\label{eq:u_k} \blue
\mathbf{u}_{(k,c)}^i=\mathbf{\widetilde{U}}_{(k,c)}
(\mathbf{\widetilde{U}}_{(k,c)})^{\dag}\mathbf{H}_{kk}\mathbf{b}_{(k,c)}^i/\psi,
\end{equation}
where the columns of
$\mathbf{\widetilde{U}}_{(k,c)}\in\mathbb{C}^{N\times
(M-d_{(k,c)}-\textcolor{black}{\sum_{n\in \mathcal K}}d_{(n,p)}+1)}$ \textcolor{black}{form} the orthonormal basis of
$\text{Null}(\mathbf{\widetilde{H}}_{(k,c)})$  and 
\textcolor{black}{$\mathbf{\widetilde{H}}_{(k,c)}=\left[\mathbf{H}_{k}\mathbf{B}_{(k,c),i}^{(1)},
\mathbf{H}_{kk}\mathbf{B}_{(k,p)}^{(1)},
\mathbf{H}_{k1}\mathbf{B}_{(1,p)}^{(2)},\cdots, \mathbf{H}_{kK}\mathbf{B}_{(K,p)}^{(2)}
\right].$ $\mathbf{B}_{(k,c),i}^{(1)}$ is $\mathbf{B}_{(k,c)}^{(1)}$ without the $i$-th column, i.e., $\mathbf{B}_{(k,c),i}^{(1)}=\left[\mathbf{b}_{(k,c)}^{1},\cdots,
\mathbf{b}_{(k,c)}^{i-1},\mathbf{b}_{(k,c)}^{i+1},\cdots,
\mathbf{b}_{(k,c)}^{d_{(k,c)}}\right]$.} Here, $\psi=||\mathbf{\widetilde{U}}_{(k,c)}
(\mathbf{\widetilde{U}}_{(k,c)})^{\dag}\mathbf{H}_{kk}\mathbf{b}_{(k,c)}^i||$ is to  normalize $\mathbf{u}_{(k,c)}^i$, i.e., $ ||\mathbf{u}_{(k,c)}^i||=1 $. By using the \textcolor{black}{decorrelator} in \eqref{eq:u_k}, the interference is nulled due to the fact that  $ (\mathbf{\widetilde{U}}_{(k,c)})^{\dag}\mathbf{\widetilde{H}}_{(k,c)}=0 $. The equivalent channel for $x_{(k,c)}^i$ is  $ (\mathbf{u}_{(k,c)}^i)^{\dag}\mathbf{H}_{kk}\mathbf{b}_{(k,c)}^i\sqrt{P_{(k,c)}^i}=\frac{1}{\psi}||(\mathbf{\widetilde{U}}_{(k,c)})^{\dag} \mathbf{H}_{kk}\mathbf{b}_{(k,c)}^i||^2\sqrt{P_{(k,c)}^i} .$  The decorrelator
$\mathbf{u}_{(k,p)}^i\in\mathbb{C}^{N\times1}$ at MS $k$  can be designed in a similar way to $\mathbf{u}_{(k,c)}^i$. 
 
 \textcolor{black}{
\begin{Rem}[Flexible Adjustment of Cooperation Level in Pco-MIMO] The Pco-MIMO design is flexible to adjust
the cooperation level  between the coordination and cooperation modes. It also  incorporates the full coordinative MIMO (by choosing $d_{(k,c)}=0$ for all $k\in \mathcal K$), the Uco-MIMO for $K=2$ (by choosing $d_{(k,c)}=0$ and $d_{(n,c)}>0$, where $k,n\in \{1,2\}, n\neq k$) and the full cooperative MIMO (by choosing $d_{(k,p)}=0$ for all $k\in \mathcal K$) as special cases.
%Specifically, by choosing $d_{(k,c)}=0$ for all $k\in \mathcal K$, the ZF precoder and decorrelator design for the private streams of each MS  in the proposed
%Pco-MIMO scheme is the same as that in the coordinative MIMO scheme. By choosing $d_{(k,c)}=0$ and $d_{(n,c)}>0$, where $n\neq k$,  the ZF precoder and decorrelator design for the private streams of each MS  and the ZF precoder and decorrelator design for the common streams of MS $n$ are the same as those in the  Uco-MIMO scheme for $K=2$. By choosing $d_{(k,p)}=0$ for all $k\in \mathcal K$, the ZF precoder and decorrelator design for the common streams of each MS  in the proposed
%Pco-MIMO scheme is the same as that in the full cooperative MIMO scheme.
~\hfill\IEEEQED\label{Rem:BL}
 \end{Rem}}

\subsection{DoF Performance under Perfect CSIT}\label{sec:sub_DOF}
\textcolor{black}{We derive
the system DoF of the Pco-MIMO scheme under the perfect CSIT.}
\textcolor{black}{
\begin{Thm}[DoF Performance of Pco-MIMO]\label{thm:DoF}
Suppose the  backhaul consumption satisfies $R_{(k,c)} = d_{(k,c)} \log_2
(\text{SNR})$ for all $k\in \mathcal K$. The
system DoF under the Pco-MIMO scheme $\text{DoF}(\text{Pco-MIMO})=\sum_{k\in\mathcal K} (d_{(k,p)}+d_{(k,c)})$  satisfies
\begin{equation}
\text{DoF}(\text{Pco-MIMO})\leq 
\text{DoF}_{\max}(\text{Pco-MIMO})\triangleq N+(K-1)d_{K,M,N}+\min_{k\in \mathcal K}\sum_{n\in \mathcal K, n\neq k}d_{(n,c)},\label{eqn:DoF-ineq-thm}
\end{equation}
where the maximum system DoF  $\text{DoF}_{\max}(\text{Pco-MIMO})$ can be achieved when
\begin{align}
\ d_{(k,c)}+\sum_{n\in \mathcal K}d_{(n,p)} =
N+(K-1)d_{K,M,N}, \quad \forall k\in\arg\min_{k\in \mathcal K}\sum_{n\in \mathcal K, n\neq k}d_{(n,c)}.\label{eqn:DoF-eq-thm}
\end{align}
Furthermore, 
$\text{DoF}_{\max}(\text{Pco-MIMO})\leq KN$, where the equality holds when 
\begin{align}
d_{(n,c)}= N- d_{K,M,N}, \quad \forall n \in \mathcal K, n\neq k, k\in\arg\min_{k\in \mathcal K}\sum_{n\in \mathcal K, n\neq k}d_{(n,c)}.
\label{eqn:DoF-eq-max-thm}
\end{align}~\hfill\IEEEQED
\end{Thm}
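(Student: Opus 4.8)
The plan is to establish the three claims in sequence: (i) the upper bound \eqref{eqn:DoF-ineq-thm}, (ii) achievability of $\text{DoF}_{\max}(\text{Pco-MIMO})$ under condition \eqref{eqn:DoF-eq-thm}, and (iii) the bound $\text{DoF}_{\max}(\text{Pco-MIMO})\le KN$ with equality under \eqref{eqn:DoF-eq-max-thm}. The starting point is the feasibility constraint \eqref{eq:d_k}, which must hold for \emph{every} $k\in\mathcal K$. First I would sum the inequality $d_{(k,c)}+\sum_{n\in\mathcal K}d_{(n,p)}\le N+(K-1)d_{K,M,N}$ over the structure of the problem: since the common-stream count $d_{(k,c)}$ appears individually for each $k$ but $\sum_n d_{(n,p)}$ is common to all constraints, the binding constraint is the one with the \emph{largest} $d_{(k,c)}$, equivalently $\sum_{n\neq k}d_{(n,c)}$ appears when I rewrite things in terms of the total. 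Concretely, I would write $\text{DoF}(\text{Pco-MIMO})=\sum_{k}d_{(k,p)}+\sum_{k}d_{(k,c)}$ and, for any fixed index $k^\star\in\arg\min_{k}\sum_{n\neq k}d_{(n,c)}$, use $d_{(k^\star,c)}+\sum_{n}d_{(n,p)}\le N+(K-1)d_{K,M,N}$ together with $\sum_{k}d_{(k,c)} = d_{(k^\star,c)} + \sum_{n\neq k^\star}d_{(n,c)}$ to get
\begin{align}
\text{DoF}(\text{Pco-MIMO}) &= \Big(d_{(k^\star,c)}+\sum_{n\in\mathcal K}d_{(n,p)}\Big) + \sum_{n\in\mathcal K,\,n\neq k^\star}d_{(n,c)} \notag\\
&\le N+(K-1)d_{K,M,N}+\min_{k\in\mathcal K}\sum_{n\in\mathcal K,\,n\neq k}d_{(n,c)},\notag
\end{align}
which is exactly \eqref{eqn:DoF-ineq-thm}.

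For achievability (ii), the key observation is that the ZF precoder/decorrelator construction in Section \ref{sec:pco-mimo} already recovers all $d_{(k,c)}$ common and $d_{(k,p)}$ private streams for each MS whenever \eqref{eq:d_k} holds, and each recovered stream carries one DoF because it is transmitted on a nonzero eigenmode (SNR scales linearly). Hence any feasible choice of $\{d_{(k,c)},d_{(k,p)}\}$ is achievable, and the upper bound in (i) is met with equality precisely when the chain of inequalities above is tight, i.e., when $d_{(k^\star,c)}+\sum_n d_{(n,p)}=N+(K-1)d_{K,M,N}$ for the minimizing $k^\star$; this is condition \eqref{eqn:DoF-eq-thm}. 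I would note that such a choice is always realizable within \eqref{eq:d_k} — for instance take all $d_{(n,p)}=d_{K,M,N}$ and $d_{(k^\star,c)}=N-d_{K,M,N}$, which satisfies both the lower bound $d_{(n,p)}\ge d_{K,M,N}$ and \eqref{eqn:d_kc}.

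For (iii), I would bound each term in $\text{DoF}_{\max}(\text{Pco-MIMO})=N+(K-1)d_{K,M,N}+\min_{k}\sum_{n\neq k}d_{(n,c)}$. Using \eqref{eqn:d_kc}, $d_{(n,c)}\le N-d_{K,M,N}$ for every $n$, so $\min_{k}\sum_{n\neq k}d_{(n,c)}\le (K-1)(N-d_{K,M,N})$, giving
\begin{equation}
\text{DoF}_{\max}(\text{Pco-MIMO})\le N+(K-1)d_{K,M,N}+(K-1)(N-d_{K,M,N}) = KN.\notag
\end{equation}
Equality requires $\sum_{n\neq k}d_{(n,c)}=(K-1)(N-d_{K,M,N})$ for the minimizing $k$, which forces $d_{(n,c)}=N-d_{K,M,N}$ for all $n\neq k$, i.e., \eqref{eqn:DoF-eq-max-thm}. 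The main obstacle I anticipate is purely bookkeeping: correctly identifying which of the $K$ per-MS feasibility constraints is active (the one with the largest $d_{(k,c)}$, hence the \emph{complement} of the minimizing index set), and verifying that the candidate achieving equality does not violate the private-stream lower bound $d_{(k,p)}\ge d_{K,M,N}$ or the implicit caps $d_{(k,c)}\le N$, $d_{(k,p)}\le N$ — all of which follow from the assumption $\frac{M}{K}<N\le M$ as already observed after \eqref{eq:d_k}. No deep argument is needed beyond linear algebra (dimension counting for the null spaces) that was already carried out in establishing \eqref{eq:d_k}.
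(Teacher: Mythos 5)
Your proposal is correct and follows essentially the same route as the paper's proof in Appendix~A: both rewrite $\text{DoF}(\text{Pco-MIMO})$ as $\bigl(d_{(k,c)}+\sum_{n}d_{(n,p)}\bigr)+\sum_{n\neq k}d_{(n,c)}$, apply the second feasibility constraint in \eqref{eq:d_k} and minimize over $k$ to get \eqref{eqn:DoF-ineq-thm}--\eqref{eqn:DoF-eq-thm}, and then invoke \eqref{eqn:d_kc} to bound $\text{DoF}_{\max}$ by $KN$. Your added remarks on realizability of the tight choice and the interference-free achievability of each stream are consistent with (and slightly more explicit than) the paper's argument.
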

}
\begin{proof}
Please refer to  Appendix A.
\end{proof}

\textcolor{black}{From Theorem \ref{thm:DoF}, we can see that  the proposed
Pco-MIMO scheme allows a flexible tradeoff between the achievable
DoF and the backhaul consumption.}

%
%\begin{table}
%\begin{center}
%\begin{tabular}{|l|l|l|l|l|}
%\hline
% Scheme & Coordinative MIMO & Uco-MIMO & Pco-MIMO & Cooperative MIMO \\
%  \hline
%DoF ($K=2$)& $M$ & $M$ & $
%M+\min(d_{(1,c)},d_{(2,c)})$ & $2N$\\
%&  & & $\in\{M,\cdots, 2N\}$ & \\
%  \hline
%  DoF ($K>2$)& $N+(K-1)d_{K,M,N}$ & \quad \  not & $
%N+(K-1)d_{K,M,N}+\min\limits_{k\in \mathcal K}\sum\limits_{\substack{n\in \mathcal K\\ n\neq k}}d_{(n,c)}$ & $KN$\\
% &  &    applicable& $\in \{N+(K-1)d_{K,M,N},\cdots, KN\}$ & \\
%\hline
%\end{tabular}
%  \caption{\textcolor{black}{Illustration of the system DoFs of different schemes.}}
%  \label{Tab:DoF}
%\end{center}
%\end{table}

\begin{table}
\begin{center}
\begin{tabular}{|l|l|l|l|l|}
\hline
 Scheme & Coordinative MIMO & Uco-MIMO & Pco-MIMO & Cooperative MIMO \\
  \hline
DoF ($K=2$)& $M$ & $M$ & $
\{M,\cdots, 2N\}$ & $2N$\\
  \hline
  DoF ($K>2$)& $N+(K-1)d_{K,M,N}$ &  not applicable & $
\{N+(K-1)d_{K,M,N},\cdots, KN\}$ & $KN$\\
\hline
\end{tabular}
  \caption{\textcolor{black}{Illustration of the system DoFs of different schemes at  $d_{(k,c)}=d$ for all $k\in\mathcal K$.}}
  \label{Tab:DoF}
\end{center}
\end{table}

\textcolor{black}{
\begin{Rem}[Comparisions of System DoFs]
Table \ref{Tab:DoF} compares the system DoFs of different schemes. (Note that $d_{K,M,N}=M-N$ when $K=2$.)  By choosing $d_{(k,c)}=d\leq  N- d_{K,M,N}$ for all $k\in \mathcal K$, the proposed
Pco-MIMO scheme can
achieve $\text{DoF} (\text{Pco-MIMO})= N+(K-1)d_{K,M,N}+(K-1)d$, i.e., an increase of $(K-1)d$ compared with 
the coordinative MIMO or the uni-directional cooperative MIMO (Uco-MIMO) \cite{MIMO:int:jafar,MIMO:int:huang} (applicable for  $K=2$ only).  This increase is achieved at the cost of the backhaul
consumption of $(K-1)d\log_2 (\text{SNR})$ (for each BS).   By choosing $d_{(k,c)}=d= N- d_{K,M,N}$ for all $k\in \mathcal K$, the proposed
Pco-MIMO scheme can
achieve $\text{DoF} (\text{Pco-MIMO})= KN$, which is the same as the 
full cooperative MIMO, but save backhaul consumption by $(K-1)(N-d)\log_2 (\text{SNR})=(K-1)d_{K,M,N}\log_2 (\text{SNR})\geq0$ (for each BS) compared with the full cooperative MIMO.~\hfill\IEEEQED
\end{Rem}}

\subsection{Mutual Information, System Goodput  \textcolor{black}{under Imperfect CSIT}}
\label{sec:imp_CSI}

When the CSIT is imperfect,
there will be \textcolor{black}{residual} interference at \textcolor{black}{each MS} due to imperfect
\textcolor{black}{ZF} precoding. Given the decorrelator
$\mathbf{u}_{(k,c)}^i$ of the $i$-th common
stream, the \textcolor{black}{recovered} signal of $x_{(k,c)}^i$ in \eqref{eqn:recover-r} at MS $k$ is given by:
\begin{equation}\label{eq:y_k_u}\begin{array}{lll}
\textcolor{black}{r_{(k,c)}^i}=(\mathbf{u}_{(k,c)}^i)^{\dag}\mathbf{y}_k=
\underbrace{(\mathbf{u}_{(k,c)}^i)^{\dag}\mathbf{H}_{kk}\mathbf{b}_{(k,c)}^i\sqrt{P_{(k,c)}^i}x_{(k,c)}^i}_{\text{desired
signal for MS $k$} }+\underbrace{ (\mathbf{u}_{(k,c)}^i)^{\dag}\mathbf{I}_{k}
}_{\text{residual interference at MS $k$}}+
(\mathbf{u}_{(k,c)}^i)^{\dag}\mathbf{z}_k,
\end{array}\nonumber
\end{equation}
where
$\mathbf{I}_{k}=\textcolor{black}{\sum_{n\in \mathcal K, n\neq k}}\left(\mathbf{H}_{kn}\mathbf{B}_{(n,p)}^{(1)}\Sigma_{(n,p)}^{(1)}\mathbf{x}_{(n,p)}^{(1)}+
\mathbf{H}_{k}\mathbf{B}_{(n,c)}\Sigma_{(n,c)}\mathbf{x}_{(n,c)}\right)$. Assuming {\red Gaussian}
inputs for the system {\red  and treating interference as noise}, the  mutual information
(bit/s/Hz) of the $i$-th common stream at MS $k$  is given by:
\begin{equation}\label{eq:capacity_c}
C_{(k,c)}^i = \log_{2}\left( 1+\sigma_{(k,c)}^i
P_{(k,c)}^i/(1+I_{(k,c)}^i) \right), \forall i=1,\cdots,d_{(k,c)},
\end{equation}
where
$\sigma_{(1,c)}^i=|(\mathbf{u}_{(k,c)}^i)^\dag\mathbf{H}_{kk}\mathbf{b}_{(k,c)}^i|^2$
and $$I_{(k,c)}^i=\textcolor{black}{\sum_{n\in \mathcal K, n\neq k}}\left(\sum_{i=1}^{\textcolor{black}{d_{K,M,N}}}P_{(n,p)}^i
|(\mathbf{u}_{(k,c)}^i)^{\dag}\mathbf{H}_{kn}\mathbf{b}_{(n,p)}^i|^2+
\sum_{i=1}^{d_{(n,c)}}P_{(n,c)}^i\cdot
|(\mathbf{u}_{(k,c)}^i)^{\dag}\mathbf{H}_{k}\mathbf{b}_{(n,c)}^i|^2\right).$$
Similarly, the  mutual information (bit/s/Hz) of the
$i$-th private stream at MS $k$ is given by:
\begin{equation}\label{eq:capacity_p}
C_{(k,p)}^i = \log_{2}\left( 1+\sigma_{(k,p)}^i
P_{(k,p)}^i/(1+I_{(k,p)}^i) \right), \forall i=1,\cdots,d_{(k,p)},
\end{equation}
where
$\sigma_{(1,p)}^i=|(\mathbf{u}_{(k,p)}^i)^\dag\mathbf{H}_{kk}\mathbf{b}_{(k,p)}^i|^2$
and $I_{(k,p)}^i$ is calculated in a similar way to $I_{(k,c)}^i$.

Due to the imperfect CSIT, the  mutual information
$C_{(k,c)}^i$ and $C_{(k,p)}^i$ {\red at a frame} are not completely known to the BSs. Thus, there will be packet errors when the \textcolor{black}{transmit} data
rate exceeds the mutual information. Let $R_{(k,c)}$ and $R_{(k,p)}$
be the scheduled data rate of the common streams and the private streams
of BS $k$, respectively. (\textcolor{black}{Note that $R_{(k,c)}$
 also indicates the backhaul
consumption for sharing  common streams from BS $k$.}) The goodput   $U_k$ at MS $k$ (number of bits
successfully received) \textcolor{black}{in one frame}  is given by:
\begin{equation}\label{eq:Ubit_k} \blue
U_k=\tau R_{(k,c)}\mathbf{1}\big(R_{(k,c)}\leq C_{(k,c)} \big)+\tau
R_{(k,p)}\mathbf{1}\big(R_{(k,p)}\leq C_{k,p}\big),
\end{equation}
where $C_{(k,c)}=\sum_{i=1}^{d_{(k,c)}}C_{(k,c)}^i$,
$C_{k,p}=\sum_{i=1}^{d_{(k,p)}}C_{(k,p)}^i$  and $\mathbf 1 (\cdot)$
 denotes the indicator function.

The total transmit power of BS $k$ to support the $d_{(k,p)}$ private streams, the $d_{(k,c)}$ common streams to MS $k$ and  the $d_{(n,c)}$   common streams to MS $n\in \mathcal K, n\neq k$ is given by:
\begin{equation}\label{eq:p_tx}\begin{array}{l}
P_k^{tx}= \sum\nolimits_{i=1}^{d_{(k,p)}}P_{(k,p)}^i+\sum_{n\in \mathcal K}P_{(n,c),k},
\end{array}
\end{equation}
\textcolor{black}{where $ P_{(n,c),k}=\sum\nolimits_{i=1}^{d_{(n,c)}}P_{(n,c)}^i\alpha_{(n,k)}^i$  denotes
the transmit power at BS $k$ for the common streams to MS $n$} and $\alpha_{(n,k)}^i= \sum_{m=1}^M\big|[\textcolor{black}{\mathbf{B}_{(n,c)}}]_{((k-1)M+m,i)}\big|^2$ denotes the portion of power $P_{(n,c)}^i$  for common stream  $x_{(n,c)}^i$ contributed by BS $k$. Note that each common stream is precoded at the \textcolor{black}{$K$} BSs, and hence, we have  \textcolor{black}{$\sum_{k\in \mathcal K}\alpha_{(n,k)}^i=1$ for all $i=1,\cdots, d_{(n,c)}$, $n\in \mathcal K$}.

\section{Delay-Optimal Cross Layer Resource Optimization}\label{sec:delay-optimal-form}

  In this
section, we  formally define the control policy and formulate the delay-optimal  control problem under average power and  backhaul constraints.

%In this section, we  \textcolor{black}{formulate the delay-optimal control problem for} the
%proposed Pco-MIMO scheme  \textcolor{black}{to support}  delay-sensitive applications.
%\subsection{Motivation}
%%The DoF \textcolor{black}{increases} via data sharing \textcolor{black}{of} the common streams comes
%%with the price of an increase in backhaul consumption. When we do
%%not have backhaul constraints, it is obvious that the optimal choice
%%is to share all the data streams between the two BSs so that there
%%is no more private streams. This is because the common streams can
%%benefit from the joint processing of two BSs while the private
%%stream can only benefit from local processing of one BS. However,
%%When there \textcolor{black}{are backhaul constraints},
%%the {\em reward} of
%%engaging in BS cooperation for a common data stream is higher
%%goodput but the {\em price} is the corresponding backhaul
%%consumption.
%Under the backhaul constraints, the
%decision on the data sharing for delay-sensitive applications should
%be adaptive to the \textcolor{black}{CSIT and} the
%QSI. The CSIT \textcolor{black}{and QSI} adaptation allow
%the design to capture {\em good channel opportunity}  and  the {\em urgency} of the traffic demand for the
%cooperation between the two BSs at the expense of the backhaul consumption.  In the
%following, we  formally define the control policy, and \textcolor{black}{formulate} the delay-optimal \textcolor{black}{control problem}.

\subsection{Control Policy and Resource Constraints}
Denote $\boldsymbol{\chi}=\{\mathbf{H},
\mathbf{Q}\}$ as the global system state and
$\hat{\boldsymbol{\chi}}=\{\mathbf{\hat{H}}, \mathbf{Q}\}$ as the observed  
global system state. \textcolor{black}{The complete system state is $\{\boldsymbol{\chi},\hat{\boldsymbol{\chi}}\}$.} Based on the Pco-MIMO scheme,  at the beginning of each frame,   determine the transmit
power  and  rate allocation  \textcolor{black}{based on the  
global observed system state $\hat{\boldsymbol{\chi}}$}
according to the following  stationary  control policy.
\begin{Def}[Stationary Power and Rate Control Policy]\label{def:policy} A stationary power and rate control
policy \textcolor{black}{$\Omega=\{\Omega_P,\Omega_R\}$} is a mapping from the
observed state $\hat{\boldsymbol{\chi}}$ to the power and rate
allocation actions  \textcolor{black}{$\Omega(\hat{\boldsymbol{\chi}})=\{\Omega_P(\hat{\boldsymbol{\chi}}),\Omega_R(\hat{\boldsymbol{\chi}})\}$}, where
$\Omega_P(\hat{\boldsymbol{\chi}})=\mathbf{P}=\{\mathbf{P}_{(k,p)},\mathbf{P}_{(k,c)}:
k\in \mathcal K\}$ and
$\Omega_R(\hat{\boldsymbol{\chi}})=\mathbf{R}=\textcolor{black}{\{\mathbf R_k:k\in \mathcal K\}}$. $\mathbf{P}_{(k,p)}=\{P_{(k,p)}^i\in\mathbb{R}^+: i=1,\cdots, d_{k,p}\}$, $\mathbf{P}_{(k,c)}=\{P_{(k,c)}^i\in\mathbb{R}^+: i=1,\cdots, d_{k,c}\}$, \textcolor{black}{and $\mathbf R_k=\{R_{(k,p)},R_{(k,c)}\in\mathbb{R}^+\}$.} Assume  $\Omega$ is unichain\footnote{{\red  Unichain policy is a special type of stationary policy, for which the corresponding Markov chain $\left\{\boldsymbol{\chi}(t),\hat{\boldsymbol{\chi}}(t)\right\}$ has a single recurrent class (and possibly some transient states) \textcolor{black}{\cite[Chap. 4]{Bertsekas:2007}}.}}.~\hfill\IEEEQED
\end{Def}

The power allocation policy $\Omega_P$ satisfies the
per-BS average power consumption constraint:
\begin{eqnarray}\label{eq:pwr_con}
\overline{P}_k(\Omega)=\lim\sup_{T\rightarrow\infty}\frac{1}{T}\sum_{t=1}^T\mathbb{E}^{\Omega}[P_k(t)]\leq
P_k^0, \quad \forall k\in \mathcal K,
\end{eqnarray}
where {\red $\mathbb{E}^{\Omega}$ indicates that the expectation is taken w.r.t. the measure induced by the policy $\Omega$,} $P_k(t)$ is the total power consumption of BS $k$ at frame $t$ given
in (\ref{eq:p_k}), and  \textcolor{black}{$P_k^0$ denotes the maximum average power consumption}. On the other hand, the  rate allocation policy $\Omega_R$
satisfies the average backhaul  consumption constraint:
\begin{eqnarray}\label{eq:bkh_con}
\overline{R}_{(k,c)}(\Omega)=\lim\sup_{T\rightarrow\infty}\frac{1}{T}\sum\nolimits_{t=1}^T\mathbb{E}^{\Omega}[R_{(k,c)}(t)]\leq
R_{(k,c)}^0,\quad \forall k\in \mathcal K,
\end{eqnarray}
where $R_{(k,c)}(t)$ is the scheduled data rate for the common
streams $\mathbf{x}_{(k,c)}$ at frame $t$ and  $R_{(k,c)}^0$ denotes the maximum average backhaul consumption.\footnote{Note that the backhaul constraints  \textcolor{black}{account}
for the backhaul consumption due to the data sharing only. The backhaul
consumption for  the CSIT sharing  is negligible
compared with that for the data sharing. This is because the CSIT sharing is done once per frame
while the data sharing is done once per symbol.}

\subsection{Problem Formulation}
For a given stationary control policy $\Omega$, the induced random
process \textcolor{black}{$\left\{\boldsymbol{\chi}(t),\hat{\boldsymbol{\chi}}(t)\right\}$} is a controlled Markov chain with the
transition probability given by\footnote{
Note that the equality is due to the independence between  $\mathbf H(t+1),\hat{\mathbf H}(t+1)$ and  $\mathbf Q(t+1)$,  the i.i.d. assumption of the CSI model, the assumption of the imperfect CSIT model and  the independence between  $\mathbf Q(t+1)$ and $\hat{\mathbf H}(t)$  conditioned on $\boldsymbol{\chi}(t)$ and  $\Omega(\hat{\boldsymbol{\chi}}(t))$.
%Note that the first equality is due to the independence between  $\mathbf H(t+1),\hat{\mathbf H}(t+1)$ and  $\mathbf Q(t+1)$. The second equality is due to the i.i.d. assumption of the CSI model, the assumption of the imperfect CSIT model and  the independence between  $\mathbf Q(t+1)$ and $\hat{\mathbf H}(t)$  conditioned on $\boldsymbol{\chi}(t)$ and  $\Omega(\hat{\boldsymbol{\chi}}(t))$.
}:
\begin{align}
&\Pr\{\boldsymbol{\chi}(t+1),\hat{\boldsymbol{\chi}}(t+1)|\boldsymbol{\chi}(t),\hat{\boldsymbol{\chi}}(t),\Omega(\hat{\boldsymbol{\chi}}(t))\}\nonumber\\
%=&\Pr\{\mathbf H(t+1),\hat{\mathbf H}(t+1)|\boldsymbol{\chi}(t),\hat{\boldsymbol{\chi}}(t),\Omega(\hat{\boldsymbol{\chi}}(t))\}\Pr\{\mathbf Q(t+1)|\boldsymbol{\chi}(t),\hat{\boldsymbol{\chi}}(t),\Omega(\hat{\boldsymbol{\chi}}(t))\}\nonumber\\
=&\Pr\{\mathbf{H}(t+1)\}\Pr\{\hat{\mathbf{H}}(t+1)|\mathbf{H}(t+1)\}\Pr\{\mathbf{Q}(t+1)|\boldsymbol{\chi}(t),\Omega(\hat{\boldsymbol{\chi}}(t))\},\label{eq:sys_tran}
\end{align}
where the queue transition probability is given by
\begin{equation}\label{eq:Q_tran}
\Pr\{\mathbf{Q}(t+1)|\boldsymbol{\chi}(t),\Omega(\hat{\boldsymbol{\chi}}(t))\}=
\prod_{k\in \mathcal K}\Pr\left\{A_{k}(t)\in \mathbb R^+:A_{k}(t)\text{ satisfies } \eqref{eq:Q_org}\right\}
%\left\{
%\begin{array}{lll}
%\prod_{k}\Pr\{A_{k}(t)\}& \text{if \textcolor{black}{\eqref{eq:Q_org} holds for all $k\in \mathcal K$} } \\
%0 & \text{ otherwise}
%\end{array}\right..
\end{equation}
%where $\hat{Q}_k=\Big[\big[Q_k(t)-U_k(t)\big]^+ +
%A_k(t)\Big]_{\bigwedge N_Q}$ and $U_k(t)$ is given in
%\eqref{eq:Ubit_k}.
Note that, the stochastic dynamics of the \textcolor{black}{$K$} queues are coupled
together via  $\Omega$.

Given a stationary control policy $\Omega$, the average  delay cost\footnote{\textcolor{black}{The average delay cost defined here is a general queue size dependent metric, which includes the average delay as a special case.}}  of  \textcolor{black}{MS} $k$ is given by:
\begin{equation}
\label{eq:T_single} \overline{T}_{k}(\Omega)=\lim\sup_{T\rightarrow
\infty}\frac{1}{T}\sum\nolimits_{t=1}^T\mathbb{E}^{\Omega}\left[f(Q_{k}(t))\right],\quad \forall k\in \mathcal K,
\end{equation}
where $f(Q_{k})$ is a monotonic increasing  \textcolor{black}{cost} function of
$Q_{k}$. For example, when $f(Q_{k})=Q_{k}/\lambda_{k}$,  by
Little's Law \cite{Ross:2003}, $\overline{T}_{k}(\Omega)$ is the
average delay  of user $k$. When $f(Q_{k})=\mathbf{1}(Q_k\geq
Q_k^0)$, $\overline{T}_{k}(\Omega)$ is the probability that \textcolor{black}{$Q_k$}  exceeds $Q_k^0$ for some reference
$Q_k^0\in\{0,\cdots,N_Q\}$.

For some positive constants $\boldsymbol{\beta}=\{\beta_{k}:
k\in \mathcal K\}$,  define the average weighted sum delay cost under  a stationary control policy $\Omega$ as: 
$$\overline{T}_{\beta}^{\Omega}\triangleq \sum_{k\in \mathcal K}\beta_{k}\overline{T}_{k}(\Omega)=\lim\sup_{T\rightarrow
\infty}\frac{1}{T}\sum\nolimits_{t=1}^T\mathbb{E}^{\Omega}\left[\sum_{k\in \mathcal K}\beta_{k}f(Q_k\textcolor{black}{(t)})\right]\textcolor{black}{.}$$

The delay-optimal control problem is formulated as follows\footnote{The positive constants
$\boldsymbol{\beta}$ indicate the relative importance of the users. \textcolor{black}{For given $\boldsymbol{\beta}$}, the solution to Problem \ref{prob:delay} corresponds to a Pareto optimal point of the
multi-objective optimization problem given by $\min_{\Omega}
\overline{T}_{k}(\Omega)\ s.t. \ \eqref{eq:pwr_con} \text{
 and } \eqref{eq:bkh_con}$  for all $k\in \mathcal K$.}: 
 \begin{Prob}[Delay-optimal Control Problem for Pco-MIMO]\label{prob:delay}
\begin{equation}
\begin{array}{l}
\min_{\Omega}\quad \overline{T}_{\beta}^{\Omega}\\
s.t.\quad \text{the average power and backhaul constraints in
\eqref{eq:pwr_con} and \eqref{eq:bkh_con} \textcolor{black}{for all $k\in \mathcal K$}}\textcolor{black}{.}
\end{array}\nonumber
\end{equation}
\end{Prob}
 \textcolor{black}{Note that under the time average expected constraints\footnote{\textcolor{black}{The time averaged objective and constraints are  commonly used in the literature. For example, the egordic capacity maximization (the average delay minimization) under the average power constraint \cite{Tse04fundamentals} (\cite{Berry:2002}).}} in \eqref{eq:pwr_con} and \eqref{eq:bkh_con}, the probability, that the instantaneous power and backhaul consumption goes to infinity, goes to zero.  Furthermore,  additional peak power or backhaul consumption constraints can be accommodated in Problem \ref{prob:delay}.}

%\begin{Rem}[Interpretation of $\boldsymbol{\beta}$]
%The positive constants
%$\boldsymbol{\beta}$ indicate the relative importance of the users. \textcolor{black}{For given $\boldsymbol{\beta}$}, the solution to Problem \ref{prob:delay} corresponds to a Pareto optimal point of the
%multi-objective optimization problem given by $\min_{\Omega}
%\overline{T}_{k}(\Omega)\ s.t. \ \eqref{eq:pwr_con} \text{
% and } \eqref{eq:bkh_con}$  for all $k\in \mathcal K$.~\hfill\IEEEQED
% \end{Rem}

\textcolor{black}{
\begin{Rem} [Interpretation of Problem \ref{prob:delay}] Problem \ref{prob:delay} is an 
infinite horizon constrained average cost per stage problem \cite[Chap.4]{Bertsekas:2007} or constrained Markov decision process (MDP)\cite{Borkaractorcritic:2005}. Specifically, 
since the control policy is defined on the observed system state $\hat{\boldsymbol{\chi}}$ instead of the complete system state  $\{\boldsymbol{\chi},\hat{\boldsymbol{\chi}}\}$, Problem \ref{prob:delay} belongs to 
constrained partially observed MDP (CPOMDP),
which is well-known to be a very difficult problem
\cite{Meuleau:1999}.~\hfill\IEEEQED 
\end{Rem}}

\section{General Solution to the Delay Optimal Problem}\label{sec:opt_solution}
 In this section, by exploiting the special structure in our
problem, we derive an {\em equivalent Bellman equation} to
simplify the  CPOMDP problem.

\textcolor{black}{We} consider the dual problem of the CPOMDP  in Problem
\ref{prob:delay}. For any nonnegative Lagrange multipliers  (LMs) \textcolor{black}{
$\boldsymbol{\gamma}=\{\gamma_{(k,P)},\gamma_{(k,C)}\in \mathbb R^+:k\in \mathcal K\}$}, define the
Lagrangian as $L_{\beta}(\Omega,\boldsymbol{\gamma})=\lim\sup_{T\rightarrow\infty}
\frac{1}{T}\sum\nolimits_{t=1}^T \mathbb{E}^{\Omega}\left[
g\left(\boldsymbol{\gamma},\boldsymbol{\chi}(t),\Omega(\hat{\boldsymbol{\chi}}(t))\right)\right]$, 
%\begin{equation}
%L_{\beta}(\Omega,\boldsymbol{\gamma})=\lim\sup_{T\rightarrow\infty}
%\frac{1}{T}\sum\nolimits_{t=1}^T \mathbb{E}^{\Omega}\left[
%g\left(\boldsymbol{\gamma},\boldsymbol{\chi}(t),\Omega(\hat{\boldsymbol{\chi}}(t))\right)\right],\nonumber
%\end{equation}
where
$g(\boldsymbol{\gamma},\boldsymbol{\chi},\Omega(\hat{\boldsymbol{\chi}}))
= \sum_{k\in \mathcal K} \big(\beta_{k} f(Q_{k})+
\gamma_{(k,P)}$ $(P_{k}-\textcolor{black}{P_k^0})+\gamma_{(k,C)}(R_{(k,c)}-\textcolor{black}{R_{(k,c)}^0}) \big).$ The associated dual problem of Problem \ref{prob:delay} is given by
 \begin{align}
 \max_{\boldsymbol{\gamma}\succeq0}G(\boldsymbol{\gamma}),\label{eqn:L-dual-prob}
 \end{align}
where the Lagrange dual function (unconstrained POMDP) is given by:
\begin{eqnarray}\label{eq:dual}
G(\boldsymbol{\gamma})=\min_{\Omega}L_{\beta}(\Omega,\boldsymbol{\gamma}). 
\end{eqnarray}
We discuss the solution to the dual problem in \eqref{eqn:L-dual-prob} and  the duality gap below.

%First, we define partitioned actions below.
%\begin{Def}[Partitioned Actions]\label{def:con_action}
%Given a control policy $\Omega$, we define $\Omega(\mathbf{Q})=\{
%(\mathbf{P},\mathbf{R}) = \Omega(\mathbf{Q},\hat{\mathbf{H}}): \forall
%\hat{\mathbf{H}}\}$ as the collection of actions for all possible
%CSIT $\hat{\mathbf{H}}$ conditioned on  given QSI $\mathbf{Q}$. The
%complete control policy $\Omega$ is therefore equal to the union of
%all partitioned actions, i.e., $\Omega =
%\bigcup_{\mathbf{Q}}\Omega(\mathbf{Q})$.
%~\hfill\IEEEQED
%\end{Def}

While POMDP is a difficult problem in general, we  utilize the i.i.d. assumption of the
CSI  to substantially simplify the unconstrained POMDP in  \eqref{eq:dual}.  The optimal control policy $\Omega^*$,
can be obtained by solving an equivalent \textcolor{black}{optimality} equation, which is  
summarized below.

%Furthermore,  post-decision state is defined to be the virtual state
%immediately after making an action but before the new bits arrive
%\footnote{The post-decision state framework is used in
%\cite{Thesis:Salodkar} and the references therein.}. For example,
%$\mathbf{Q}$ is the state at the beginning of some time frame (also
%called the {\em pre-decision state}), and making an action
%$\Omega(\hat{\boldsymbol{\chi}})=\{\mathbf{P},\mathbf{R}\}$, the
%post-decision state immediately after the action is
%$\widetilde{\mathbf{Q}}$, where the transition to
%$\widetilde{\mathbf{Q}}$ is given by
%$\widetilde{\mathbf{Q}}=\big(\mathbf{Q}-\mathbf{U}\big)^+$. If new
%arrivals $\mathbf{A}$ occur in the post-decision state, then the
%system reaches the next actual state, i.e., pre-decision state,
%$\mathbf{Q}^{\prime}=\big[\widetilde{\mathbf{Q}}+\mathbf{A}\big]_{\bigwedge
%N_Q}$.

\begin{Thm}[Equivalent Bellman Equation]\label{thm:MDP_post}
$\quad$

 (a)  For any given LMs $\boldsymbol \gamma$, the optimal
    control policy $\Omega^*=(\Omega_P^*, \Omega_R^*)$ for \textcolor{black}{the unconstrained POMDP} in \eqref{eq:dual} can be
    obtained by solving the following {\em
        equivalent Bellman equation} \textcolor{black}{w.r.t. $\theta$ and $\{V(\widetilde{\mathbf{Q}})\}$}:
\begin{align}
\label{eq:bellman_post}
V(\widetilde{\mathbf{Q}})+\theta=&\mathbb E\left[ 
\min_{\mathbf P, \mathbf R}\mathbb E\left[g(\boldsymbol{\gamma},\boldsymbol{\chi},\mathbf P, \mathbf R)+\sum\nolimits_{\widetilde{\mathbf{Q}}^{\prime}}\Pr\{\widetilde{\mathbf{Q}}^{\prime}|\boldsymbol{\chi},\mathbf P, \mathbf R\}V(\widetilde{\mathbf{Q}}^{\prime})\bigg| \hat{\boldsymbol{\chi}}
\right]\Bigg|\widetilde{\mathbf{Q}}\right], \  \forall \widetilde{\mathbf{Q}} \in \boldsymbol{\mathcal{Q}},
\end{align}
where $\theta =G(\boldsymbol{\gamma})$ is the optimal average cost per stage and \textcolor{black}{$V(\cdot)$ is the post-decision state potential function}. 
%$\widetilde{g}(\boldsymbol{\gamma},\mathbf{Q},\Omega(\mathbf{Q}))=
%\mathbb{E}[g(\boldsymbol{\gamma},\boldsymbol{\chi},\Omega(\hat{\boldsymbol{\chi}}))|\mathbf{Q}]$
%is the per-stage \textcolor{black}{cost} function, and
%$\Pr\{\widetilde{\mathbf{Q}}^{\prime}|\mathbf{Q},\Omega(\mathbf{Q})\}=\textcolor{black}{\mathbb{E}
%\left[\Pr\{\widetilde{\mathbf{Q}}^{\prime}|\boldsymbol{\chi},\Omega(\hat{\boldsymbol{\chi}})\}|\mathbf{Q}\right]}$ is the transition kernel. 
$\widetilde {\mathbf Q}$ is the post-decision state, $\mathbf Q=[\widetilde{\mathbf Q}+\mathbf A]_{\bigwedge N_Q}$ is the pre-decision state, and
$\widetilde{\mathbf{Q}}^{\prime}=(\mathbf{Q}-\mathbf{U})^+$ is the next post-decision state
transited from $\mathbf{Q}$\textcolor{black}{\cite[Chap. 3]{Thesis:Salodkar}}, \footnote{The post-decision  \textcolor{black}{queue} state $\widetilde{\mathbf{Q}}$ is the  queue state immediately after making an action but
before new bits arrive\textcolor{black}{\cite[Chap. 3]{Thesis:Salodkar}}. For example, suppose $\mathbf{Q}$ is the queue state at
the beginning of the current frame (also called the {\em
pre-decision state}). After making an action
$\Omega(\hat{\boldsymbol{\chi}})=\{\mathbf{P},\mathbf{R}\}$  leading to a
goodput of $\mathbf{U}$, the post-decision state immediately after
the action is
$\widetilde{\mathbf{Q}}=\big(\mathbf{Q}-\mathbf{U}\big)^+$. The pre-decision queue state at the beginning of the next frame
is given by
$\mathbf{Q}^{\prime}=\big[\widetilde{\mathbf{Q}}+\mathbf{A}\big]_{\bigwedge
N_Q}$.} \textcolor{black}{where $\mathbf U=\{U_k:k\in \mathcal K\}$.} 

(b) If
$\Omega^*(\hat{\boldsymbol{\chi}})=\arg\min_{\mathbf P, \mathbf R}\mathbb E\left[g(\boldsymbol{\gamma},\boldsymbol{\chi},\mathbf P, \mathbf R)+\sum\nolimits_{\widetilde{\mathbf{Q}}^{\prime}}\Pr\{\widetilde{\mathbf{Q}}^{\prime}|\boldsymbol{\chi},\mathbf P, \mathbf R\}V(\widetilde{\mathbf{Q}}^{\prime})\bigg| \hat{\boldsymbol{\chi}}
\right]$ is unique for all $\hat{\boldsymbol{\chi}}$, then the deterministic policy $\Omega^*$ is the optimal policy for  the unconstrained POMDP in \eqref{eq:dual}.~\hfill\IEEEQED
\end{Thm}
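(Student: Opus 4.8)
The plan is to obtain \eqref{eq:bellman_post} from the average-cost optimality equation of an auxiliary \emph{fully observed} MDP, through three successive reductions, and to deduce part~(b) from the standard verification theorem \cite[Chap. 4]{Bertsekas:2007}. \textbf{Step 1 (collapse the partial observation).} In the POMDP \eqref{eq:dual} the QSI $\mathbf Q$ is observed exactly and only the CSI $\mathbf H$ is hidden. Because $\{\mathbf H(t)\}$ is i.i.d.\ over frames (Assumption \ref{ass:channel_model}) and $\hat{\mathbf H}(t)$ is generated from $\mathbf H(t)$ alone through the memoryless kernel $\Pr\{\hat{\mathbf H}\mid\mathbf H\}$ (Assumption \ref{ass:csit_model}), a one-line Bayes computation gives $\Pr\{\mathbf H(t)=\mathbf H\mid\mathcal F_t\}=\Pr\{\mathbf H(t)=\mathbf H\mid\hat{\mathbf H}(t)\}$ for the full observation history $\mathcal F_t$. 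Hence the information state is in bijection with $\hat{\boldsymbol\chi}(t)=(\hat{\mathbf H}(t),\mathbf Q(t))$, and \eqref{eq:dual} is equivalent to an MDP with state $\hat{\boldsymbol\chi}$, per-stage cost $\mathbb E[g(\boldsymbol\gamma,\boldsymbol\chi,\mathbf P,\mathbf R)\mid\hat{\boldsymbol\chi}]$, and transition kernel induced by \eqref{eq:sys_tran}. The state space $\boldsymbol{\mathcal Q}\times\hat{\mathcal H}^{N\times M}$ is finite, the per-stage cost is bounded below, and the value of \eqref{eq:dual} is finite (attained, e.g., by the zero-power zero-rate policy), so one may restrict attention to a compact set of power/rate actions; the standard sufficient conditions for the average-cost optimality equation \cite[Chap. 4]{Bertsekas:2007} then hold under the unichain assumption (Def.\ \ref{def:policy}), yielding a bounded $\widetilde V(\hat{\mathbf H},\mathbf Q)$ and a scalar $\theta=G(\boldsymbol\gamma)$ that solve it.

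\textbf{Steps 2--3 (eliminate the CSIT coordinate and pass to the post-decision queue state).} By \eqref{eq:sys_tran} the kernel factors as $\Pr\{\hat{\mathbf H}'\}\,\Pr\{\mathbf Q'\mid\boldsymbol\chi,\mathbf P,\mathbf R\}$ with $\hat{\mathbf H}'$ independent of the current state and action; substituting this and setting $\bar V(\mathbf Q)\triangleq\mathbb E_{\hat{\mathbf H}}[\widetilde V(\hat{\mathbf H},\mathbf Q)]$ collapses the $\hat{\mathbf H}'$-sum, and averaging the optimality equation over $\hat{\mathbf H}$ (which is independent of $\mathbf Q$) yields a Bellman equation for $\bar V$ on the finite queue space, with $\mathbb E_{\hat{\mathbf H}}$ sitting outside the $\min$. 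Next, since $\mathbf Q'=[(\mathbf Q-\mathbf U)^++\mathbf A]_{\bigwedge N_Q}$ with $\mathbf A$ i.i.d.\ and independent of $(\boldsymbol\chi,\mathbf P,\mathbf R,\mathbf U)$, define the post-decision potential $V(\widetilde{\mathbf Q})\triangleq\mathbb E_{\mathbf A}\big[\bar V([\widetilde{\mathbf Q}+\mathbf A]_{\bigwedge N_Q})\big]$ \cite[Chap. 3]{Thesis:Salodkar}; this turns $\sum_{\mathbf Q'}\Pr\{\mathbf Q'\mid\boldsymbol\chi,\mathbf P,\mathbf R\}\bar V(\mathbf Q')$ into $\sum_{\widetilde{\mathbf Q}'}\Pr\{\widetilde{\mathbf Q}'\mid\boldsymbol\chi,\mathbf P,\mathbf R\}V(\widetilde{\mathbf Q}')$ with $\widetilde{\mathbf Q}'=(\mathbf Q-\mathbf U)^+$. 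Evaluating the Step-2 equation at $\mathbf Q=[\widetilde{\mathbf Q}+\mathbf A]_{\bigwedge N_Q}$ and taking $\mathbb E_{\mathbf A}$ of both sides produces exactly \eqref{eq:bellman_post}, the outer $\mathbb E[\cdot\mid\widetilde{\mathbf Q}]$ absorbing the averaging over $(\mathbf A,\hat{\mathbf H})$ and the inner $\mathbb E[\cdot\mid\hat{\boldsymbol\chi}]$ the averaging over $\mathbf H$ given $\hat{\mathbf H}$. Since every reduction preserves, for each $\hat{\boldsymbol\chi}$, the set of minimizing $(\mathbf P,\mathbf R)$, a measurable minimizing selector $\Omega^*$ exists and, by attaining the $\min$ in the original MDP optimality equation, is optimal for \eqref{eq:dual}; this proves part~(a).

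For part~(b), uniqueness of the inner minimizer for every $\hat{\boldsymbol\chi}$ makes $\Omega^*$ a well-defined deterministic stationary (hence, by Def.\ \ref{def:policy}, unichain) policy; telescoping the optimality equality in \eqref{eq:bellman_post} along a sample path driven by $\Omega^*$ and invoking boundedness of $V$ together with the unichain ergodic theorem gives $L_\beta(\Omega^*,\boldsymbol\gamma)=\theta=G(\boldsymbol\gamma)$, so $\Omega^*$ attains the minimum in \eqref{eq:dual}. The step I expect to be the main obstacle is Step~1: rigorously reducing the CPOMDP to a \emph{finite}-state MDP (the information-state argument, which hinges on the memorylessness of both the CSI and the CSIT models) and establishing existence of a bounded solution to its average-cost optimality equation, in particular controlling the continuous power/rate action set --- handled here by the coercivity of the Lagrangian power penalty $\gamma_{(k,P)}(P_k-P_k^0)$ and the boundedness of the potential, which confine the effective minimization to a compact set. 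The algebraic manipulations in Steps~2--3 are then routine bookkeeping with the factorized transition kernel.
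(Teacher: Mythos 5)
Your proof is correct and rests on the same underlying idea as the paper's Appendix B --- the i.i.d.\ CSI and memoryless CSIT error kernel collapse the CPOMDP onto the queue state --- but the reduction is organized differently. The paper does it in one step: it defines an equivalent MDP whose state is $\mathbf{Q}$ alone and whose \emph{action} at state $\mathbf{Q}$ is the entire partial policy $\Omega(\mathbf{Q})=\{\Omega(\hat{\boldsymbol{\chi}}):\forall \hat{\mathbf H}\}$, with per-stage cost $\mathbb{E}[g(\boldsymbol{\gamma},\boldsymbol{\chi},\Omega(\hat{\boldsymbol{\chi}}))|\mathbf{Q}]$ and kernel $\mathbb{E}[\Pr\{\widetilde{\mathbf{Q}}^{\prime}|\boldsymbol{\chi},\Omega(\hat{\boldsymbol{\chi}})\}|\mathbf{Q}]$, and then invokes standard average-cost MDP theory to obtain \eqref{eq:bellman_post}. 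You instead pass through an intermediate fully observed MDP on $\hat{\boldsymbol{\chi}}$ via the information-state argument, then average out $\hat{\mathbf H}$ and carry out the post-decision change of variable explicitly. This is more work, but it makes explicit two things the paper leaves implicit: that restricting to policies of the current $\hat{\boldsymbol{\chi}}$ loses nothing (i.e., $\hat{\boldsymbol{\chi}}$ is a sufficient statistic, which is exactly where the i.i.d.\ and memorylessness assumptions enter), and the provenance of the two nested expectations in \eqref{eq:bellman_post}. For part (b) the routes genuinely diverge: the paper cites Theorem 2.1 of \cite{Borkaractorcritic:2005} to conclude that any optimal randomized policy is supported on the argmin set, so a singleton argmin forces determinism; you instead run a verification/telescoping argument showing the deterministic selector attains $\theta=G(\boldsymbol{\gamma})$. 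Both are valid --- the paper's argument speaks directly to why randomization might otherwise be needed, while yours is self-contained and does not rely on the external result.
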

\begin{proof}
Please refer to Appendix B.
\end{proof}

%\begin{Rem}[Interpretation of Theorem \ref{thm:MDP_post}]
%The equivalent Bellman equation in \eqref{eq:bellman_post} is
%defined on   $\mathbf{\widetilde{Q}}$
% only. Nevertheless, the optimal control policy
%$\Omega^*=\{\Omega_P^*,\Omega_R^*\}$ obtained by solving
%\eqref{eq:bellman_post} is still adaptive to the global observed state
%$\hat{\boldsymbol{\chi}}=\{\mathbf{Q},\hat{\mathbf{H}}\}$.~\hfill\IEEEQED
%\end{Rem}

Note that the optimization problem in Problem \ref{prob:delay} is not
convex w.r.t. the control policy $\Omega$. The following lemma
\textcolor{black}{establishes} the zero duality gap between the primal and dual
problems.
\begin{Lem}[Zero Duality Gap]\label{lem:dual_gap} {\red If the condition of Theorem \ref{thm:MDP_post} (b) holds,}  the duality gap between the primal
problem in Problem \ref{prob:delay} and the dual problem {\red in \eqref{eqn:L-dual-prob}} is zero,
i.e.,
\begin{equation}\label{eq:zero_duality}
\min_{\Omega}\max_{\boldsymbol{\gamma}\succeq0}
L_{\beta}(\Omega,\boldsymbol{\gamma})=\max_{\boldsymbol{\gamma}\succeq0}\min_{\Omega}
L_{\beta}(\Omega,\boldsymbol{\gamma}).
\end{equation}
\end{Lem}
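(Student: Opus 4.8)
The plan is to establish the zero duality gap by showing that the min-max value equals the max-min value. Since weak duality always gives $\max_{\boldsymbol\gamma\succeq0}\min_\Omega L_\beta(\Omega,\boldsymbol\gamma)\leq\min_\Omega\max_{\boldsymbol\gamma\succeq0}L_\beta(\Omega,\boldsymbol\gamma)$, the crux is the reverse inequality. First I would observe that by Theorem \ref{thm:MDP_post}, for any fixed $\boldsymbol\gamma$ the inner minimization $\min_\Omega L_\beta(\Omega,\boldsymbol\gamma)=G(\boldsymbol\gamma)$ is attained by a stationary unichain policy $\Omega^*(\boldsymbol\gamma)$ obtained from the equivalent Bellman equation, with $\theta(\boldsymbol\gamma)=G(\boldsymbol\gamma)$. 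So the dual problem $\max_{\boldsymbol\gamma\succeq0}G(\boldsymbol\gamma)$ is a well-defined concave maximization (as a pointwise infimum of affine functions of $\boldsymbol\gamma$, $G$ is concave).

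The key step is a saddle-point / complementary-slackness argument. I would let $\boldsymbol\gamma^*$ be an optimal dual variable (existence follows from concavity of $G$ and, if needed, a Slater-type condition guaranteeing the constraint set in Problem \ref{prob:delay} has nonempty interior — e.g. there exists a feasible unichain policy strictly satisfying the average power and backhaul constraints). Let $\Omega^*=\Omega^*(\boldsymbol\gamma^*)$ be the corresponding minimizing policy from Theorem \ref{thm:MDP_post}(b), which under the stated uniqueness hypothesis is the unique deterministic optimizer. The heart of the proof is to show that $\Omega^*$ is primal-feasible and that complementary slackness holds, i.e. $\gamma^*_{(k,P)}\big(\overline P_k(\Omega^*)-P_k^0\big)=0$ and $\gamma^*_{(k,C)}\big(\overline R_{(k,c)}(\Omega^*)-R_{(k,c)}^0\big)=0$ for all $k$. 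This is the standard consequence of $\boldsymbol\gamma^*$ being a maximizer of the concave differentiable (or subdifferentiable) function $G$: the subgradient of $G$ at $\boldsymbol\gamma^*$ is the vector of constraint slacks evaluated at $\Omega^*$, and optimality over the nonnegative orthant forces each coordinate to satisfy the KKT complementarity condition. Once feasibility and complementary slackness are in hand, $L_\beta(\Omega^*,\boldsymbol\gamma^*)=\overline T_\beta^{\Omega^*}$ because the penalty terms vanish, and since $\Omega^*$ is feasible, $\overline T_\beta^{\Omega^*}\geq\min_\Omega\max_{\boldsymbol\gamma\succeq0}L_\beta(\Omega,\boldsymbol\gamma)$; combined with $L_\beta(\Omega^*,\boldsymbol\gamma^*)=G(\boldsymbol\gamma^*)=\max_{\boldsymbol\gamma\succeq0}\min_\Omega L_\beta(\Omega,\boldsymbol\gamma)$, this closes the gap.

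The main obstacle I anticipate is justifying the existence of the optimal dual variable $\boldsymbol\gamma^*$ and the differentiability/subgradient structure of $G$ in the infinite-horizon average-cost setting, where $G(\boldsymbol\gamma)$ is defined through a limit of Cesàro averages rather than a finite sum. I would handle this by invoking the unichain assumption on all candidate policies (Definition \ref{def:policy}), which guarantees that the average cost $L_\beta(\Omega,\boldsymbol\gamma)$ is independent of the initial state and that the limit superior is an honest limit; then standard results on constrained MDPs (e.g. \cite{Borkaractorcritic:2005}, or the constrained average-cost theory in \cite[Chap.~4]{Bertsekas:2007}) give that $G$ is finite, concave, and that the supremum is attained under a Slater condition, with the envelope theorem supplying the subgradient as the constraint-slack vector at the minimizing policy. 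A secondary technical point is ensuring that the unconstrained minimizer $\Omega^*(\boldsymbol\gamma^*)$ furnished by Theorem \ref{thm:MDP_post} is itself unichain so that it is an admissible primal policy; this follows from the uniqueness hypothesis in Theorem \ref{thm:MDP_post}(b) together with the unichain structure imposed in Definition \ref{def:policy}. Modulo these measure-theoretic and MDP-regularity details, the argument reduces to the classical Lagrangian saddle-point characterization, so I would keep the write-up brief and defer the routine verifications to the cited references.
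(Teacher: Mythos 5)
Your argument is correct in outline but follows a genuinely different route from the paper's. The paper first convexifies the problem by passing to {\em ergodic occupation measures}: it recasts Problem \ref{prob:delay} as an infinite-dimensional linear program over the closed convex set of occupation measures induced by stationary {\em randomized} policies, invokes LP/saddle-point duality to conclude zero gap over that randomized class, and only then uses the uniqueness condition of Theorem \ref{thm:MDP_post}(b) to argue that restricting to deterministic policies loses nothing, i.e. $\min_{\nu}L_{\text{LP}}(\nu,\boldsymbol{\gamma})=\min_{\Omega}L_{\beta}(\Omega,\boldsymbol{\gamma})$ for every $\boldsymbol{\gamma}\succeq 0$. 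You instead work directly with the dual function $G(\boldsymbol{\gamma})$ in \eqref{eq:dual}: concavity, a Danskin/envelope identification of the supergradient with the constraint-slack vector at the minimizing policy, and KKT complementarity at a dual maximizer $\boldsymbol{\gamma}^*$ to exhibit a feasible, complementary-slack primal policy that closes the gap. Both arguments use the uniqueness hypothesis of Theorem \ref{thm:MDP_post}(b) at the same essential point (for the paper, randomized optimizers collapse to the deterministic one; for you, the superdifferential of $G$ at $\boldsymbol{\gamma}^*$ becomes a singleton so the slack vector of $\Omega^*(\boldsymbol{\gamma}^*)$ itself must satisfy feasibility and complementarity --- without uniqueness you would only obtain a feasible {\em mixture} of minimizers). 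What the paper's route buys is that all convexity comes for free from the structure of the occupation-measure set, so the ``no gap over randomized policies'' step is a citation to LP duality; what your route buys is avoiding the measure-theoretic machinery, at the price of two extra regularity requirements that are not among the lemma's hypotheses: attainment of $\sup_{\boldsymbol{\gamma}\succeq0}G(\boldsymbol{\gamma})$ (your Slater-type condition) and the compactness/continuity of $\Omega\mapsto L_{\beta}(\Omega,\boldsymbol{\gamma})$ needed for Danskin's theorem in the average-cost setting. To be fair, the paper's own proof also asserts rather than derives the existence of its saddle point $(\nu^*,\boldsymbol{\gamma}^*)$, so this is a shared gap rather than a defect unique to your write-up; if you pursue your version, state the dual-attainment assumption explicitly.
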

\begin{proof}
Please refer to Appendix C.
\end{proof}

Therefore, by solving the dual problem in \eqref{eqn:L-dual-prob}, we can obtain the primal
optimal $\Omega^*$. In other words, the derived \textcolor{black}{policy} of the
equivalent Bellman equation in \eqref{eq:bellman_post} for  dual
optimal LMs $\boldsymbol{\gamma}^*$ {\red solves} the CPOMDP (primal problem) in Problem
\ref{prob:delay}.

\textcolor{black}{
\begin{Rem} [Discussions on Optimal Solution] The
brute-force solution using Theorem \ref{thm:MDP_post} and Lemma \ref{lem:dual_gap} requires solving a  large system of
nonlinear fixed point equations  in
\eqref{eq:bellman_post}. The obtained optimal solution has exponential complexity w.r.t. the number of MSs and requires centralized implementation and knowledge of  system statistics. 
In the following section, we study a low-complexity distributed solution based on the optimal solution. 
\end{Rem}}

\section{Low Complexity Distributed Solution}\label{sec:low-complexity-solution}

In this section, we propose a low-complexity distributed solution using
a distributed online learning algorithm to estimate the per-flow
potential functions and LMs and
a distributed online stochastic  partial gradient algorithm to
obtain the  power and rate control policy.   

%The proposed low complexity distributed solution is
%based on local observations of the system states at each BS and
%does not require explicit knowledge of the system statistics.

 \subsection{Linear Approximation of System Potential Functions}
To reduce \textcolor{black}{computational complexity and} facilitate
distributed implementation, we first approximate
the system post-decision state potential functions \textcolor{black}{$\{V(\widetilde{\mathbf{Q}})\}$} defined  in \eqref{eq:bellman_post}
by the sum of  the per-flow post-decision state potential  functions \textcolor{black}{$\{V_{k}(\widetilde{Q}_{k})\}$ for all $k\in \mathcal K$} below:
\begin{equation}
\label{eq:linear_value}
V(\widetilde{\mathbf{Q}})\approx\sum_{k\in \mathcal K}V_{k}(\widetilde{Q}_{k}), \quad \forall \widetilde{\mathbf{Q}}\in \boldsymbol{\mathcal Q},
\end{equation}
where $\{V_{k}(\widetilde{Q}_{k})\}$ is defined as the {\em
fixed point} of the following per-flow fixed point equation:
\begin{align}
&V_{k}(\widetilde{Q}_{k})+V_{k}(\widetilde{Q}_{k}^0)\label{eq:bellman_perflow}\\
=&
\mathbb E\left[\min_{\mathbf P_k, \mathbf R_k}\left[
g_k(\boldsymbol{\gamma}_{k},\hat{\boldsymbol{\chi}}_k,
\mathbf P_k, \mathbf R_k)+
\sum_{\widetilde{Q}_{k}^{\prime}}\Pr\{\widetilde{Q}_{k}^{\prime}|\hat{\boldsymbol{\chi}}_k, \mathbf P_k, \mathbf R_k\}V_{k}(\widetilde{Q}_{k}^{\prime})\right]\Bigg| \widetilde Q_k\right].\nonumber
\end{align}
\textcolor{black}{$g_k(\boldsymbol{\gamma}_{k},\hat{\boldsymbol{\chi}}_k,
\mathbf P_k, \mathbf R_k)=\beta_k f(Q_{k})+\gamma_{(k,P)} \left(\sum\nolimits_{i=1}^{d_{(k,p)}}P_{(k,p)}^i+P_{\text{cct}}\mathbf{1}(\sum\nolimits_{i=1}^{d_{(k,p)}}P_{(k,p)}^i>0)-P_k^0\right)+ \sum_{n\in \mathcal K, n\neq k}\gamma_{(n,P)}P_{(k,c),n}+\gamma_{(k,C)}(R_{(k,c)}-R_{(k,c)}^0) $, $\boldsymbol{\gamma}_{k}=\{\gamma_{(k,C)},\gamma_{(n,P)}:n\in \mathcal K\}$} and 
%$\widetilde{g}_k(\boldsymbol{\gamma}_{k},Q_{k},{\red \widetilde \Omega_{k}}(Q_{k}))=
%\mathbb{E}[g_k(\boldsymbol{\gamma}_{k},\hat{\boldsymbol{\chi}}_k,
%\widetilde \Omega_{k}(\hat{\boldsymbol{\chi}}_k))|Q_k],$
%and $\Pr\{\widetilde{Q}_{k}^{\prime}|Q_{k}, \widetilde \Omega_{k}(Q_{k})\}=\mathbb{E}
%[\Pr\{\widetilde{Q}_{k}^{\prime}|\hat{\boldsymbol{\chi}}_k, \widetilde \Omega_{k}(\hat{\boldsymbol{\chi}}_k)\}|Q_k].$ 
%Similar to Definition \ref{def:con_action}, $\widetilde  \Omega_{k}(Q_{k})=\{\widetilde  \Omega_{k}(Q_{k},\hat{\mathbf H}):\forall \hat{\mathbf H} \}$.
%\begin{eqnarray}\label{eq:g_k}
%&&\widetilde{g}_k(\boldsymbol{\gamma}_{k},Q_{k},{\red \widetilde \Omega_{k}}(Q_{k}))=
%\mathbb{E}[g_k(\boldsymbol{\gamma}_{k},Q_k,\hat{\mathbf{H}},
%{\red \widetilde \Omega_{k}}(Q_{k},\hat{\mathbf{H}}))|Q_k],\\
%&&\Pr\{\widetilde{Q}_{k}^{\prime}|Q_{k},{\red \widetilde \Omega_{k}}(Q_{k})\}=\mathbb{E}
%[\Pr\{\widetilde{Q}_{k}^{\prime}|Q_{k},\hat{\mathbf{H}}, \widetilde \Omega_{k}(Q_{k},\hat{\mathbf{H}})\}|Q_k].
%\end{eqnarray}
$\hat{\boldsymbol{\chi}}_k=\{Q_k,\hat{\mathbf{H}}\}$.  $\widetilde Q_k$ is the post-decision state, $ Q_k=[\widetilde{ Q}_k+ A_k]_{\bigwedge N_Q}$  is the pre-decision state, and $\widetilde{Q}^{\prime}_k=(Q_k-U_k)^+$ is
the next post-decision state
transited from $Q_k$. 
$\widetilde{Q}_{k}^0\in\{0,\cdots,N_Q\}$ is a
reference state.  Let $\widetilde  \Omega_{k}^*$ denote the policy satisfying  $$\widetilde  \Omega_{k}^*(\hat{\boldsymbol{\chi}}_k)=\arg\min\limits_{ \mathbf P_k, \mathbf R_k} \left[
g_k(\boldsymbol{\gamma}_{k},\hat{\boldsymbol{\chi}}_k,
\mathbf P_k, \mathbf R_k)+
\sum_{\widetilde{Q}_{k}^{\prime}}\Pr\{\widetilde{Q}_{k}^{\prime}|\hat{\boldsymbol{\chi}}_k, \mathbf P_k, \mathbf R_k\}V_{k}(\widetilde{Q}_{k}^{\prime})
\right], \quad \forall \hat{\boldsymbol{\chi}}_k.$$ 

The  linear approximation in \eqref{eq:linear_value} is accurate under certain conditions.
\begin{Lem}[Optimality of Linear Approximation]\label{lem:V=V_k} \textcolor{black}{For any given LMs $\boldsymbol \gamma$,} 
if $P_{\text{cct}}=0$ and \textcolor{black}{$\epsilon\triangleq\sup_{\{\mathbf{H},\hat{\mathbf{H}}:\mathbf{H}\neq\hat{\mathbf{H}}\}}\Pr\{\mathbf{H}|\hat{\mathbf{H}}\}=0$} (i.e., perfect CSIT), the solution of
the Bellman equation in \eqref{eq:bellman_post} \textcolor{black}{satisfies} 
$V(\widetilde{\mathbf{Q}})=\sum_{k\in \mathcal K}V_{k}(\widetilde{Q}_{k})$ \textcolor{black}{for all $\widetilde{\mathbf{Q}}\in \boldsymbol{\mathcal Q}$},
where $\{V_{k}(\widetilde{Q}_{k})\}$ is the solution of the
per-flow fixed point equation  in \eqref{eq:bellman_perflow}.
~\hfill\IEEEQED
\end{Lem}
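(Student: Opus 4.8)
The plan is to prove the lemma by \emph{verification plus uniqueness}: I will show that the separable ansatz $V(\widetilde{\mathbf Q})=\sum_{k\in\mathcal K}V_k(\widetilde Q_k)$, with $\{V_k\}$ the solution of the per-flow fixed point equation \eqref{eq:bellman_perflow}, satisfies the system Bellman equation \eqref{eq:bellman_post}, and then appeal to the fact that, under the unichain assumption, the solution of \eqref{eq:bellman_post} is unique up to an additive constant. Throughout, the LMs $\boldsymbol\gamma$ are fixed.

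First I would simplify \eqref{eq:bellman_post} using the two hypotheses. Because $\epsilon=\sup_{\mathbf H\neq\hat{\mathbf H}}\Pr\{\mathbf H|\hat{\mathbf H}\}=0$, we have $\mathbf H=\hat{\mathbf H}$ almost surely, hence $\boldsymbol\chi=\hat{\boldsymbol\chi}$, so the inner conditional expectation $\mathbb E[\,\cdot\,|\hat{\boldsymbol\chi}]$ in \eqref{eq:bellman_post} disappears. Moreover, with perfect CSIT the ZF precoder/decorrelator construction of Section \ref{sec:pco-mimo} nulls every inter-MS interference term, so $I_{(k,c)}^i=I_{(k,p)}^i=0$ in \eqref{eq:capacity_c}--\eqref{eq:capacity_p}; thus $C_{(k,c)},C_{(k,p)}$, the goodput $U_k$ in \eqref{eq:Ubit_k}, and therefore the post-decision transition $\widetilde Q_k'=(Q_k-U_k)^+$ depend only on $(\mathbf P_k,\mathbf R_k)$ and $\mathbf H$ (the unit-norm precoders/decorrelators being functions of $\mathbf H$ alone). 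Consequently the queue transition factorizes, $\Pr\{\widetilde{\mathbf Q}'|\boldsymbol\chi,\mathbf P,\mathbf R\}=\prod_{k\in\mathcal K}\Pr\{\widetilde Q_k'|\hat{\boldsymbol\chi}_k,\mathbf P_k,\mathbf R_k\}$, and hence $\sum_{\widetilde{\mathbf Q}'}\Pr\{\widetilde{\mathbf Q}'|\boldsymbol\chi,\mathbf P,\mathbf R\}\sum_{n}V_n(\widetilde Q_n')=\sum_{k}\sum_{\widetilde Q_k'}\Pr\{\widetilde Q_k'|\hat{\boldsymbol\chi}_k,\mathbf P_k,\mathbf R_k\}V_k(\widetilde Q_k')$ for the separable ansatz.

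Next I would establish that the per-stage cost decomposes when $P_{\text{cct}}=0$, i.e. $g(\boldsymbol\gamma,\boldsymbol\chi,\mathbf P,\mathbf R)=\sum_{k\in\mathcal K}g_k(\boldsymbol\gamma_k,\hat{\boldsymbol\chi}_k,\mathbf P_k,\mathbf R_k)$: substituting $P_k=P_k^{tx}=\sum_i P_{(k,p)}^i+\sum_n P_{(n,c),k}$ from \eqref{eq:p_tx} into $g$ and re-indexing the double sum $\sum_k\gamma_{(k,P)}\sum_n P_{(n,c),k}$ so that it is grouped by the flow owning each power variable reproduces precisely the per-flow costs $g_k$ (the $\beta_kf(Q_k)$ and $\gamma_{(k,C)}(R_{(k,c)}-R_{(k,c)}^0)$ terms being already per-flow, and the circuit term $P_{\text{cct}}\mathbf 1(\cdot)$ vanishing by hypothesis). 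Since the feasible set of $(\mathbf P,\mathbf R)$ is the Cartesian product of the per-flow feasible sets and the bracketed objective is now a sum of terms each involving only $(\mathbf P_k,\mathbf R_k)$, the minimization separates across $k$. Finally, using the independence of $\{A_k\}$ across $k$ to split the outer expectation $\mathbb E[\,\cdot\,|\widetilde{\mathbf Q}]$ into $\mathbb E[\,\cdot\,|\widetilde Q_k]$ term by term, and reading \eqref{eq:bellman_perflow} with $\theta_k\triangleq V_k(\widetilde Q_k^0)$ as the per-flow average cost, one obtains that $V(\widetilde{\mathbf Q})=\sum_k V_k(\widetilde Q_k)$ together with $\theta=\sum_k\theta_k$ satisfies \eqref{eq:bellman_post}; uniqueness then identifies this as \emph{the} solution.

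The main obstacle I expect is the cost-decomposition bookkeeping together with the normalization of the additive constant: I must verify that the definition of $g_k$ (charging BS $k$'s private power at $\gamma_{(k,P)}$ and BS $k$'s share of the common power of flow $n\neq k$ at $\gamma_{(n,P)}$) reassembles exactly into $g$ once $P_{\text{cct}}=0$, and that the reference-state offset left free by \eqref{eq:bellman_post} is pinned down consistently with the per-flow offsets $V_k(\widetilde Q_k^0)$, so that the claimed equality holds on the nose rather than merely up to a constant. A secondary technical point is justifying the verification step itself --- plugging the separable ansatz into the right-hand side of \eqref{eq:bellman_post} before it is known to be the true potential --- which is exactly why the argument is structured as ``verify a candidate, then invoke uniqueness'' rather than via a contraction/fixed-point argument performed directly on the separable form.
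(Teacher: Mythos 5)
Your proposal is correct and follows essentially the same route as the paper's proof in Appendix D: use $\epsilon=0$ to null the interference and factorize the queue transition kernel across flows, use $P_{\text{cct}}=0$ to decompose the per-stage cost as $g=\sum_k g_k$, substitute the separable ansatz with $\theta=\sum_k V_k(\widetilde Q_k^0)$ so that the minimization and the outer expectation split across $k$ by the independence assumptions, and then invoke uniqueness of the Bellman solution up to an additive constant. Your extra care with the cross-terms $\gamma_{(n,P)}P_{(k,c),n}$ in the cost reassembly is a point the paper glosses over, but it does not change the argument.
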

\begin{proof}
Please refer to Appendix D.
\end{proof}

\subsection{Distributed Online Learning of  Potential Functions and LMs}
Instead of computing the per-flow potential functions and the
LMs  offline, we  estimate them
distributively at each BS  using  Algorithm \ref{alg:learning}, as illustrated in Steps 1 and 2 in Fig. \ref{fig:learning_diagram}. 

\begin{figure}
 \begin{center}
\includegraphics[height=6.5cm, width=14cm]{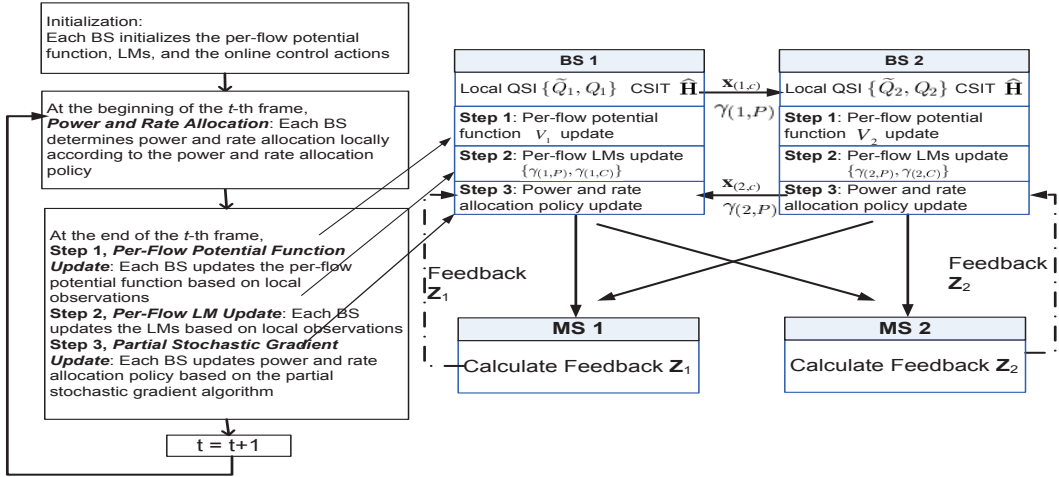}
 \end{center}
    \caption{The system procedure of \textcolor{black}{the proposed low-complexity distributed solution involving Alg. \ref{alg:learning} and Alg. \ref{Alg:gra} with $K=2$.}}
    \label{fig:learning_diagram}
\end{figure}

\begin{Alg}{\em (Distributed Online Learning Algorithm for Per-Flow Potential Functions and LMs)}\label{alg:learning}  At each frame $t$,
\textcolor{black}{let $\widetilde{Q}_k$, $Q_k$  and $\hat{\mathbf{H}}$ be the observed  
post-decision QSI, pre-decision QSI and imperfect CSIT.}  Each BS $k$ updates its per-flow potential functions and LMs according to the following online learning update:
%\begin{equation}
%\label{eq:learn_value_f}\begin{array}{lll}
%V_{k}^{t+1}(\widetilde{Q}_{k})&=
%V_{k}^t(\widetilde{Q}_{k})+\kappa_{v}(t)
%\left[g_k\left(\boldsymbol{\gamma}_{k}^t,\hat{\boldsymbol{\chi}}_k,
%\widetilde{\mathbf P}_k^*,\widetilde{\mathbf R}_k^* \right)+V_{k}^{t}(Q_k-U_{k})
%-V_{k}^{t}(\widetilde{Q}_{k}^0)-V_{k}^{t}(\widetilde{Q}_{k})\right]\\
%%\label{eq:learn_q_factor}
%\gamma_{(k,P)}^{t+1}&=\Gamma\left[\gamma_{(k,P)}^{t}+\kappa_{\gamma}(t)(\textcolor{black}{\widetilde P_{k}^*}-P_k^0)
%\right]\\
%\gamma_{(k,C)}^{t+1}&=\Gamma\left[\gamma_{(k,C)}^{t}+\kappa_{\gamma}(t)(\textcolor{black}{\widetilde R_{k,c}^*}-R_{(k,c)}^0)
%\right]
%\end{array},
%\end{equation}
\begin{align}
\begin{cases}
&V_{k}^{t+1}(\widetilde{Q}_{k})=
V_{k}^t(\widetilde{Q}_{k})+\kappa_{v}(t)
\left[g_k\left(\boldsymbol{\gamma}_{k}^t,\hat{\boldsymbol{\chi}}_k,
\widetilde{\mathbf P}_k^*,\widetilde{\mathbf R}_k^* \right)+V_{k}^{t}(Q_k-U_{k})
-V_{k}^{t}(\widetilde{Q}_{k}^0)-V_{k}^{t}(\widetilde{Q}_{k})\right]\\
&\gamma_{(k,P)}^{t+1}=\Gamma\left[\gamma_{(k,P)}^{t}+\kappa_{\gamma}(t)(\textcolor{black}{\widetilde P_{k}^*}-P_k^0)
\right]\\
&\gamma_{(k,C)}^{t+1}=\Gamma\left[\gamma_{(k,C)}^{t}+\kappa_{\gamma}(t)(\textcolor{black}{\widetilde R_{k,c}^*}-R_{(k,c)}^0)
\right]
\end{cases},\label{eq:learn_value_f}
\end{align}
where $\hat{\boldsymbol{\chi}}_k=\{Q_k,\hat{\mathbf{H}}\}$ and $U_{k}$  is the goodput
to MS $k$ given by \eqref{eq:Ubit_k} under $\hat{\mathbf{H}}=\mathbf{H}$.
\textcolor{black}{$\{\widetilde{\mathbf P}_k^*,\widetilde{\mathbf R}_k^* \}=\widetilde \Omega_{k}^*(\hat{\boldsymbol{\chi}}_k)$. $\widetilde P_{k}^*$ and $\widetilde R_{k,c}^*$} are the power and backhaul consumption of BS $k$ given by $\widetilde \Omega_{k}^*(\hat{\boldsymbol{\chi}}_k)$.
$\Gamma[\cdot]$ is the projection onto an
interval $[0,B]$ for some large constant $B>0$. $\{\kappa_{v}(t)\}$
and $\{\kappa_{\gamma}(t)\}$ are the step size sequences satisfying
the following conditions: 
$\kappa_{v}(t)\geq 0,\sum_t\kappa_{v}(t)=\infty, \  \kappa_{\gamma}(t)\geq 0,
\sum_{t}\kappa_{\gamma}(t) = \infty,\sum_{t}((\kappa_{v}(t))^2+(\kappa_{\gamma}(t))^2)<\infty,
\frac{\kappa_{\gamma}(t)}{\kappa_{v}(t)}\to 0.$~\hfill\IEEEQED
\end{Alg}

\begin{Rem}[Features of  Algorithm \ref{alg:learning}]
Algorithm \ref{alg:learning} only requires local observations of
$\{\widetilde{Q}_{k},Q_k\}$ and $\hat{\mathbf{H}}$ at each BS $k$.
Furthermore, both the per-flow potential \textcolor{black}{functions} and the LMs are updated
simultaneously and distributively at each BS. ~\hfill \IEEEQED
\end{Rem}

In the following, we  establish the convergence proof of \textcolor{black}{Algorithm} \ref{alg:learning}. For given per-flow potential function vector $\mathbf{V}_{k}=\left(V_k(\widetilde Q_k)\right)_{\widetilde Q_k=0,1,\cdots, N_Q}$ and   LMs
$\boldsymbol{\gamma}_{k}$, define a
mapping $T_{k}:\mathbb{R}^{N_Q+1}\to\mathbb{R}$ for the post-decision state $\widetilde{Q}_k$ as follows: $T_{k}(\widetilde{Q}_{k}; \boldsymbol{\gamma}_k,\mathbf{V}_{k})=\text{R.H.S. of \eqref{eq:bellman_perflow}}$. 
%\begin{eqnarray}
%\label{eq:T_{k}}
%T_{k}(\widetilde{Q}_{k}; \boldsymbol{\gamma}_k,\mathbf{V}_{k})=
%\sum\limits_{A_{k}}\Pr\{A_{k}\}\Big\{\min_{{\red \widetilde \Omega_{k}}(Q_k)}\big[
%\widetilde{g}_k(\boldsymbol{\gamma}_{k},Q_{k},{\red \widetilde \Omega_{k}}(Q_{k}))+\sum\nolimits_{\widetilde{Q}_{k}^{\prime}}\textcolor{black}{\Pr\{\widetilde{Q}_{k}^{\prime}|Q_{k},{\red \widetilde \Omega_{k}}(Q_{k})\}}V_{k}(\widetilde{Q}_{k}^{\prime})
%\big]\Big\}
%\end{eqnarray}
\textcolor{black}{Denote $\mathbf{T}_{k}(\boldsymbol{\gamma}_{k}, \mathbf{V}_{k})=\left(T_{k}(\widetilde{Q}_{k};\boldsymbol{\gamma}_k),\mathbf{V}_{k}\right)_{\widetilde Q_k=0,1,\cdots, N_Q}$.}
%The vector form of the mapping $\mathbf{T}_{k}:
%\mathbb{R}^{N_Q+1}\to \mathbb{R}^{N_Q+1}$ is
%given by:
%\begin{equation}
%\label{eq:Tk_vector}
%\mathbf{T}_{k}(\boldsymbol{\gamma}_{k}, \mathbf{V}_{k})=
%\widetilde{\mathbf{g}}_{k}(\boldsymbol{\gamma}_{k})+{\red\mathbb{P}}_{k} \mathbf{V}_k,
%\end{equation}
%where ${\red\mathbb{P}}_{k}$ is a $(N_Q+1)\times(N_Q+1)$ transition matrix
%for the post-decision state queue state of user $k$.
Since we
have two different step size sequences $\{\kappa_{v}(t)\}$ and
$\{\kappa_{\gamma}(t)\}$ with $\kappa_{\gamma}(t) =
o(\kappa_{v}(t))$, the {\em per-flow potential} \textcolor{black}{function} updates and the LM
updates are done simultaneously but over two different timescales. \textcolor{black}{The convergence analysis can be
established over two timescales separately.}
Specifically, during the per-flow potential  function update (timescale I), we have
$\gamma_{(k,C)}^{t+1}-\gamma_{(k,C)}^{t}=O(\kappa_{\gamma}(t))=
o(\kappa_{v}(t))$ and
$\gamma_{(k,P)}^{t+1}-\gamma_{(k,P)}^{t}=O(\kappa_{\gamma}(t))=
o(\kappa_{v}(t))$ for all $k\in \mathcal K$. Therefore, the LMs appear to be
quasi-static  during the per-flow potential function update
in \eqref{eq:learn_value_f}\textcolor{black}{\cite[Chap. 6]{Borkar:2008}}. 
\begin{Lem}[Convergence of Per-flow Potential Function Update (Timescale I)]\label{lem:potential_converge} \textcolor{black}{For given $\boldsymbol{\gamma}_k$},
the iterations of the per-flow potential functions
$ \mathbf{V}_{k}^t$ in   Algorithm
\ref{alg:learning}  converge almost surely to the fixed point of
the per-flow fixed point equation in \eqref{eq:bellman_perflow},
i.e.,
$\lim_{t\rightarrow\infty} \mathbf{V}_{k}^t= \mathbf{V}_{k}^{\infty}$ for all $
k\in \mathcal K$,  where $V_{k}^{\infty}(\widetilde{Q}_{k})$  satisfies:
%is a monotonic increasing function :
\begin{equation}
\label{eq:con_value}
 \mathbf{V}_{k}^{\infty}+V_{k}^{\infty}(\widetilde{Q}_{k}^0)\mathbf{e}=
\mathbf{T}_{k}(\boldsymbol{\gamma}_k, \mathbf{V}_{k}^{\infty}).
\end{equation}
\textcolor{black}{$\mathbf e$ denotes the  $(N_Q+1)$-dimension vector with all-one elements.}
\end{Lem}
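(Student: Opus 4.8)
The plan is to recognize the update in the first line of \eqref{eq:learn_value_f} as a stochastic approximation iteration of Robbins--Monro type that tracks the deterministic ODE $\dot{\mathbf{V}}_k = \mathbf{T}_k(\boldsymbol{\gamma}_k,\mathbf{V}_k) - V_k(\widetilde{Q}_k^0)\mathbf{e} - \mathbf{V}_k$, and to invoke the standard two-timescale stochastic approximation machinery \cite[Chap. 6]{Borkar:2008}. First I would fix the LMs $\boldsymbol{\gamma}_k$ as quasi-static (justified by $\kappa_\gamma(t)=o(\kappa_v(t))$, as already argued in the text preceding the lemma), so that on timescale I only the potential-function component moves. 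Then I would write the update as $\mathbf{V}_k^{t+1} = \mathbf{V}_k^t + \kappa_v(t)\big[\mathbf{T}_k(\boldsymbol{\gamma}_k,\mathbf{V}_k^t) - V_k^t(\widetilde{Q}_k^0)\mathbf{e} - \mathbf{V}_k^t + \delta M_k^{t+1}\big]$, where $\delta M_k^{t+1}$ is the difference between the single-sample realization (using the observed $\hat{\mathbf{H}}$, $A_k$ and the induced goodput $U_k$) and its conditional expectation given the history; by construction this is a martingale-difference sequence, and since the QSI lives in the finite set $\{0,\dots,N_Q\}$ and the CSI/arrival alphabets are finite (Assumptions 1--3), the per-stage cost $g_k$ and the one-step values are uniformly bounded, so $\mathbb{E}[\|\delta M_k^{t+1}\|^2 \mid \mathcal{F}_t] \le C(1+\|\mathbf{V}_k^t\|^2)$, and the martingale noise condition holds.

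The core analytic step is to show that the ODE limit has a unique globally asymptotically stable equilibrium, namely the fixed point $\mathbf{V}_k^\infty$ of \eqref{eq:con_value}. Here I would follow the argument used in the proof of Theorem 2 (Appendix B): the per-flow operator $T_k$ is a genuine Bellman operator for a communicating/unichain average-cost MDP on the finite post-decision state space, hence it is monotone and non-expansive in the sup-norm, and after the subtraction of the reference-state term it becomes a contraction (or at least has a unique fixed point up to the normalization $V_k(\widetilde{Q}_k^0)$) under the unichain assumption on $\widetilde\Omega_k^*$. Existence and uniqueness of the solution $(\theta_k,\mathbf{V}_k^\infty)$ of the average-cost optimality equation then follows from standard finite-state average-cost MDP theory, and the associated ODE $\dot{\mathbf{V}}_k = \mathbf{T}_k(\boldsymbol{\gamma}_k,\mathbf{V}_k) - V_k(\widetilde{Q}_k^0)\mathbf{e} - \mathbf{V}_k$ is globally asymptotically stable with that unique equilibrium — this can be certified via a sup-norm Lyapunov-type argument exploiting the non-expansiveness. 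I would also verify the stability (boundedness) of the iterates $\sup_t\|\mathbf{V}_k^t\| < \infty$ a.s., either by the projection implicit in the boundedness of $g_k$ and $U_k$ or by a Borkar--Meyn scaling argument on the scaled ODE $\dot{\mathbf{V}}_k = T_k^\infty(\mathbf{V}_k) - V_k(\widetilde{Q}_k^0)\mathbf{e} - \mathbf{V}_k$.

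With the martingale-noise bound, the boundedness of iterates, the step-size conditions $\sum_t \kappa_v(t)=\infty$, $\sum_t \kappa_v(t)^2<\infty$, and the global asymptotic stability of the ODE equilibrium, Theorem 2 of \cite[Chap. 2]{Borkar:2008} (the Kushner--Clark / ODE method) yields $\mathbf{V}_k^t \to \mathbf{V}_k^\infty$ almost surely, which is exactly \eqref{eq:con_value}. I expect the main obstacle to be the careful verification that the reference-state-normalized operator $\mathbf{V}_k\mapsto\mathbf{T}_k(\boldsymbol{\gamma}_k,\mathbf{V}_k)-V_k(\widetilde{Q}_k^0)\mathbf{e}$ has a \emph{unique} fixed point and induces a \emph{globally} (not merely locally) asymptotically stable ODE — this is where the unichain hypothesis on the policy $\widetilde\Omega_k^*$ and the finiteness of the state space are essential, and it is the part that must be imported from (or proved in parallel with) the average-cost MDP existence theory used in Appendix B.
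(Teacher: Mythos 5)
The paper gives no proof of this lemma at all---it simply states that ``the proof can be extended from [Salodkar's thesis, Chap.~3]''---and your argument is precisely the relative-value-iteration stochastic-approximation/ODE machinery (martingale-difference noise bound, boundedness of iterates, non-expansiveness of the Bellman operator, unichain-based uniqueness of the normalized fixed point, global asymptotic stability of $\dot{\mathbf{V}}_k=\mathbf{T}_k(\boldsymbol{\gamma}_k,\mathbf{V}_k)-V_k(\widetilde{Q}_k^0)\mathbf{e}-\mathbf{V}_k$) that this citation invokes, so your approach is essentially the paper's. The one point to tighten is that \eqref{eq:learn_value_f} updates only the component of $\mathbf{V}_k^t$ corresponding to the currently visited post-decision state, so the synchronous vector form you wrote should be replaced by the asynchronous stochastic-approximation framework (with the usual comparable-visitation condition on the recurrent chain), under which the same ODE limit and conclusion hold.
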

\begin{proof}
The proof can be extended from\textcolor{black}{\cite[Chap. 3]{Thesis:Salodkar}} and is omitted due to page limit.
\end{proof}

During the LM update (timescale II), we have
$\lim_{t\to\infty}|V_k^t-V_k^{\infty}(\boldsymbol{\gamma}_k^t)|=0$
w.p.1. for all $
k\in \mathcal K$ \textcolor{black}{\cite[Chap. 6]{Borkar:2008}}. Hence, during the LM update in
\eqref{eq:learn_value_f}, the per-flow potential \textcolor{black}{functions} can be seen as almost
equilibrated. The convergence of the LM update is summarized below.

\begin{Lem}[Convergence of LM Update (Timescale II)]\label{lem:LM_converge}
The iterations of the LMs
$\boldsymbol{\gamma}^t=\{\gamma_{(k,P)}^t,\gamma_{(k,C)}^t:
k\in \mathcal K\}$ in  Algorithm \ref{alg:learning}
converge almost surely to \textcolor{black}{the} invariant set:
\begin{equation}
\mathcal{S}_{\gamma}\triangleq\left\{\boldsymbol{\gamma}:
||\boldsymbol{\gamma}-\boldsymbol{\gamma}^*||^2-\delta_1-
\delta_2\leq0\right\},
\end{equation}
as $t\to\infty$, for some positive \textcolor{black}{constants}
$\delta_1=O(P_\text{cct}^2)$ and $\delta_2=O(\epsilon^2)$, where
$\epsilon=\sup_{\{\mathbf{H},\hat{\mathbf{H}}:\mathbf{H}\neq\hat{\mathbf{H}}\}}\Pr\{\mathbf{H}|\hat{\mathbf{H}}\}$
denotes the CSIT quality.
%$||\hat{\mathbf{H}}-\mathbf{H} ||^2\leq\epsilon^2$.
$\boldsymbol{\gamma}^*=\{\gamma_{(k,P)}^*,\gamma_{(k,C)}^*:k\in \mathcal K\}$ is the
dual optimal solution to the dual problem in \eqref{eqn:L-dual-prob}. 
%that satisfies the \textcolor{black}{average} power and backhaul consumption constraints in \eqref{eq:pwr_con} and \eqref{eq:bkh_con}, respectively. 
~\hfill\IEEEQED
\end{Lem}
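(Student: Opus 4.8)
The plan is to read the LM recursion in \eqref{eq:learn_value_f} as the \emph{slow} timescale of a two-timescale stochastic approximation and invoke the ODE method. First I would use Lemma~\ref{lem:potential_converge} together with $\kappa_\gamma(t)=o(\kappa_v(t))$: along the LM iterates the per-flow potential vectors are essentially equilibrated, $\mathbf V_k^t\approx \mathbf V_k^\infty(\boldsymbol\gamma^t)$, so the power/rate actions driving the LM update are (a measurable selection of) $\widetilde\Omega_k^*$ evaluated at the current $\boldsymbol\gamma^t$. The iterates $\boldsymbol\gamma^t$ stay in the compact box $[0,B]^{2K}$ because of the projection $\Gamma[\cdot]$, and the driving quantities $\widetilde P_k^*-P_k^0$, $\widetilde R_{k,c}^*-R_{(k,c)}^0$ have bounded second moments: the queue space is finite, and for $\boldsymbol\gamma$ in the box the per-flow optimal power and scheduled rate are bounded (transmitting more than what empties a buffer of size $N_Q$ is strictly suboptimal). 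With the Robbins--Monro step-size conditions already imposed, the standard results for projected two-timescale stochastic approximation \cite[Chap.~6]{Borkar:2008} imply that $\{\boldsymbol\gamma^t\}$ converges a.s.\ to an internally chain-transitive invariant set of the projected mean-field ODE
\begin{equation}
\dot{\boldsymbol\gamma}=\Pi_{[0,B]^{2K}}\big(\bar h(\boldsymbol\gamma)\big),\qquad
\bar h_{(k,P)}(\boldsymbol\gamma)=\mathbb E^{\widetilde\Omega^*(\boldsymbol\gamma)}[P_k]-P_k^0,\qquad
\bar h_{(k,C)}(\boldsymbol\gamma)=\mathbb E^{\widetilde\Omega^*(\boldsymbol\gamma)}[R_{(k,c)}]-R_{(k,c)}^0,
\end{equation}
where $\mathbb E^{\widetilde\Omega^*(\boldsymbol\gamma)}[\cdot]$ is taken under the stationary distribution of the chain induced by the per-flow policy, and $\Pi_{[0,B]^{2K}}$ is the boundary projection of the vector field.

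Next I would identify the limit set. By Lemma~\ref{lem:V=V_k}, when $P_{\text{cct}}=0$ and $\epsilon=0$ the linear approximation is exact, $\widetilde\Omega^*(\boldsymbol\gamma)=\Omega^*(\boldsymbol\gamma)$, so (by Danskin's theorem applied to $G(\boldsymbol\gamma)=\min_\Omega L_\beta(\Omega,\boldsymbol\gamma)$, which is concave as a pointwise infimum of affine functions of $\boldsymbol\gamma$) the vector $\bar h(\boldsymbol\gamma)$ is exactly a subgradient of $G$ in the constraint coordinates, and the ODE is a projected subgradient ascent for $\max_{\boldsymbol\gamma\succeq0}G(\boldsymbol\gamma)$. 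Using $\|\boldsymbol\gamma-\boldsymbol\gamma^*\|^2$ as a Lyapunov function together with the concavity inequality $\langle\boldsymbol\gamma-\boldsymbol\gamma^*,\partial G(\boldsymbol\gamma)\rangle\le G(\boldsymbol\gamma)-G(\boldsymbol\gamma^*)\le0$ and complementary slackness at $\boldsymbol\gamma^*$ (to handle the boundary projection), every trajectory converges to $\boldsymbol\gamma^*$, giving $\delta_1=\delta_2=0$. For the general case I would decompose $\bar h(\boldsymbol\gamma)=\partial G(\boldsymbol\gamma)+\boldsymbol e(\boldsymbol\gamma)$, where $\boldsymbol e(\boldsymbol\gamma)$ collects the perturbation from (i) the per-flow value decomposition error, which vanishes as $P_{\text{cct}}\to0$, and (ii) the residual-interference/packet-error effect of ZF precoding under imperfect CSIT, which vanishes as $\epsilon\to0$. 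A Lipschitz/continuity argument for the stationary distribution with respect to the policy, plus a sensitivity bound on $\widetilde\Omega^*$ relative to $\Omega^*$ (both resting on Lemma~\ref{lem:V=V_k} and finiteness of the CSI/QSI spaces), yields $\|\boldsymbol e(\boldsymbol\gamma)\|\le c_1P_{\text{cct}}+c_2\epsilon$ uniformly on $[0,B]^{2K}$.

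Finally I would close the Lyapunov argument in the perturbed regime. With the same Lyapunov function,
\begin{equation}
\frac{d}{dt}\|\boldsymbol\gamma-\boldsymbol\gamma^*\|^2
=2\big\langle\boldsymbol\gamma-\boldsymbol\gamma^*,\Pi_{[0,B]^{2K}}(\bar h(\boldsymbol\gamma))\big\rangle
\le 2\langle\boldsymbol\gamma-\boldsymbol\gamma^*,\partial G(\boldsymbol\gamma)\rangle+2\|\boldsymbol\gamma-\boldsymbol\gamma^*\|\,\|\boldsymbol e(\boldsymbol\gamma)\|,
\end{equation}
and invoking the local quadratic-growth (error-bound) property of the concave dual near its maximizer, $\langle\boldsymbol\gamma-\boldsymbol\gamma^*,\partial G(\boldsymbol\gamma)\rangle\le-\mu\|\boldsymbol\gamma-\boldsymbol\gamma^*\|^2$, the right-hand side is strictly negative whenever $\mu\|\boldsymbol\gamma-\boldsymbol\gamma^*\|>c_1P_{\text{cct}}+c_2\epsilon$. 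Hence every trajectory enters and remains in $\{\|\boldsymbol\gamma-\boldsymbol\gamma^*\|^2\le\delta_1+\delta_2\}$ with $\delta_1=O(P_{\text{cct}}^2)$, $\delta_2=O(\epsilon^2)$ (absorbing the cross term via $2ab\le a^2+b^2$), so $\mathcal S_\gamma$ is the unique internally chain-transitive invariant set and $\boldsymbol\gamma^t\to\mathcal S_\gamma$ a.s. I expect the main obstacle to be step (ii)--(iii): rigorously quantifying $\boldsymbol e(\boldsymbol\gamma)$, i.e.\ showing the gap between the per-flow drift $\bar h(\boldsymbol\gamma)$ and the true dual subgradient $\partial G(\boldsymbol\gamma)$ is first-order in $P_{\text{cct}}$ and $\epsilon$ (which needs a sensitivity analysis of both the optimal per-flow policy and the induced stationary measure), and establishing the quadratic-growth property of $G$ that converts this first-order drift bound into the stated second-order radius for $\mathcal S_\gamma$.
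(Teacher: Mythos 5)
Your proposal follows essentially the same route as the paper's Appendix E: two-timescale stochastic approximation reduces the LM update to the mean ODE driven by the expected constraint violations under $\widetilde\Omega^*$; Lemma~\ref{lem:V=V_k} identifies this drift with $\nabla G(\boldsymbol\gamma)$ when $P_{\text{cct}}=0$ and $\epsilon=0$ (the paper uses the envelope theorem where you use Danskin, which is the same device); a perturbation term of norm $O(P_{\text{cct}})+O(\epsilon)$ is split off for the general case; and a quadratic Lyapunov argument yields the invariant ball of squared radius $\delta_1+\delta_2$. The only cosmetic difference is that the paper asserts exponential stability of $\boldsymbol\gamma_0^*$ and invokes a converse Lyapunov theorem to obtain the function $\mathcal L$ with $\tfrac{d\mathcal L}{d\boldsymbol\gamma}f_{\Omega^*}(\boldsymbol\gamma)\le -C_3\|\boldsymbol\gamma-\boldsymbol\gamma_0^*\|^2$, whereas you posit the equivalent quadratic-growth property of $G$ directly --- both rest on the same unproven local strong-concavity assumption, and the obstacle you flag (quantifying the drift perturbation) is exactly the step the paper handles via its steady-state/potential-function expansion of $\boldsymbol\phi_{(k,P)}$ and $\boldsymbol\phi_{(k,C)}$.
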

\begin{proof}
Please refer to Appendix E.
\end{proof}

%Since the iterations of the LMs will converge almost surely to an
%invariant set finally, there is no loss of optimality by using the
%boundness projection $\Gamma[\cdot]$ in the LMs update.

\subsection{Distributed Online Power and Rate Control via  Stochastic {\red Partial} Gradient Algorithm}

Substituting  \eqref{eq:linear_value} into the R.H.S. of \eqref{eq:bellman_post}, 
the control policy  under linear approximation in \eqref{eq:linear_value} \textcolor{black}{can be obtained} by solving the  following per-stage optimization problem.

\begin{Prob}[Per-Stage Optimization]\label{prob:per_stage} \textcolor{black}{For any given LMs $\boldsymbol \gamma$,} under the linear approximation in \eqref{eq:linear_value},
the online control action (for an observed state realization
$\hat{\boldsymbol{\chi}}$) is given by:
{\red
\begin{equation}\label{eq:per_stage}
\begin{array}{l}
\textcolor{black}{\hat\Omega^*(\hat{\boldsymbol{\chi}})=\{\hat\Omega_P^*(\hat{\boldsymbol{\chi}}),\hat\Omega_R^*(\hat{\boldsymbol{\chi}})\}}
=\argmin\limits_{\mathbf{P},\mathbf{R}}h^{\hat{\boldsymbol{\chi}}}(\mathbf{P},\mathbf{R})
\end{array}
\end{equation}
where
$h^{\hat{\boldsymbol{\chi}}}(\mathbf{P},\mathbf{R})=\sum_{k\in \mathcal K}\mathbb{E}\Big[\gamma_{(k,P)}
\big( P_k^{tx}+P_{\text{cct}}\mathbf{1}( P_k^{tx}>0)\big)+
\gamma_{(k,C)}  R_{(k,c)}+\overline{\mathbf{1}}_{(k,c)}\overline{\mathbf{1}}_{(k,p)}V_k\big(
Q_k \big)
+\mathbf{1}_{(k,c)}\mathbf{1}_{(k,p)}V_k\big( Q_k - 
R_{k}
\big)+\overline{\mathbf{1}}_{(k,c)}\mathbf{1}_{(k,p)}V_k\big(
Q_k - R_{(k,p)})
\big)+\mathbf{1}_{(k,c)}\overline{\mathbf{1}}_{(k,p)}V_k\big(
Q_k - R_{(k,c)} \big)\Big| \hat{\mathbf{H}}\Big]$}. $\mathbf{1}_{(k,c)}=\mathbf{1}\big(R_{(k,c)}\leq
C_{(k,c)}\big)$ and 
$\overline{\mathbf{1}}_{(k,c)}=1-\mathbf{1}_{(k,c)}$. \textcolor{black}{$\mathbf{1}_{(k,p)}$ and $\overline{\mathbf{1}}_{(k,p)}$ are defined in a similar way}.~\hfill\IEEEQED
\end{Prob}

\textcolor{black}{Problem \ref{prob:per_stage} is not tractable as $h^{\hat{\boldsymbol{\chi}}}( \mathbf{P}, \mathbf{R})$ is not  differentiable due to the indicator functions.}
 To solve Problem   \ref{prob:per_stage}, we first use the {\em logistic function}  $f^{\eta}(x,y)=\frac{1}{1+e^{(x-y)\eta}}$  as a smooth approximation for the indicator function $\mathbf 1 (x\leq y)$ in \eqref{eq:per_stage}, \textcolor{black}{ i.e., $
f^{\eta}(x,y)\approx\mathbf{1}(x\leq
y),\forall x,y\in\mathbb{R}^+$ \textcolor{black}{\cite[Chap. 3]{Kenneth:2003}, \cite[Chap. 1]{Kanwal:1998}}}. \textcolor{black}{Note that the approximation is asymptotically accurate  as  $\eta\to \infty$.}  Then, we apply the gradient search method. 
Specifically, the gradient of $h^{\hat{\boldsymbol{\chi}}}(\mathbf{P}, \mathbf{R})$ w.r.t.  a control action
$ a_k\in\{ P_{(k,c)}^i, P_{(k,p)}^i, R_{(k,c)}, R_{(k,p)}:\forall i\}$ of
BS $k$ is given by:
\begin{align}
&\frac{\partial h^{\hat{\boldsymbol{\chi}}}\left(\mathbf{P},\mathbf{R}\right)}{\partial
a_k}\label{eq:gradient}\\
\approx&\mathbb{E}\Big[\underbrace{\frac{\partial
\big[\gamma_{(k,P)} (P_k^{tx}+P_{\text{cct}}f^{\eta}(0, P_k^{tx}) )+
\gamma_{(k,C)} R_{(k,c)} \big] }{\partial  a_k}+\frac{
\partial g_{k}^{\hat{\boldsymbol{\chi}}}(\mathbf{P},\mathbf{R}_k,\mathbf{H},\mathbf V_k)}{\partial a_k}+\textcolor{black}{\sum_{n\in \mathcal K,  n\neq k}}\frac{\gamma_{(n,p)} P_{(k,c),n}}{\partial  a_k} }_{\text{$\triangleq y(a_k)$,  stochastic partial gradient}}\nonumber\\
&+\underbrace{\textcolor{black}{\sum_{n\in \mathcal K,  n\neq k}}\left(\gamma_{(n,p)}
P_{\text{cct}}\frac{\partial f^{\eta}(0, P_n^{tx}) }{\partial
a_k}+\frac{\partial g_{n}^{\hat{\boldsymbol{\chi}}}(\mathbf{P}, \mathbf{R}_n,\mathbf{H},\mathbf V_k)}{\partial a_k}\right)}_{\text{unknown to BS $k$}}\Big| \hat{\mathbf{H}} \Big],\nonumber
\end{align}
\begin{align}
g_{k}^{\hat{\boldsymbol{\chi}}}\left(\mathbf{P}, \mathbf{R}_k,\mathbf{H},\mathbf V_k\right)=&\left(1-f^{\eta}\left( R_{(k,c)},C_{(k,c)}\right)\right)\left(1-f^{\eta}\left( R_{(k,p)},C_{(k,p)}\right)\right)V_k\left(
Q_k
\right)\nonumber\\
&+\left(1-f^{\eta}\left(R_{(k,c)},C_{(k,c)}\right)\right)f^{\eta}\left(R_{(k,p)},C_{(k,p)}\right)V_k\left(
Q_k - R_{(k,p)} \right)\nonumber\\
&
+f^{\eta}\left( R_{(k,c)},C_{(k,c)}\right)\left(1-f^{\eta}\left( R_{(k,p)},C_{(k,p)}\right)\right)V_k\left(
Q_k - R_{(k,c)}
\right)\nonumber\\
&+f^{\eta}\left( R_{(k,c)},C_{(k,c)}\right) f^{\eta}\left( R_{(k,p)},C_{(k,p)}\right)V_k\left( Q_k -
	 R_{k} \right).\nonumber
\end{align}

The gradient $\frac{\partial
h^{\hat{\boldsymbol{\chi}}}\left(\mathbf{P}, \mathbf{R}\right)}{\partial
a_k}$ in \eqref{eq:gradient} cannot be calculated locally at each BS due to the following reasons.  First,
the second term in \eqref{eq:gradient}  is unknown \textcolor{black}{to BS $k$} under the distributed implementation requirement. Second, the expectation $\mathbb{E}$ cannot be computed at BS $k$ without knowledge of the CSIT error kernels under Assumption \ref{ass:csit_model}. In the following, we  propose a \textcolor{black}{distributed}
online stochastic partial  gradient algorithm to obtain the  power and rate control.
\begin{Alg} \textcolor{black}{[Distributed Online  Stochastic Partial Gradient Algorithm for Power and Rate Control]} At each frame $t$, let
$\hat{\boldsymbol{\chi}}_k=\{Q_k,\hat{\mathbf{H}}\}$ denote the observation at each BS $k$.  Each BS $k$ takes control actions $  a_k^t(\hat{\boldsymbol{\chi}}_k)$, \textcolor{black}{obtains $\{\gamma_{(n,P)}^t:n\in \mathcal K, n\neq k\}$ from other BSs through backhaul,}  
and updates  the control  according to the following stochastic partial gradient update:
\begin{align}
& a_k^{t+1}(\hat{\boldsymbol{\chi}}_k)=\left[
 a_k^{t}(\hat{\boldsymbol{\chi}}_k)-
\textcolor{black}{\kappa_a(t)}y\left ( a_k^{t}(\hat{\boldsymbol{\chi}}_k)\right)\right]^+,\label{eqn:update-ak}
\end{align}
where $   a_k\in\{P_{(k,c)}^{i}, P_{(k,p)}^{i}, R_{(k,c)}, R_{(k,p)}:\forall i\}$ and $\{\textcolor{black}{\kappa_a(t)}\}$ is the step size sequence satisfying
the following conditions: $\textcolor{black}{\kappa_a(t)}\geq 0,\sum_t\textcolor{black}{\kappa_a(t)}=\infty, \sum_{t}(\textcolor{black}{\textcolor{black}{\kappa_a(t)}})^2<\infty.$
 \label{Alg:gra}~\hfill\IEEEQED
\end{Alg}
Table \ref{Tab:y-ak} illustrates the detailed  expressions of $y(  a_k^t)$ in \eqref{eqn:update-ak} \textcolor{black}{at frame $t$}.
\begin{table}
\begin{center}
\begin{tabular}{|l|l|}
\hline
  Control Actions $  a_k^t$ & Stochastic Partial Gradient $y(   a_k^t)$ \\
  \hline
$ P_{(k,p)}^{i,t}$ & $\gamma_{(k,P)}^t+ \gamma_{(k,P)}^t
P_{\text{cct}}\frac{\partial f^{\eta}(0, P_k^{tx,t}) }{\partial
P_{(k,p)}^{i,t}}+\frac{
\partial
g_{k}^{\hat{\boldsymbol{\chi}}}(\mathbf{P}^t, \mathbf{R}_k^t,\mathbf{H},\mathbf V_k^t)}{\partial P_{(k,p)}^{i,t}}$\\
$R_{(k,p)}^{t}$ & $\frac{
\partial
g_{k}^{\hat{\boldsymbol{\chi}}}(\mathbf{P}^t,\mathbf{R}_k^t,\mathbf{H},\mathbf V_k^t )}{ \partial  R_{(k,p)}^t}$\\
$P_{(k,c)}^{i,t}$ & $\gamma_{(k,P)}^t\alpha_{(k,k)}^i+\sum_{n\in \mathcal K,  n\neq k}\gamma_{(n,P)}^t\alpha_{(k,n)}^i+\gamma_{(k,P)}^t
P_{\text{cct}}\frac{\partial f^{\eta}(0,P_k^{tx,t}) }{\partial P_{(k,c)}^{i,t}}+ \frac{
\partial
g_{k}^{\hat{\boldsymbol{\chi}}}(\mathbf{P}^t,  \mathbf{R}_k^t,\mathbf{H},\mathbf V_k^t)}{\partial P_{(k,c)}^{i,t}}$ \\
$ R_{(k,c)}^{t}
$ & $\gamma_{(k,C)}^t+ \frac{
\partial
g_{k}^{\hat{\boldsymbol{\chi}}}(\mathbf{P}^t,  \mathbf{R}_k^t,\mathbf{H},\mathbf V_k^t)}{\partial R_{(k,c)}^t}$\\
  \hline
\end{tabular}
  \caption{Expressions of $y(  a_k^t)$ \textcolor{black}{at frame $t$}  for specific control actions.}
  \label{Tab:y-ak}
\end{center}
\end{table}

 The following lemma summarizes the \textcolor{black}{convergence of Algorithm \ref{Alg:gra}}.
\begin{Lem}[Convergence of \textcolor{black}{Algorithm \ref{Alg:gra}}]\label{lem:action} Let
\textcolor{black}{$\hat {\mathcal{W}}^*$} be the set of local minimum points \textcolor{black}{$\hat{\mathbf{W}}^*=\{\hat{\mathbf{P}}^*,\hat{\mathbf{R}}^*\}$ } of  Problem \ref{prob:per_stage}.  The iterations for $\mathbf{W}^{t}=\{\mathbf{P}^t,\mathbf{R}^t\}$ in Algorithm \ref{Alg:gra} converge
almost surely to \textcolor{black}{the} invariant set:
\begin{equation}
\mathcal{S}_w\triangleq \left\{\mathbf{W}:
||\mathbf{W}-\hat{\mathbf{W}}^*||^2-\delta_1-\delta_2 \leq0 \right\},
\end{equation}
as $t\to\infty$, for some \textcolor{black}{local minimum point} \textcolor{black}{$\hat{\mathbf{W}}^*\in\hat{\mathcal{W}}^*$} and
positive constants $\delta_1=O(P_\text{cct}^2)$ and
$\delta_2=O(\epsilon^2)$, where
$\epsilon=\sup_{\{\mathbf{H},\hat{\mathbf{H}}:\mathbf{H}\neq\hat{\mathbf{H}}\}}\Pr\{\mathbf{H}|\hat{\mathbf{H}}\}$
 denotes the CSIT quality. ~\hfill\IEEEQED
\end{Lem}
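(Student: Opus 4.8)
The plan is to recognize Algorithm \ref{Alg:gra} as a \emph{projected asynchronous stochastic approximation} recursion whose driving vector field is a biased version of $-\nabla h^{\hat{\boldsymbol{\chi}}}$, identify its limiting ODE, and then relate the invariant set of that ODE to the stationary points of Problem \ref{prob:per_stage} through the magnitude of the bias. Throughout, I would hold the LMs $\boldsymbol{\gamma}$ and the per-flow potential vectors $\mathbf{V}_k$ fixed, since they evolve on strictly slower timescales (cf. the step-size conditions and Lemmas \ref{lem:potential_converge}--\ref{lem:LM_converge}) and are therefore quasi-static from the viewpoint of the power/rate update, and I would work with the logistically smoothed objective $h^{\hat{\boldsymbol{\chi}}}_{\eta}$, obtained from $h^{\hat{\boldsymbol{\chi}}}$ by replacing each indicator $\mathbf{1}(x\le y)$ with $f^{\eta}(x,y)$; this function is continuously differentiable and is exactly the one whose partial gradients Algorithm \ref{Alg:gra} attempts to descend.

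For a generic control coordinate $a_k\in\{P_{(k,c)}^i,P_{(k,p)}^i,R_{(k,c)},R_{(k,p)}\}$, the starting point is the decomposition, read off from \eqref{eq:gradient},
\begin{align}
y(a_k^t)=\frac{\partial h^{\hat{\boldsymbol{\chi}}}_{\eta}}{\partial a_k}-b_k^t+M_k^t,\nonumber
\end{align}
where $b_k^t=\mathbb{E}\big[\sum_{n\ne k}\big(\gamma_{(n,p)}P_{\text{cct}}\frac{\partial f^{\eta}(0,P_n^{tx})}{\partial a_k}+\frac{\partial g_n^{\hat{\boldsymbol{\chi}}}}{\partial a_k}\big)\,\big|\,\hat{\mathbf{H}}\big]$ is precisely the ``unknown to BS $k$'' part of the true partial gradient and $M_k^t=y(a_k^t)-\mathbb{E}[y(a_k^t)\,|\,\hat{\mathbf{H}}]$ is a zero-mean, bounded martingale-difference term (it arises because BS $k$ uses the realized CSI $\mathbf{H}$ in place of the unknown conditional law $\Pr\{\mathbf{H}\,|\,\hat{\mathbf{H}}\}$). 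Since the observed state space $\{\hat{\boldsymbol{\chi}}_k\}$ is finite and each state recurs infinitely often under the unichain assumption, since the iterates stay in a compact set almost surely (the nonnegativity projection together with the growing penalties $\gamma_{(k,P)}P_k^{tx}$, $\gamma_{(k,C)}R_{(k,c)}$, or simply an imposed peak-power/peak-rate box, keeps $\mathbf{W}^t$ bounded), and since $\sum_t\kappa_a(t)=\infty$, $\sum_t\kappa_a(t)^2<\infty$, the stochastic-approximation ODE method \cite{Borkar:2008} applies: $\mathbf{W}^t$ converges almost surely to an internally chain-transitive invariant set of the projected ODE $\dot{\mathbf{W}}=\hat\Pi\!\left(-\nabla h^{\hat{\boldsymbol{\chi}}}_{\eta}(\mathbf{W})+\mathbf{b}(\mathbf{W})\right)$, with $\hat\Pi$ the projection onto the feasible region.

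To pin down that invariant set I would use $h^{\hat{\boldsymbol{\chi}}}_{\eta}$ itself as a Lyapunov function: along interior trajectories $\frac{d}{dt}h^{\hat{\boldsymbol{\chi}}}_{\eta}(\mathbf{W})=-\|\nabla h^{\hat{\boldsymbol{\chi}}}_{\eta}\|^2+\nabla h^{\hat{\boldsymbol{\chi}}}_{\eta}\cdot\mathbf{b}\le-\|\nabla h^{\hat{\boldsymbol{\chi}}}_{\eta}\|\big(\|\nabla h^{\hat{\boldsymbol{\chi}}}_{\eta}\|-\|\mathbf{b}\|\big)$, which is strictly negative whenever $\|\nabla h^{\hat{\boldsymbol{\chi}}}_{\eta}(\mathbf{W})\|>\|\mathbf{b}\|$; hence the invariant set lies in $\{\mathbf{W}:\|\nabla h^{\hat{\boldsymbol{\chi}}}_{\eta}(\mathbf{W})\|\le\|\mathbf{b}\|\}$, and near a non-degenerate isolated local minimizer $\hat{\mathbf{W}}^*\in\hat{\mathcal{W}}^*$ a second-order expansion gives $\|\nabla h^{\hat{\boldsymbol{\chi}}}_{\eta}(\mathbf{W})\|\ge c\|\mathbf{W}-\hat{\mathbf{W}}^*\|$ for some $c>0$, so the limit satisfies $\|\mathbf{W}-\hat{\mathbf{W}}^*\|^2\le c^{-2}\|\mathbf{b}\|^2$. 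It then remains to bound $\|\mathbf{b}\|$: the circuit-power part $\gamma_{(n,p)}P_{\text{cct}}\,\partial f^{\eta}(0,P_n^{tx})/\partial a_k$ is $O(P_{\text{cct}})$, while $\partial g_n^{\hat{\boldsymbol{\chi}}}/\partial a_k$ vanishes identically under perfect CSIT --- BS $k$'s powers enter $g_n^{\hat{\boldsymbol{\chi}}}$ only through MS $n$'s mutual informations $C_{(n,c)},C_{(n,p)}$, which the ZF precoders of Section \ref{sec:pco-mimo} null exactly when $\hat{\mathbf{H}}=\mathbf{H}$ --- and is $O(\epsilon)$ in general, $\epsilon=\sup_{\mathbf{H}\ne\hat{\mathbf{H}}}\Pr\{\mathbf{H}\,|\,\hat{\mathbf{H}}\}$. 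Combining these and absorbing the cross term $O(P_{\text{cct}}\epsilon)\le\frac{1}{2}(P_{\text{cct}}^2+\epsilon^2)$ gives $\|\mathbf{b}\|^2\le O(P_{\text{cct}}^2)+O(\epsilon^2)=\delta_1+\delta_2$, hence $\mathbf{W}^t\to\mathcal{S}_w$ almost surely; letting $\eta\to\infty$ so that $h^{\hat{\boldsymbol{\chi}}}_{\eta}\to h^{\hat{\boldsymbol{\chi}}}$ and its local minimizers converge to those of Problem \ref{prob:per_stage} then closes the argument.

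I expect the bias estimate $\partial g_n^{\hat{\boldsymbol{\chi}}}/\partial a_k=O(\epsilon)$ to be the main obstacle, since it requires explicitly quantifying how the imperfect-CSIT ZF precoders leak residual interference onto the unintended MS and how sensitive the resulting SINRs --- and hence $f^{\eta}$ and $g_n^{\hat{\boldsymbol{\chi}}}$ --- are to BS $k$'s transmit powers, uniformly over the compact iterate region (this is where the CSIT error kernel of Assumption \ref{ass:csit_model} enters quantitatively). A secondary nuisance is the non-convexity of $h^{\hat{\boldsymbol{\chi}}}_{\eta}$, which is why the statement only localizes the limit near $\hat{\mathcal{W}}^*$ and relies on convergence to an invariant set rather than to a single point, and why a non-degeneracy (isolated local minimum) assumption is used for the quadratic lower bound on the gradient. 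Notably, the same two-part bias bound produces the identical $\delta_1,\delta_2$ that appear in Lemma \ref{lem:LM_converge}.
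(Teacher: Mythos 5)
Your proposal is correct and follows essentially the same route as the paper's Appendix F: decompose the stochastic partial gradient into the true gradient of the smoothed per-stage objective, a bias term of order $O(P_{\text{cct}})+O(\epsilon)$ coming from the circuit power and the cross-flow term that vanishes under perfect CSIT, and a martingale-difference noise, then invoke the stochastic-approximation ODE method and a Lyapunov perturbation bound to localize the limit set within $\delta_1+\delta_2$ of a local minimizer. The only cosmetic difference is that you use $h^{\hat{\boldsymbol{\chi}}}_{\eta}$ itself as the Lyapunov function with a local non-degeneracy assumption, whereas the paper reuses the converse-Lyapunov-theorem construction from its Lemma~\ref{lem:LM_converge}; both hinge on the same quadratic sandwich bounds.
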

\begin{proof}
Please refer to Appendix F.
\end{proof}
\textcolor{black}{By Lemma \ref{lem:action},}  Algorithm \ref{Alg:gra} \textcolor{black}{gives the} 
asymptotically local optimal solution at small CSIT errors and $P_{\text{cct}}$.

\textcolor{black}{Finally, we discuss the implementation of Algorithm \ref{Alg:gra} in practical systems.}

\begin{Rem} [\textcolor{black}{Generalized ACK/NAK as MS Feedback for Algorithm \ref{Alg:gra}}]
\textcolor{black}{At each BS $k$},  to compute the stochastic partial gradients in Table \ref{Tab:y-ak}, 
some terms in 
\textcolor{black}{$\frac{
\partial
g_{k}^{\hat{\boldsymbol{\chi}}}(\mathbf{P}^t, \mathbf{R}_k^t,\mathbf{H},\mathbf V_k^t)}{\partial  a_k^t}$ regarding $C_{(k,c)}^t$ and $C_{(k,p)}^t$}  have to be fed back from \textcolor{black}{MS $k$}. Utilizing the property of the logistic function at large $\eta$, the MS feedback can \textcolor{black}{be} substantially simplified. Specifically, for large $\eta$, we have
$$f^{\eta}(x,y))=\frac{1}{1+e^{(x-y)\eta}}\approx\mathbf{1}(x\leq y), \quad J(x-y)\triangleq \frac{-\eta e^{
(x-y)\eta}}{(1+e^{(x-y)\eta})^2}
\approx \frac{\eta}{5}\mathbf{1}
\big(|x-y|\leq \frac{2}{\eta}\big).$$ Note that the approximations are asymptotically accurate  as  $\eta\to \infty$. As a result, we have
\begin{align}
\frac{\partial f^{\eta}(R_{(k,c)},C_{(k,c)})}{\partial
R_{(k,c)}} =&J( R_{(k,c)}-C_{(k,c)}) \approx\frac{\eta}{5}\mathbf{1}
\big(| R_{(k,c)}-C_{(k,c)}|\leq \frac{2}{\eta}\big)\nonumber\\
 \frac{\partial f^{\eta}(R_{(k,c)},C_{(k,c)})}{\partial
 P_{(k,c)}^i}=& J( R_{(k,c)}-C_{(k,c)})
\frac{\sigma_{(k,c)}^i}{(1+\sigma_{(k,c)}^i  P_{(k,c)}^i+I_{(k,c)}^i)\text{ln}2}
\approx\frac{\eta}{5}\mathbf{1}
\big(|R_{(k,c)}-C_{(k,c)}|\leq \frac{2}{\eta}\big)\nonumber
%\\
%\approx&\frac{\eta}{5}\mathbf{1}
%\big(|\hat R_{(k,c)}-C_{(k,c)}|\leq \frac{2}{\eta}\big)\frac{\sigma_{(k,c)}^i}{(1+\sigma_{(k,c)}^i \hat P_{(k,c)}^i+\hat I_{(k,c)}^i)\text{ln}2}\nonumber
\end{align}
Similar notations can be defined for the  private streams with $c$  replaced by $p$.
Based on these
approximations, MS $k$ only needs to feed back a binary vector $$\mathbf{Z}_k=\left\{\mathbf{1}( R_{(k,c)}^t\leq C_{(k,c)}^t), \mathbf{1} \big(|R_{(k,c)}^t-C_{(k,c)}^t|\leq \frac{2}{\eta}\big),\mathbf{1}(R_{(k,p)}^t\leq C_{(k,p)}^t), \mathbf{1} \big(| R_{(k,p)}^t-C_{(k,p)}^t|\leq \frac{2}{\eta}\big) \right\}$$ at each frame $t$  in order for BS $k$ to compute \textcolor{black}{the} stochastic partial gradients. This {\red binary} feedback
has low overhead. \textcolor{black}{In addition,} there are existing built-in \textcolor{black}{mechanisms} in most
wireless systems for these ACK/NAK types of feedback from  MSs.
Furthermore, the convergence property of  Algorithm \ref{Alg:gra}
(Lemma \ref{lem:action}) holds even \textcolor{black}{under} the approximations.~\hfill\IEEEQED
\end{Rem}

\begin{Rem}[\textcolor{black}{Features of  of Algorithm \ref{Alg:gra}}]   \textcolor{black}{Algorithm \ref{Alg:gra} only requires} local  observations
$\{\hat{\boldsymbol{\chi}}_k,\mathbf{Z}_k\}$ and local potential
functions $ \mathbf V_k$ at each BS $k$,  \textcolor{black}{and hence, can be implemented distributively}. In addition, explicit knowledge of the CSIT error kernel  is not required, and hence,  Algorithm \ref{Alg:gra} is robust against uncertainty in the modeling.~\hfill\IEEEQED
\end{Rem}

\section{Simulation and Discussion}\label{sec:simu}
In this section, we  compare the performance of the proposed
distributed solution \textcolor{black}{with} various baseline schemes \textcolor{black}{using numerical simulations}. The average performance is evaluated over $10^6$ iterations. At each frame $t$, we assume the CSI $\mathbf H_{kn}(t)$ is uniformly distributed over a state space $\mathcal H^{N\times M}$ of size $|\mathcal H^{N\times M}|=20$.   We consider Poisson packet arrival with average
arrival rate $\lambda_k$ (packet/s) and exponentially distributed
random packet size with mean $\overline{N}_k=5$Mbits \textcolor{black}{for  $k\in \{1,2\}$}. \textcolor{black}{The buffer
size $N_Q$ is 54Mbits.} The scheduling
frame duration $\tau$ is 5ms. The total BW is $W=10$MHz.  We consider the CSIT error
model  with CSIT error variance $\sigma_e=0.15$\cite{Vincent:MIMO}. The number of transmit and receive antennas is
given by $\{M = 3, N = 2\}$,  the number of common and private
streams for \textcolor{black}{the} Pco-MIMO scheme is $\{d_{(k,c)}=1,d_{(k,p)}=1\}$, $f_k(Q_k)=Q_k$, and $\beta_k=1$  for all $k\in \mathcal K$. \textcolor{black}{We choose $P_1^0=P_2^0$ and $C_1^0=C_2^0$.}

\textcolor{black}{We consider four baseline schemes: Baseline 1 (Coordinative MIMO)\cite{DahroujYu:2010}, Baseline 2 (Uco-MIMO)\cite{MIMO:int:huang}, Baseline 3 (Full Cooperative MIMO)\cite{zhang:coor:2004}, and Baseline 4 (Channel-Aware Pco-MIMO).  Remark \ref{Rem:BL} illustrates the details of the precoder and decorrelator designs in Baselines 1, 2 and 3. Baseline 4 adopts the proposed Pco-MIMO PHY scheme in the precoder and decorrelator design.
All the baseline schemes maximize system throughput under the same backhaul  and power constraints  as the proposed scheme. Therefore, the resulting resource control designs are adaptive to CSIT only, i.e., channel-aware.  Specifically, these four baseline schemes treat the imperfect CSIT as perfect information and do not consider rate allocation due to imperfect CSIT. But Baselines 2, 3, and 4 still consider rate allocation for common streams due to the average backhaul constraints. All the baseline schemes consider power allocation.}

\begin{figure}
\begin{center}
  \subfigure[Average delay per user versus maximum transmit SNR $P_k^0$.  $R_{(k,c)}^0=1.1W\tau\log_2(1+P_k^0)$
(bits/frame), $P_{\text{cct}}=20$dBm, and $\lambda_k=6$ (packet/s).]
  {\resizebox{7.5cm}{!}{\includegraphics{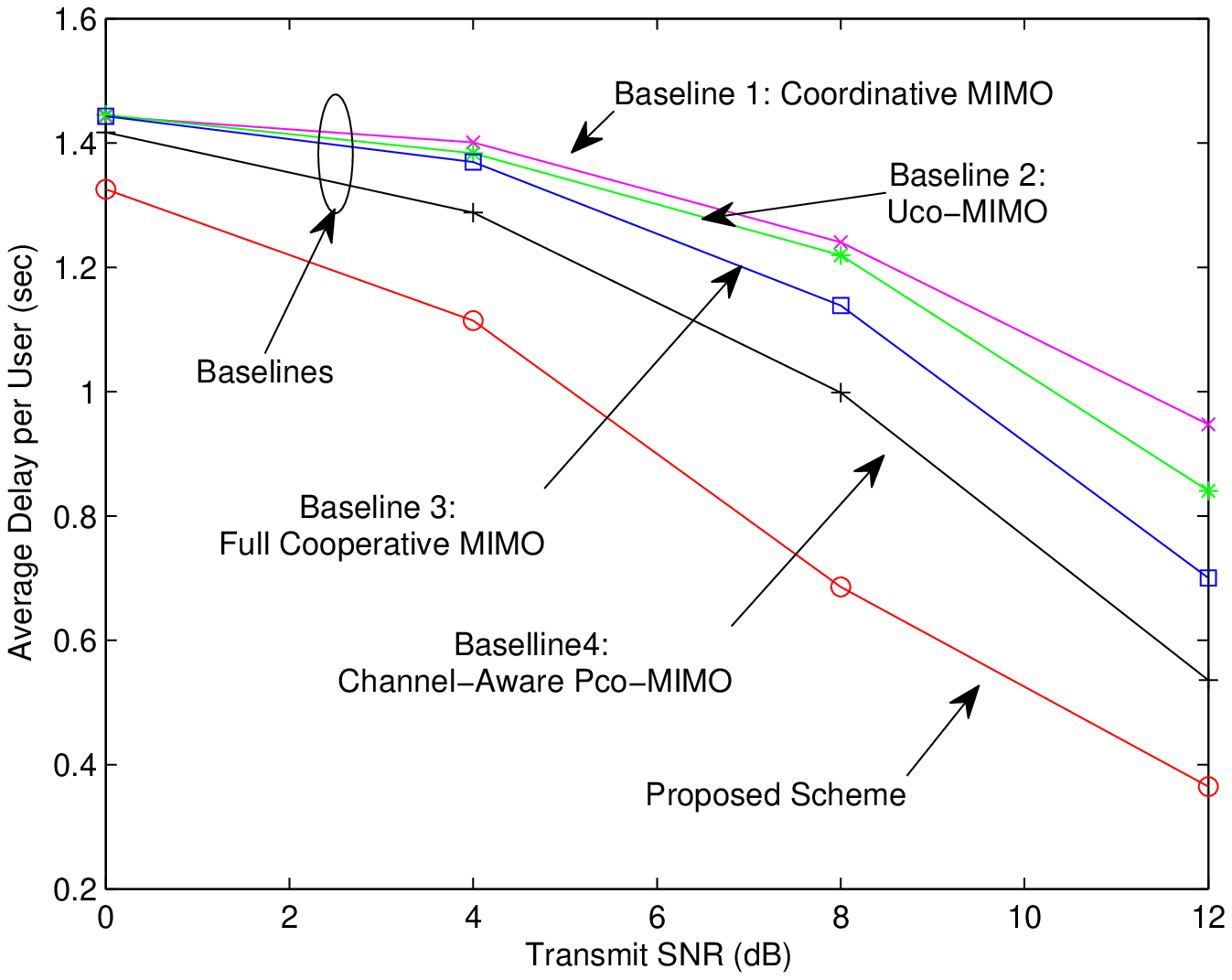}}}\quad \quad 
  \subfigure[Average delay per user versus maximum backhaul consumption $R_{(k,c)}^0$.  $P_k^0=12$dB, $P_{\text{cct}}=20$dBm, and
    $\lambda_k=7$ (packet/s).]
  {\resizebox{7.5cm}{!}{\includegraphics{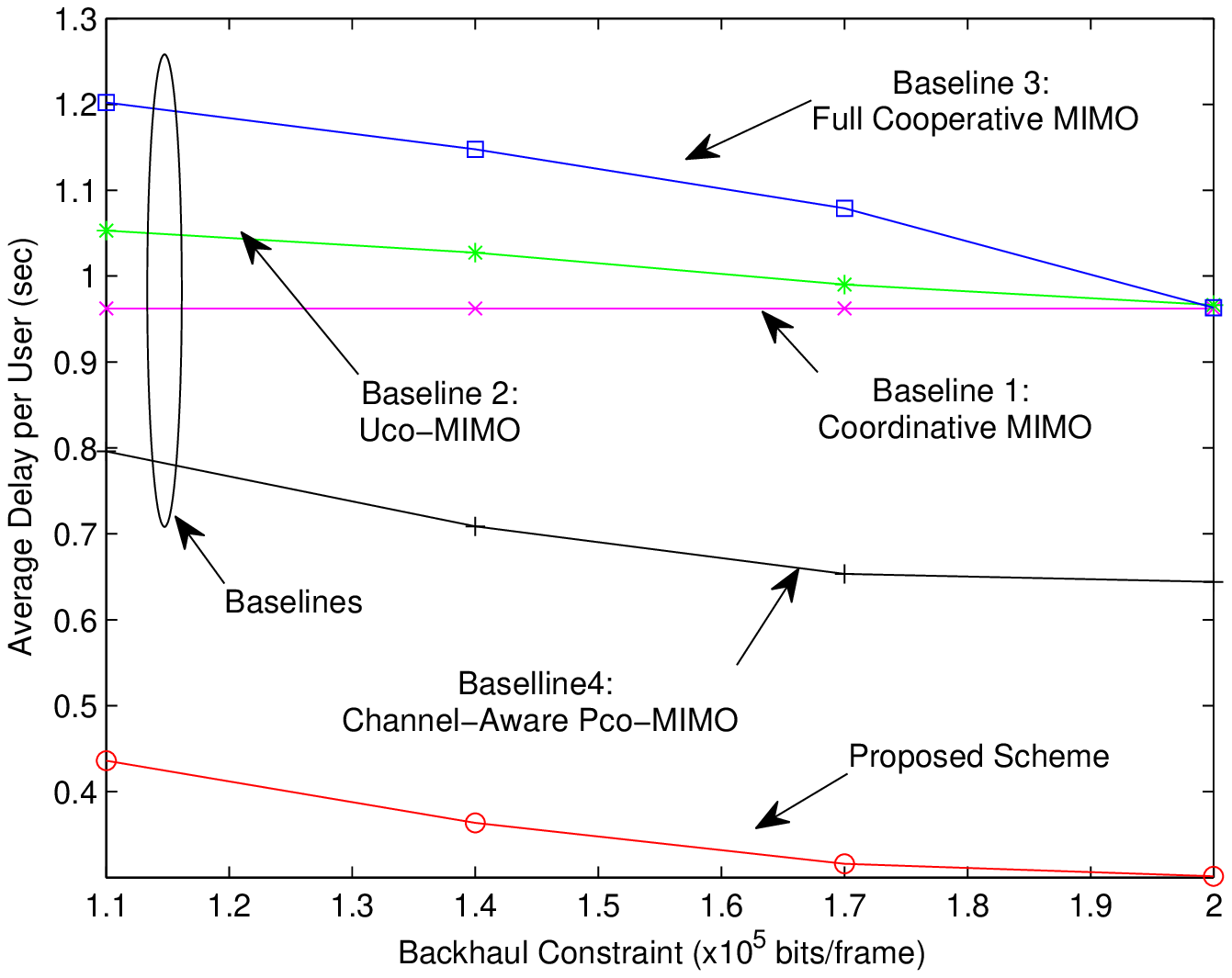}}}
  \end{center}
    \caption{ Average delay per user versus maximum transmit SNR and backhaul consumption.}
    \label{fig:delay}
\end{figure}

\subsection{Delay Performance w.r.t. Transmit SNR}\label{subsec:delay-SNR}

%\begin{figure}
% \begin{center}
%\includegraphics[height=6cm, width=8cm]{figs/delay_pwr_revision.eps}
% \end{center}
%    \caption{\textcolor{black}{\scriptsize{Average delay per user versus maximum transmit SNR $P_k^0$}.  The maximum backhaul consumption  is $R_{(k,c)}^0=1.1W\tau\log_2(1+P_k^0)$
%(bits/frame), and the
%average arrival rate is $\lambda_k=6$ (packet/s).} }
%    \label{fig:delay_pwr_cdf}
%\end{figure}

Fig. \ref{fig:delay} (a) illustrates the average delay per
user versus the maximum transmit SNR  $P_k^0$. The average delay of all the schemes decreases as
the transmit SNR increases.  
%\textcolor{black}{Arranging the baseline schemes according to their performance in decreasing order, we have (Baseline 4, Baseline 3, Baseline 2, Baseline 1).}
\textcolor{black}{This figure demonstrates  the medium backhaul consumption regime, in which Baseline 3 (Full Cooperative MIMO) outperforms Baseline 1 (Coordinative MIMO), while full cooperative MIMO is not the best choice.  The performance gain of Baseline 4 (Channel-Aware Pco-MIMO) compared with Baseline 3 (Full Cooperative MIMO) is contributed by the proposed flexible cooperation level adjustment according to the backhaul consumption requirement. Both Baseline 4 and the proposed scheme apply the proposed Pco-MIMO scheme. The performance gain of the proposed solution compared with Baseline 4 is contributed by the careful delay-aware dynamic power and rate allocation with the consideration of the imperfect CSIT.} It can be seen that the proposed scheme has significant performance
gain  compared with all the baselines.

\subsection{Delay Performance w.r.t. Backhaul Consumption}

%\begin{figure}
% \begin{center}
%\includegraphics[height=6cm, width=8cm]{figs/delay_backhaul_revision.eps}
% \end{center}
%    \caption{\textcolor{black}{\scriptsize{Average delay per user versus maximum backhaul consumption $R_{(k,c)}^0$}.  $P_k^0=12$dB, and
%    $\lambda_k=7$ (packet/s).} }
%    \label{fig:delay_backhaul}
%\end{figure}

Fig. \ref{fig:delay} (b) illustrates the average delay per user
versus the  maximum backhaul consumption $R_{(k,c)}^0$. \textcolor{black}{The average delay of all the schemes decreases as
the backhaul consumption increases.  
%Arranging the baseline schemes according to their performance in decreasing order, we have (Baseline 4, Baseline 1, Baseline 2, Baseline 3). 
This figure demonstrates  the small backhaul consumption regime, in which Baseline 1 (Coordinative MIMO) outperforms Baseline 3 (Full Cooperative MIMO). By carefully making use of the very limited backhaul resources with the proposed flexible cooperation level adjustment, Baseline 4 (Channel-Aware Pco-MIMO) outperforms  Baseline 1. 
%Both Baseline 4 and the proposed scheme apply the proposed Pco-MIMO scheme. The performance gain of the proposed scheme compared with Baseline 4 is contributed by the careful delay-aware dynamic power and rate allocation with the consideration of imperfect CSIT.
In addition, similar comparisons between Baseline 4 and the proposed solution (as in Section \ref{subsec:delay-SNR}) can be made.}  It can be observed that the proposed scheme has significant performance gain  compared with all the baselines.  Note that the delay performance of Baseline 1 is independent of the backhaul
constraint as no data sharing is needed in coordinated
beamforming.

\subsection{Convergence Performance}

\begin{figure}
\begin{center}
  \subfigure[\textcolor{black}{Per User Potential Function}]
  {\resizebox{7.5cm}{!}{\includegraphics{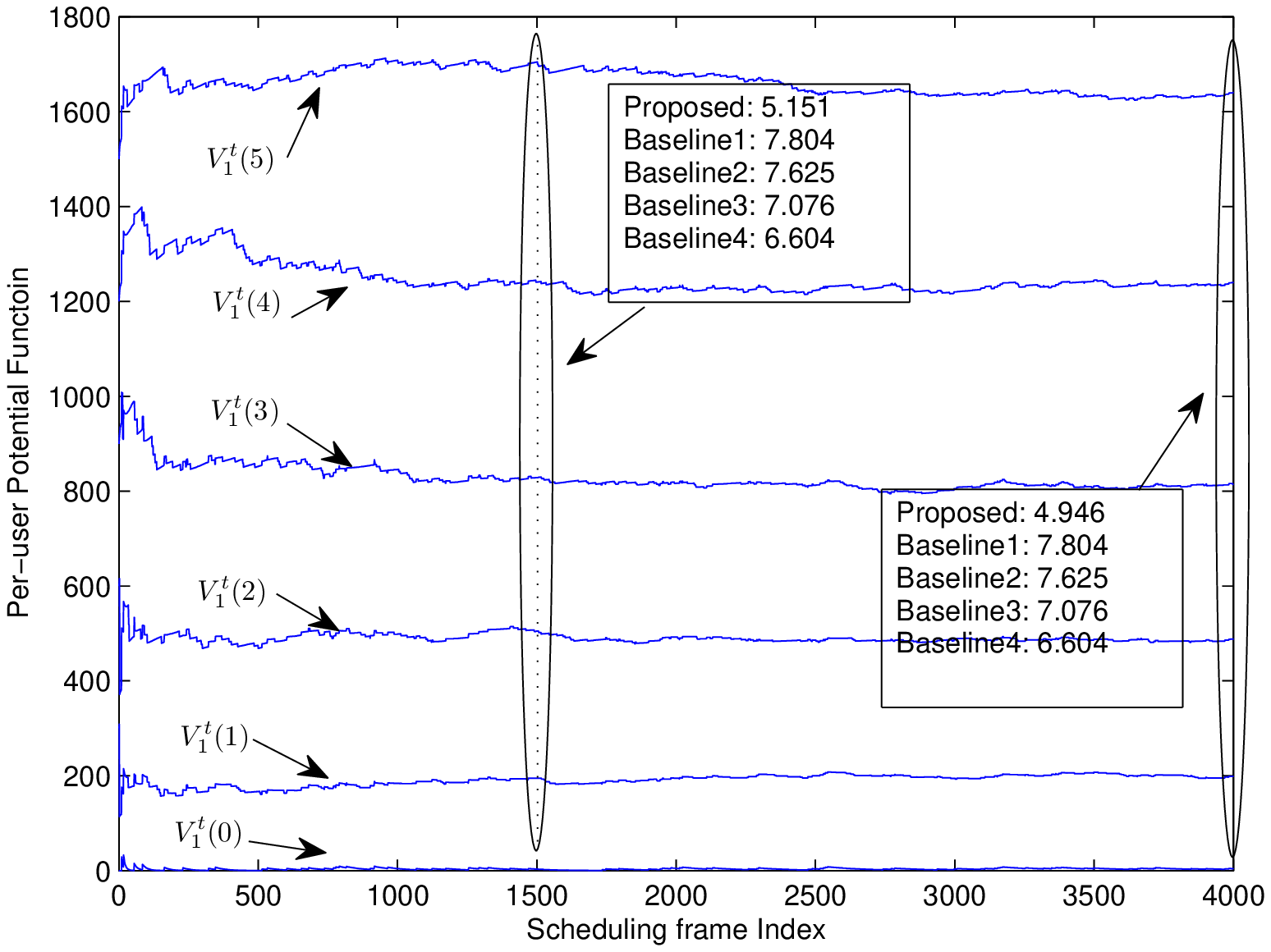}}}
  \subfigure[\textcolor{black}{Power Allocation}]
  {\resizebox{7.5cm}{!}{\includegraphics{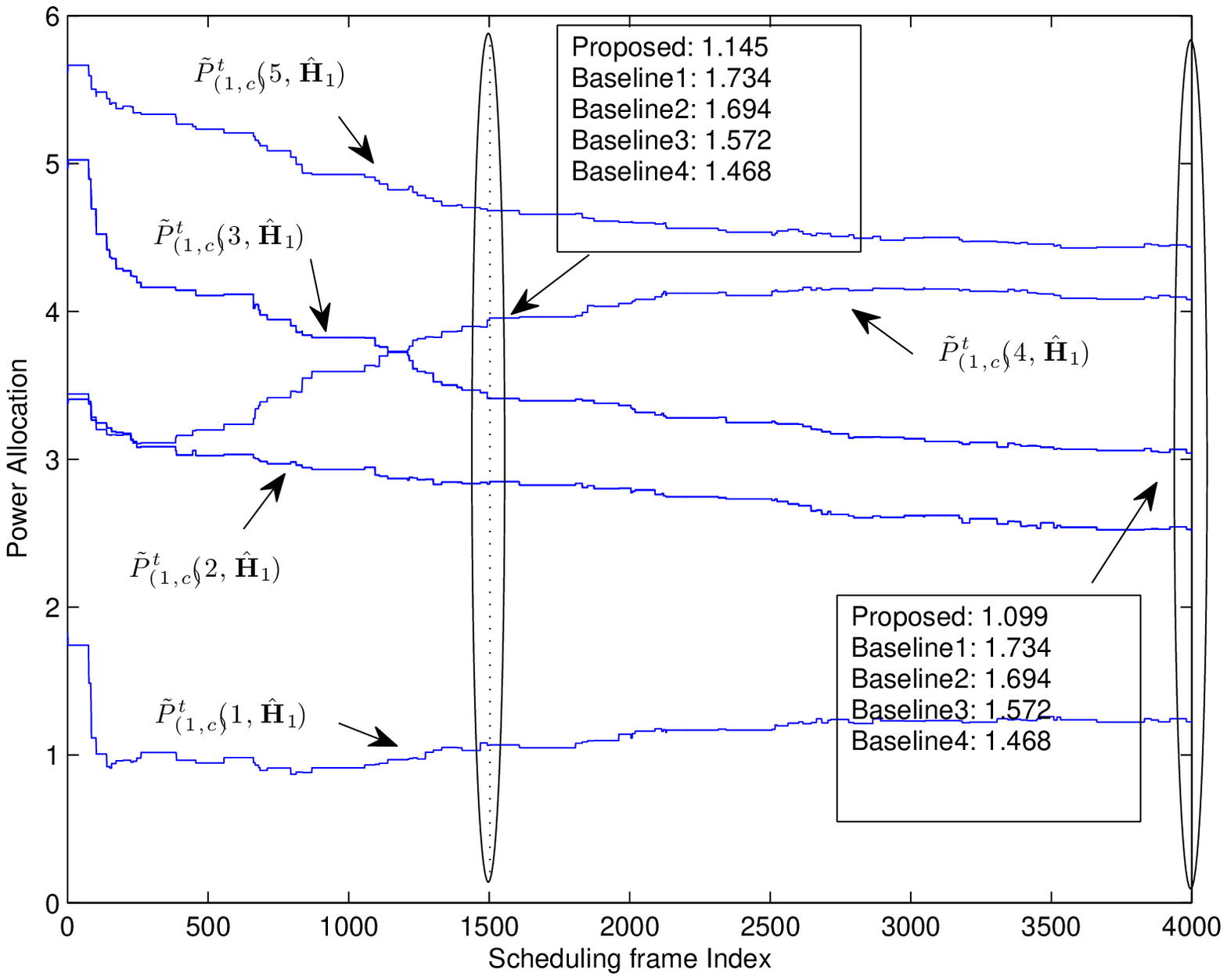}}}
  \end{center}
    \caption{Convergence property of the proposed scheme.  $P_k^0=4$dB, $P_{\text{cct}}=20$dBm, $R_{(k,c)}^0=1.1W\tau\log_2(1+P_k^0)$ (bits/frame),   and  $\lambda_k=4.5$
(packet/s). \textcolor{black}{For illustration, we randomly generate a CSIT realization and plot the power allocation trajectories for different QSI and the randomly generated CSIT realization.}}
    \label{fig:convergence}
\end{figure}

Fig. \ref{fig:convergence} illustrates the convergence property of
the proposed scheme. 
%We plot the instantaneous \textcolor{black}{per-flow} potential
%function $V_1^t$ and power allocation
%$\hat P_{(1,c)}^t(Q_1,\hat{\mathbf{H}}_1)$ versus \textcolor{black}{the} scheduling frame index. 
It can be observed
that the convergence rate of the online algorithm is quite fast. For
example,  the delay performance at 1500-th scheduling frame is
already quite close to the converged average delay.

\textcolor{black}{
\begin{figure}
\begin{center}
  \subfigure[Average delay per user versus transmit SNR $P_k^0$.]
  {\resizebox{7.5cm}{!}{\includegraphics{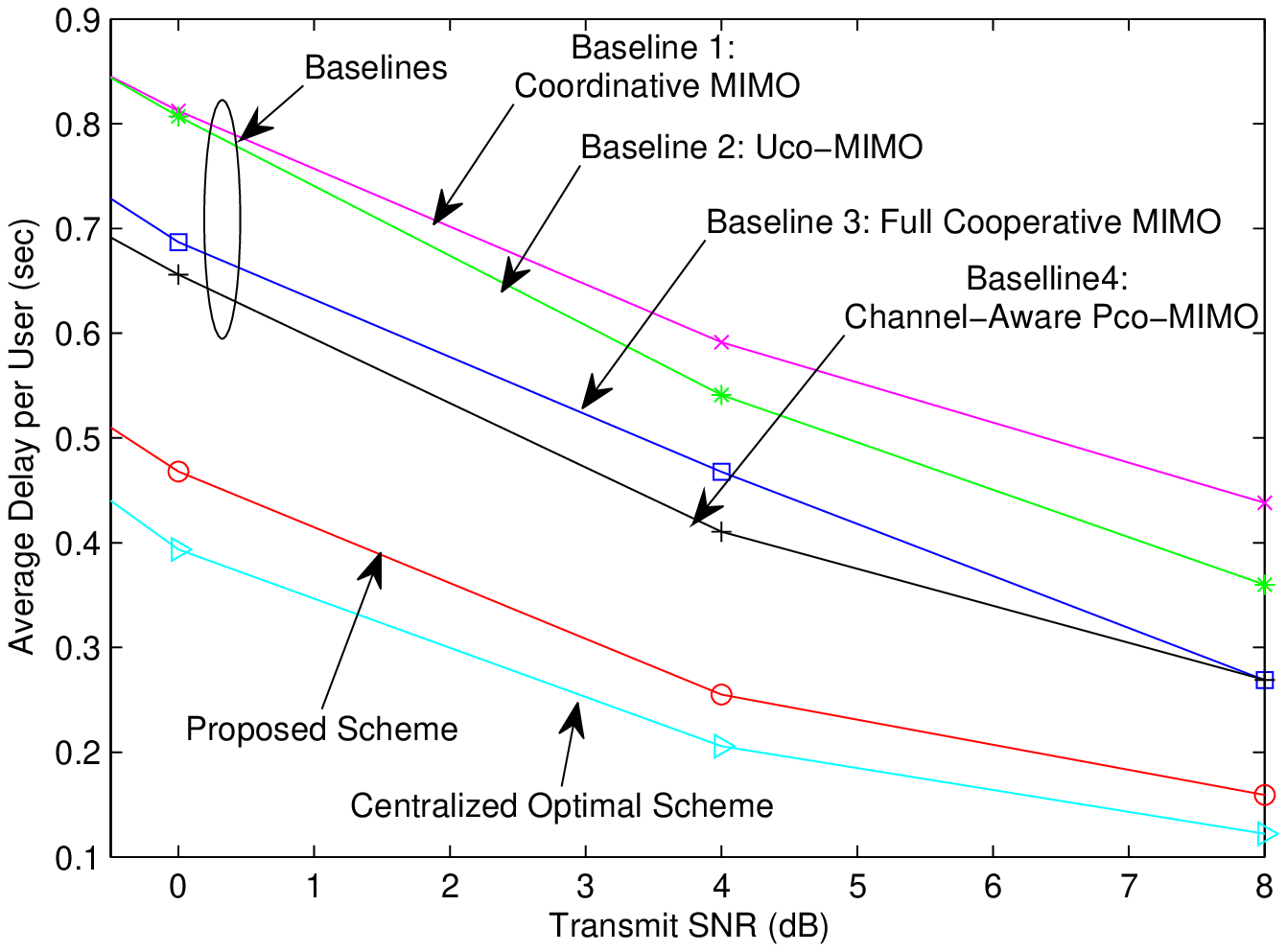}}}
  \subfigure[Matlab computation time per slot and average delay performance per user at $P_k^0=0$dB.]
  {\resizebox{7.5cm}{!}{\includegraphics{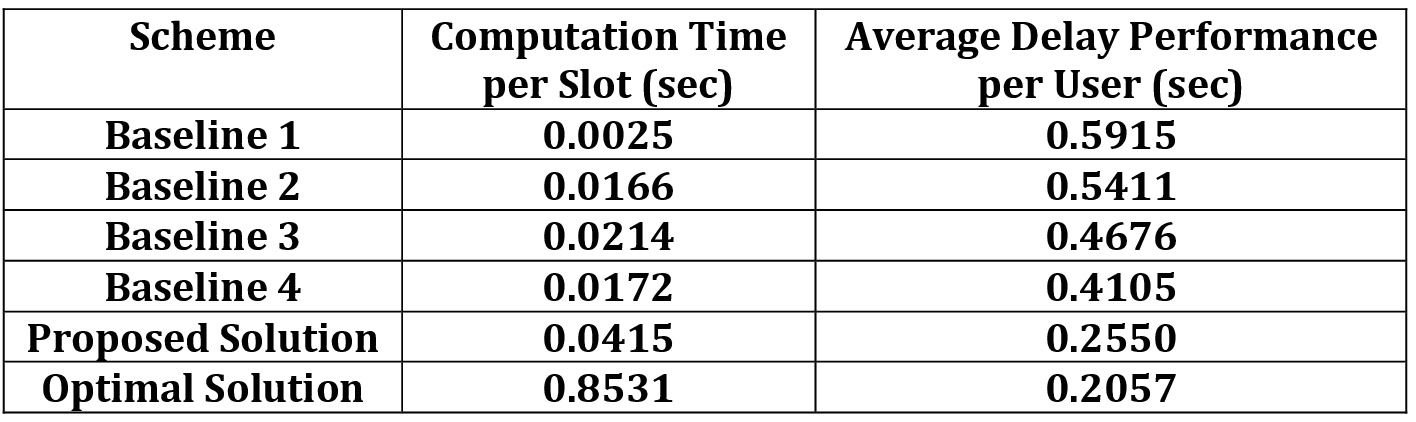}}}
  \end{center}
    \caption{Performance and computational complexity. $R_{(k,c)}^0=W\tau\log_2(1+P_k^0)$ (bits/frame), $P_{\text{cct}}=10$dBm, $N_Q=18$Mbits.}
    \label{fig:complexity}
\end{figure}
\subsection{Computational Complexity}
The computational complexity of the proposed solution is of the same order ($O(K)$) as the four baseline schemes, while it has much lower complexity than the optimal solution  ($O(N_Q^K)$). Fig. \ref{fig:complexity} (a) and Fig. \ref{fig:complexity} (b) illustrate the performance and computational complexity comparisons between the baseline schemes, the proposed distributed solution and the centralized optimal solution.}

\section{Summary}\label{sec:summary}
In this paper, we first propose a novel flexible Pco-MIMO PHY scheme. Based on the Pco-MIMO scheme, we
formulate the delay-optimal   control  problem as an infinite horizon average cost CPOMDP. We
obtain an  equivalent Bellman equation to solve the CPOMDP. \textcolor{black}{To
facilitate  implementation, we propose a low-complexity distributed solution.}  We  prove the convergence and the asymptotical optimality of the proposed solution.

\appendices

\section*{Appendix A: Proof of Theorem~\ref{thm:DoF}}\label{app:DoF}
\textcolor{black}{When CSIT is perfect and $\{d_{(k,c)}, d_{(k,p)}: k\in \mathcal K\}$ satisfies the conditions in \eqref{eq:d_k}, there is no interference under the Pco-MIMO scheme.
Therefore, the system DoF   is given by \textcolor{black}{the total number of non-interfering streams}, i.e., $
\text{DoF}(\text{Pco-MIMO})=\sum_{k\in\mathcal K} (d_{(k,p)}+d_{(k,c)})$.
From
the second constraint in \eqref{eq:d_k}, we have:
\begin{align}
\left(d_{(k,c)}+\sum_{n\in \mathcal K}d_{(n,p)} \right)+\sum_{n\in \mathcal K, n\neq k}d_{(n,c)}\leq N+(K-1)d_{K,M,N} +\sum_{n\in \mathcal K, n\neq k}d_{(n,c)}, \forall k\in\mathcal K, \label{eq:app_tmp1}
\end{align}
where the above equality holds if $d_{(k,c)}+\sum_{n\in \mathcal K}d_{(n,p)} =N+(K-1)d_{K,M,N}$. Since \eqref{eq:app_tmp1} holds for all $k\in \mathcal K$, we can prove \eqref{eqn:DoF-ineq-thm} and  \eqref{eqn:DoF-eq-thm}. Furthermore, by \eqref{eqn:d_kc} for all $k\in \mathcal K$, we can show $\text{DoF}_{\max}(\text{Pco-MIMO})\leq N+(K-1)d_{K,M,N}+(K-1)(N- d_{K,M,N})=KN$, where the equality holds when \eqref{eqn:DoF-eq-max-thm} is satisfied.}

\section*{Appendix B: Proof of Theorem~\ref{thm:MDP_post}} \label{app:MDP_post}
First, we prove Statement (a).  Problem \eqref{eq:dual} can be expressed as
an equivalent MDP:
$\min_{\Omega}\lim\sup_{T\rightarrow\infty}
\frac{1}{T}\sum\nolimits_{t=1}^T \mathbb{E}^{\Omega}\left[\mathbb{E}[g(\boldsymbol{\gamma},\boldsymbol{\chi},\Omega(\hat{\boldsymbol{\chi}}))|\mathbf{Q}]\right]$
with a tuple of the following four objects: the state space
$\{\mathbf{Q}\}$; the action space $\{\Omega(\mathbf{Q})\}$, where
$\Omega(\mathbf{Q})=\{\Omega(\hat{\boldsymbol{\chi}}):\forall \hat{\mathbf H}\}$;
the transition kernel
$\Pr\{\widetilde{\mathbf{Q}}^{\prime}|\mathbf{Q},\Omega(\mathbf{Q})\}=\mathbb{E}
\left[\Pr\{\widetilde{\mathbf{Q}}^{\prime}|\boldsymbol{\chi},\Omega(\hat{\boldsymbol{\chi}})\}|\mathbf{Q}\right]$; and
the per-stage \textcolor{black}{cost} function
$\widetilde{g}(\boldsymbol{\gamma},\mathbf{Q},\Omega(\mathbf{Q}))=
\mathbb{E}[g(\boldsymbol{\gamma},\boldsymbol{\chi},\Omega(\hat{\boldsymbol{\chi}}))|\mathbf{Q}]$.  By standard MDP techniques, we know that the optimal policy $\Omega^*$ can be obtained by solving the equivalent Bellman
equation in \eqref{eq:bellman_post} \textcolor{black}{\cite[Chap. 4]{Bertsekas:2007}}.
Next, we prove Statement (b). By Theorem 2.1 in \cite{Borkaractorcritic:2005}, the support of the randomized policy to the equivalent MDP is included in the set of the optimal solutions: 
$$\arg\min_{\mathbf P, \mathbf R}\mathbb E\left[g(\boldsymbol{\gamma},\boldsymbol{\chi},\mathbf P, \mathbf R)+\sum\nolimits_{\widetilde{\mathbf{Q}}^{\prime}}\Pr\{\widetilde{\mathbf{Q}}^{\prime}|\boldsymbol{\chi},\mathbf P, \mathbf R\}V(\widetilde{\mathbf{Q}}^{\prime})\bigg| \hat{\boldsymbol{\chi}}
\right]$$
Hence, if the set of the optimal solutions is a singleton set \textcolor{black}{for all $\hat{\boldsymbol{\chi}}$}, there is no loss of optimality to focus on deterministic policies. Thus, the deterministic
policy $\Omega^*$ is optimal.

%we can transform the POMDP
%into the MDP with a tuple of the following four objects: \textcolor{black}{the state space
%$\{\mathbf{Q}\}$, the action space $\{\Omega(\mathbf{Q})\}$, where
%$\Omega(\mathbf{Q})=\{\Omega_p(\mathbf{Q}),\Omega_r(\mathbf{Q})\}$,
%the transition kernel
%$\Pr\{\widetilde{\mathbf{Q}}^{\prime}|\mathbf{Q},\Omega(\mathbf{Q})\}=\mathbb{E}
%\left[\Pr\{\widetilde{\mathbf{Q}}^{\prime}|\boldsymbol{\chi},\Omega(\hat{\boldsymbol{\chi}})\}|\mathbf{Q}\right]$ and
%the per-stage \textcolor{black}{cost} function
%$\widetilde{g}(\boldsymbol{\gamma},\mathbf{Q},\Omega(\mathbf{Q}))=
%\mathbb{E}[g(\boldsymbol{\gamma},\boldsymbol{\chi},\Omega(\hat{\boldsymbol{\chi}}))|\mathbf{Q}]$ \cite{Bertsekas:2007,Cao:2007}}.
%By standard MDP techniques, we can derive the equivalent Bellman
%equation in \eqref{eq:bellman_post} \cite{Cao:2007,Thesis:Salodkar}.
%Specifically, $\{\widetilde{V}(\widetilde{\mathbf{Q}})\}$ is the
%potential function. \textcolor{black}{$\Omega^*=\{\Omega_p^*,\Omega_r^*\}$
%minimizing the RHS of \eqref{eq:bellman_post} for all  $\mathbf{Q}\in \mathcal Q$ is the optimal policy},
%i.e.,
%$\Omega^*=\argmin_{\Omega}L_{\beta}(\Omega,\boldsymbol{\gamma})$.
%$\theta=\min_{\Omega}L_{\beta}(\Omega,\boldsymbol{\gamma})$ is the
%optimal average cost per stage.

\section*{Appendix C: Proof of Lemma~\ref{lem:dual_gap}} \label{app:dual_gap}

First, we  show that the duality gap is zero over the stationary randomized policy space.  Define a
stationary randomized policy \textcolor{black}{$\bar{\Omega}$}, which is a
mapping from the observed state $\hat{\boldsymbol{\chi}}$ to some
measurable $f:\mathcal{S}\to\mathcal{P}(\mathcal{U})$, where
$\mathcal{S}$ is the observed state space, $\mathcal{U}$ is the
power and rate allocation space, and $\mathcal{P}$ is the Polish
space of probability measures on $\mathcal{U}$ with the Prohorov
topology\textcolor{black}{\cite[Chap. 2]{borkar:probability:1995}}.
The observed state
$\hat{\boldsymbol{\chi}}=\{\mathbf{Q},\hat{\mathbf{H}}\}$ under the randem 
control of the unichain policy \textcolor{black}{$\bar{\Omega}$} has an invariant
probability measure $\pi\in\mathcal{P}(\mathcal{S})$. The ergodic
occupation measure $\pi_{\textcolor{black}{\bar{\Omega}}}$ associated with the
pair $(\pi,\textcolor{black}{\bar{\Omega}})$ is defined
by\cite{Borkar::convex:2001}: 
\begin{equation}
\int_{\mathcal{S}\times\mathcal{U}}g(\hat{\boldsymbol{\chi}},y)\text{d}\pi_{\textcolor{black}{\bar{\Omega}}}
=\sum\nolimits_{\hat{\boldsymbol{\chi}}\in\mathcal{S}}\pi(\mathcal{S})
\int_{\mathcal{U}}
g(\hat{\boldsymbol{\chi}},y)f(\hat{\boldsymbol{\chi}},\text{d}y),
\end{equation}
where $g(\hat{\boldsymbol{\chi}},y)$ is the per stage \textcolor{black}{cost}
function given the observed state is $\hat{\boldsymbol{\chi}}$ and
action $y$ is taken. Let $\mathcal{G}$ denote the set of all ergodic
occupation measures $\pi_{\textcolor{black}{\bar{\Omega}}}$, and it has been shown in
\cite{Borkar::convex:2001} that $\mathcal{G}$ is closed convex in
$\mathcal{P}(\mathcal{S}\times\mathcal{U})$. Therefore, the primal
Problem \ref{prob:delay} can be recast as a convex problem given by:
\begin{equation}\label{eq:xxx}\begin{array}{lll}
\min_{\nu\in\mathcal{G}} & \int g(\hat{\boldsymbol{\chi}},y)\text{d}\nu\\
\text{s.t.} & \int P_k(\hat{\boldsymbol{\chi}},y)\text{d}\nu\leq
P_k^0;\quad \int R_{(k,c)}(\hat{\boldsymbol{\chi}},y)\text{d}\nu\leq
R_{(k,c)}^0,\forall k
\end{array},
\end{equation}
which is an infinite dimensional linear program
\cite{Borkar::convex:2001}, \textcolor{black}{\cite[Chap. 1]{LP:infinite:1987}}. Define the
Lagrangian function: $L_{\text{LP}}(\nu,\boldsymbol{\gamma})=\int
g\text{d}\nu+\sum_k\gamma_{(k,P)}(\int P_k\text{d}\nu-P_k^0)+
\sum_k\gamma_{(k,C)}(\int R_{(k,c)}\text{d}\nu-R_{(k,c)}^0).$ \textcolor{black}{We have the saddle-point condition: $L_{\text{LP}}(\nu,\boldsymbol{\gamma}^*) \geq L_{\text{LP}}(\nu^*,\boldsymbol{\gamma}^*)\geq L_{\text{LP}}(\nu^*,\boldsymbol{\gamma})$, {\red i.e., the duality gap is zero over the stationary randomized policy space.}}
%Next, from Theorem \ref{thm:MDP_post}, we know that there exists an optimal deterministic policy  given by the solution of the equivalent Bellman equation. In other words, there is no loss of optimality by focusing on the stationary deterministic policy space. } Hence, we have $\min_{\nu}L_{\text{LP}}(\nu,\boldsymbol{\gamma})=
%\min_{\Omega}L_{\beta}(\Omega,\boldsymbol{\gamma})$ for any $\boldsymbol \gamma$. In other words,
%the ergodic occupation measure $\pi_{\Omega^*}=\nu^*$, where
%$\Omega^*=\arg\min_{\Omega}L_{\beta}(\Omega,\boldsymbol{\gamma})$
%and $\nu^*=\arg\min_{\nu}L_{\text{LP}}(\nu,\boldsymbol{\gamma})$. \textcolor{black}{Therefore, based on the zero duality gap on over the stationary randomized policy space, we have the zero duality gap  in
%\eqref{eq:zero_duality} over the stationary deterministic policy space.

{\black Next, from Theorem \ref{thm:MDP_post} (b), there is no loss of optimality by focusing on deterministic policies {\red given that the condition of Theorem \ref{thm:MDP_post} (b) holds. } Hence, we have $\min_{\nu}L_{\text{LP}}(\nu,\boldsymbol{\gamma})=
\min_{\Omega}L_{\beta}(\Omega,\boldsymbol{\gamma})$ for any $\boldsymbol \gamma \succeq 0$. As a result, the saddle point condition holds for the constrained Problem \ref{prob:delay} over the domain of deterministic policies, i.e., $L_{\beta}(\Omega, \gamma^*) \geq L_{\beta}(\Omega^*, \gamma^*) \geq L_{\beta}(\Omega^*, \gamma)$ for all deterministic policies $\Omega$ and $\boldsymbol\gamma \succeq 0$. As a result, $(\Omega^*,\gamma^*)$ is the saddle point of $L_{\beta}(\Omega, \gamma   )$ and the duality gap is zero, i.e., \eqref{eq:zero_duality} holds.}

\section*{Appendix D: Proof of Lemma~\ref{lem:V=V_k}} \label{app:V=V_k}
\textcolor{black}{When $\hat{\mathbf{H}}=\mathbf{H}$, there is no interference under the Pco-MIMO. Thus, 
given 
$\hat{\boldsymbol{\chi}}=\boldsymbol{\chi}$ and $\{\mathbf P, \mathbf R\}$,  
$\widetilde{Q}^{\prime}_k=Q_k-U_k(\hat{\mathbf{H}},\mathbf P_k, \mathbf R_k )$ is independent of  $Q_n$ and $\{\mathbf P_n, \mathbf R_n\}$ for all $n\in \mathcal K, n\neq k$. Thus, we have
$\Pr\{\widetilde{Q}_k^{\prime}|\hat{\boldsymbol{\chi}},\mathbf P, \mathbf R\}=\Pr\{\widetilde{Q}_k^{\prime}|
\hat{\boldsymbol{\chi}}_k,\mathbf P_k, \mathbf R_k\}.$ 
When $P_{\text{cct}}=0$, we have
$g(\boldsymbol{\gamma},\hat{\boldsymbol{\chi}},\mathbf P, \mathbf R)=\sum_k
g_k(\boldsymbol{\gamma}_k,\hat{\boldsymbol{\chi}}_k,\mathbf P_k, \mathbf R_k).$ Suppose
$V(\widetilde{\mathbf{Q}})=\sum_{k}V(\widetilde Q_k)$
and $\theta=\sum_k V_k(\widetilde{Q}_k^0)$. Then, the Bellman
equation in \eqref{eq:bellman_post} becomes: 
\begin{align}
&\sum_kV_k(\widetilde{Q}_k)+\sum_kV_k(\widetilde{Q}_k^0)\nonumber\\
=&\mathbb E\left[ 
\min_{\mathbf P, \mathbf R}\left[g(\boldsymbol{\gamma},\boldsymbol{\chi},\mathbf P, \mathbf R)+\sum\nolimits_{\widetilde{\mathbf{Q}}^{\prime}}\Pr\{\widetilde{\mathbf{Q}}^{\prime}|\boldsymbol{\chi},\mathbf P, \mathbf R\}\left(\sum_kV_k(\widetilde{Q}_k^{\prime})\right)\right]\Bigg| \widetilde{\mathbf Q}\right]\nonumber\\
\stackrel{(a)}{=}&\mathbb E\left[
\min_{\mathbf P, \mathbf R}\left[\sum_k\left(
g_k(\boldsymbol{\gamma}_k,\hat{\boldsymbol{\chi}}_k,\mathbf P_k, \mathbf R_k)+\sum_{\widetilde{Q}_k^{\prime}}
\Pr\{\widetilde{Q}_k^{\prime}|\hat{\boldsymbol{\chi}}_k, \mathbf P_k, \mathbf R_k\}V_k(\widetilde{Q}_k^{\prime})\right)
\right]\Bigg|\widetilde{\mathbf Q}\right]\nonumber\\
%=&\sum_{\mathbf{A}}\Pr\{\mathbf{A}\}\sum_k\mathbb E\left[
%\min_{\mathbf P_k, \mathbf R_k}\left[
%g_k(\boldsymbol{\gamma}_k,\hat{\boldsymbol{\chi}}_k,\mathbf P_k, \mathbf R_k)+\sum_{\widetilde{Q}_k^{\prime}}
%\Pr\{\widetilde{Q}_k^{\prime}|\hat{\boldsymbol{\chi}}_k, \mathbf P_k, \mathbf R_k\}V_k(\widetilde{Q}_k^{\prime})
%\right]\Bigg|Q_k\right]\nonumber\\
\stackrel{(b)}{=}&\sum_k\mathbb E\left[
\min_{\mathbf P_k, \mathbf R_k}\left[
g_k(\boldsymbol{\gamma}_k,\hat{\boldsymbol{\chi}}_k,\mathbf P_k, \mathbf R_k)+\sum_{\widetilde{Q}_k^{\prime}}
\Pr\{\widetilde{Q}_k^{\prime}|\hat{\boldsymbol{\chi}}_k,\mathbf P_k, \mathbf R_k\}V_k(\widetilde{Q}_k^{\prime})
\right]\Bigg|\widetilde Q_k\right],\label{eq:bell_tmp}
%=&\sum_{\mathbf{A}}\Pr\{\mathbf{A}\}\left\{
%\min_{\Omega(\mathbf{Q})}\left[\sum_k
%\widetilde{g}_k(\boldsymbol{\gamma}_k,Q_k,\Omega_k(\mathbf Q))+\sum\nolimits_{\widetilde{\mathbf{Q}}^{\prime}}\Pr\{\widetilde{\mathbf{Q}}^{\prime}|\mathbf{Q},\Omega(\mathbf{Q})\}\left(\sum_kV_k(\widetilde{Q}_k^{\prime})\right)
%\right]\right\}\nonumber\\
%\stackrel{(a)}{=}&\sum_{\mathbf{A}}\Pr\{\mathbf{A}\}\left\{
%\min_{\Omega(\mathbf{Q})}\left[\sum_k\left(
%\widetilde{g}_k(\boldsymbol{\gamma}_k,Q_k,\Omega_k(\mathbf Q))+\sum_{\widetilde{Q}_k^{\prime}}
%\Pr\{\widetilde{Q}_k^{\prime}|Q_k, \Omega_k(\mathbf{Q})\}V_k(\widetilde{Q}_k^{\prime})\right)
%\right]\right\}\nonumber\\
%=&\sum_{\mathbf{A}}\Pr\{\mathbf{A}\}\left\{\sum_k
%\min_{\widetilde\Omega_k(Q_k)}\left[
%\widetilde{g}_k(\boldsymbol{\gamma}_k,Q_k,\widetilde \Omega_k(Q_k))+\sum_{\widetilde{Q}_k^{\prime}}
%\Pr\{\widetilde{Q}_k^{\prime}|Q_k, \widetilde\Omega(Q_k)\}V_k(\widetilde{Q}_k^{\prime})
%\right]\right\}\nonumber\\
%\stackrel{(b)}{=}&\sum_k\sum_{A_k}\Pr\{A_k\}\left\{
%\min_{\widetilde\Omega_k(Q_k)}\left[
%\widetilde{g}_k(\boldsymbol{\gamma}_k,Q_k,\widetilde \Omega_k(Q_k))+\sum_{\widetilde{Q}_k^{\prime}}
%\Pr\{\widetilde{Q}_k^{\prime}|Q_k, \widetilde\Omega(Q_k)\}V_k(\widetilde{Q}_k^{\prime})
%\right]\right\}\label{eq:bell_tmp}
\end{align}
where (a) is due to $\sum\nolimits_{\widetilde{\mathbf{Q}}^{\prime}}\Pr\{\widetilde{\mathbf{Q}}^{\prime}|\hat{\boldsymbol{\chi}},\mathbf P, \mathbf R\}\left(\sum_kV_k(\widetilde{Q}_k^{\prime})\right)
=\sum_k \sum\nolimits_{\widetilde{\mathbf{Q}}^{\prime}}\Pr\{\widetilde{\mathbf{Q}}^{\prime}|\hat{\boldsymbol{\chi}},\mathbf P, \mathbf R\}V_k(\widetilde{Q}_k^{\prime})=\sum_k \sum\nolimits_{\widetilde{Q}_k^{\prime}}\Pr\{\widetilde{Q}_k^{\prime}|\hat{\boldsymbol{\chi}},\mathbf P, \mathbf R\}V_k(\widetilde{Q}_k^{\prime})=\sum_k \sum\nolimits_{\widetilde{Q}_k^{\prime}}\Pr\{\widetilde{Q}_k^{\prime}|\hat{\boldsymbol{\chi}}_k,\mathbf P_k, \mathbf R_k\}V_k(\widetilde{Q}_k^{\prime})$ and (b) is due to the independent assumptions w.r.t. $k$ in the CSI model, imperfect CSI model and bursty source model. Therefore,  \eqref{eq:bell_tmp} can be recast into
per-flow Bellman equations given by
 \eqref{eq:bellman_perflow} for each MS $k$.} Furthermore, since the
solution of the Bellman equation is unique up to a constant, we can
conclude that when $V_k(\widetilde Q_k)$ is a solution to the per-flow
fixed point equation in \eqref{eq:bellman_perflow},
$V(\widetilde{\mathbf{Q}})=\sum_{k}V_k(\widetilde Q_k)$
is a solution of the Bellman equation in \eqref{eq:bellman_perflow}.

\section*{Appendix E: Proof of Lemma~\ref{lem:LM_converge}} \label{app:LM_converge}
\textcolor{black}{First, we obtain the ordinary differential equation (ODE) of the LM update in Algorithm \ref{alg:learning}.} Due to the separation of \textcolor{black}{timescales}, the primal update of the
potential function can be regarded as converged to
\textcolor{black}{$\mathbf V_k(\boldsymbol{\gamma}^t)$} w.r.t. the current LMs
$\boldsymbol{\gamma}^t$. \textcolor{black}{Let $\widetilde \Omega^*=\{\widetilde \Omega_k^*:k\in \mathcal K\}$.}
Using the standard stochastic approximation
argument {\red in Lemma 1 of} \textcolor{black}{\cite[Chap. 6]{Borkar:2008}}, the dynamics of the LMs learning
equation {\red under $\widetilde \Omega^*(\boldsymbol{\gamma}(t))$)} can be represented by the
following ODE:
\textcolor{black}{
\begin{equation}\label{eq:gamma_ODE}
\dot{\boldsymbol{\gamma}}(t)=f_{\widetilde\Omega^*}(\boldsymbol{\gamma}(t))\textcolor{black}{\triangleq}\mathbb{E}^{\widetilde\Omega^*(\boldsymbol{\gamma}(t))}
[\widetilde P_{1}^*(\boldsymbol{\gamma}(t))-P_1^0,\widetilde P_{2}^*(\boldsymbol{\gamma}(t))-P_2^0,\widetilde R_{(1,c)}^*(\boldsymbol{\gamma}(t))-C_1^0,\widetilde R_{(2,c)}^*(\boldsymbol{\gamma}(t))-C_2^0]^T,
\end{equation}
where $\widetilde P_{k}^*(\boldsymbol{\gamma}(t))$ and $\widetilde R_{(k,c)}^*(\boldsymbol{\gamma}(t))$ \textcolor{black}{are} the power
and backhaul consumption given by $\widetilde\Omega^*(\boldsymbol{\gamma}(t),\hat{\boldsymbol{\chi}})$.}
At the equilibrium point \textcolor{black}{$\boldsymbol{\gamma}^*$} of the ODE
\eqref{eq:gamma_ODE}, we have \textcolor{black}{$\boldsymbol{\gamma}^*\cdot f_{\widetilde \Omega^*}(\boldsymbol{\gamma}^*)=0$},
which satisfies the power and backhaul consumption constraints in
\eqref{eq:pwr_con} and \eqref{eq:bkh_con} \textcolor{black}{(by KKT conditions)}.

Next, we consider the case when $\epsilon=0$ and $P_{\text{cct}}=0$.  By Lemma \ref{lem:V=V_k}, \textcolor{black}{we have $\Omega^*=\widetilde \Omega^*$ when $\epsilon=0$ and $P_{\text{cct}}=0$}. The ODE in \eqref{eq:gamma_ODE} becomes:
\begin{equation}\label{eq:ODE_fake}
\dot{\boldsymbol{\gamma}}(t)=f_{\Omega^*}(\boldsymbol{\gamma}(t))\textcolor{black}{\triangleq}\mathbb{E}^{\Omega^*(\boldsymbol{\gamma}(t))}
[P^*_{1}(\boldsymbol{\gamma}(t))-P_1^0, P^*_{2}(\boldsymbol{\gamma}(t))-P_2^0,R^*_{(1,c)}(\boldsymbol{\gamma}(t))-C_1^0,R^*_{(2,c)}(\boldsymbol{\gamma}(t))-C_2^0]^T,
\end{equation}
where $P_{k}^*(\boldsymbol{\gamma}(t))$ and $ R_{(k,c)}^*(\boldsymbol{\gamma}(t))$ is the power
and backhaul consumption \textcolor{black}{given by $\Omega^*(\boldsymbol{\gamma}(t),\hat{\boldsymbol{\chi}})$.}
\textcolor{black}{On the other hand, since $\Omega^*(\boldsymbol{\gamma})$ is optimal and $\epsilon=0$ and $P_{\text{cct}}=0$}, we have
$\Omega^*(\boldsymbol{\gamma})=\arg\min_{\Omega(\boldsymbol{\gamma})}\mathbb{E}^{\Omega(\boldsymbol{\gamma})}[\sum_k g_k(\boldsymbol{\gamma}_{k},\hat{\boldsymbol{\chi}}_k,\Omega_{k}(\boldsymbol \gamma,\hat{\boldsymbol{\chi}}_k))
]$. Define
$G(\boldsymbol{\gamma})=\mathbb{E}^{\Omega^*(\boldsymbol{\gamma})}[
\sum_k g_k(\boldsymbol{\gamma}_{k},\hat{\boldsymbol{\chi}},\Omega_{k}^*(\boldsymbol \gamma,\hat{\boldsymbol{\chi}}))\textcolor{black}{]}$. 
By the envelope theorem, we have $\frac{\partial
G(\boldsymbol{\gamma})}{\partial \gamma_{(k,P)}
}=\mathbb{E}^{\Omega^*(\boldsymbol{\gamma})}[
P^*_k(\boldsymbol{\gamma})-P_k^0]=\dot{\gamma}_{(k,P)}.$ Similarly, we have
$\frac{\partial G(\boldsymbol{\gamma})}{\partial \gamma_{(k,C)}
}=\mathbb{E}^{\Omega^*(\boldsymbol{\gamma})}[
R^*_{(k,c)}(\boldsymbol{\gamma})-R_{(k,c)}^0]=\dot{\gamma}_{(k,C)}.$ Therefore, we
can show that the ODE in \eqref{eq:ODE_fake} can be expressed as
$\dot{\boldsymbol \gamma}(t)= \triangledown
G(\boldsymbol{\gamma}(t))$.  Since the dual function $G(\boldsymbol{\gamma})$ is a concave function, from the standard gradient update argument, the ODE in \eqref{eq:ODE_fake} will converge to the equilibrium point \textcolor{black}{$\boldsymbol{\gamma}_0^*$}. Thus,  we have $\boldsymbol{\gamma}_0^*\cdot\triangledown
G(\boldsymbol{\gamma}_0^*)=0$.  $\boldsymbol{\gamma}_0^*$ corresponds to the LMs associated with the power and backhaul constraints under the optimal policy (by KKT conditions). Furthermore, the equilibrium point $\boldsymbol{\gamma}_0^*$  is
exponentially stable on $\mathcal{R}^+$. By \textcolor{black}{the} {\red convergence} of \textcolor{black}{the} Lyapunov
Theorem \cite{Lyapunov:1998}, there exists a Lyapunov function
${\red \mathcal L}(\boldsymbol{\gamma})$ for \textcolor{black}{$\dot{\boldsymbol{\gamma}}(t)=
f_{\Omega^*}(\boldsymbol{\gamma}(t))$}, s.t.
$C_1||\boldsymbol{\gamma}-\boldsymbol{\gamma}_0^*||^2\leq
{\red \mathcal L}(\boldsymbol{\gamma})\leq
C_2||\boldsymbol{\gamma}-\boldsymbol{\gamma}_0^*||^2 $ and 
$\frac{\text{d} {\red \mathcal L}(\boldsymbol{\gamma})}{\text{d}\boldsymbol{\gamma}}
f_{\widetilde{\Omega}}(\boldsymbol{\gamma})\leq-C_3||\boldsymbol{\gamma}-\boldsymbol{\gamma}_0^*||^2$ for all $
\boldsymbol{\gamma}\in\mathcal{R}^+ $ and for some positive constant
$\{C_1,C_2,C_3\}$.

Finally, consider general  $\epsilon$ and $P_{\text{cct}}$.
Using the standard perturbation analysis, we have
\textcolor{black}{\begin{align}
&\boldsymbol{\phi}_{(k,P)}(\boldsymbol{\gamma})\triangleq\mathbb{E}^{\widetilde{\Omega}^*(\boldsymbol{\gamma}(t))}
\left[\widetilde P^*_{k}(\boldsymbol{\gamma})\right]-\mathbb{E}^{\Omega^*(\boldsymbol{\gamma}(t))}
\left[P^*_{k}(\boldsymbol{\gamma})\right]
\nonumber\\
=&\sum_{\hat{\boldsymbol{\chi}}}
\widetilde \pi^*(\hat{\boldsymbol{\chi}})\mathbb E\left[\widetilde P_k^*(\boldsymbol{\gamma})-P_k^*(\boldsymbol{\gamma})
+\sum_{\hat{\boldsymbol{\chi}}^{'}}(\Pr\{\hat{\boldsymbol{\chi}}^{'}| \boldsymbol{\chi},\widetilde\Omega^*(\boldsymbol{\gamma},\hat{\boldsymbol{\chi}})\}-
\Pr\{\hat{\boldsymbol{\chi}}^{'}| \boldsymbol{\chi},\Omega^*(\boldsymbol{\gamma},\hat{\boldsymbol{\chi}})\})V(\widetilde{\mathbf Q}^{'})\bigg| \hat{\boldsymbol{\chi}}\right],\nonumber
\end{align}
where $\widetilde P_k^*(\boldsymbol{\gamma})$ and $P_k^*(\boldsymbol{\gamma})$ are the power consumptions of BS $k$ given by $\widetilde\Omega^*(\boldsymbol{\gamma}(t),\hat{\boldsymbol{\chi}})$ and $\Omega^*(\boldsymbol{\gamma}(t),\hat{\boldsymbol{\chi}})$, respectively. }
$\widetilde\pi^*(\cdot)$ is the steady state distribution of observed
state $\hat{\boldsymbol{\chi}}$ under the policy $\widetilde \Omega^*(\boldsymbol \gamma(t))$,
$V(\cdot)$ is the potential function of observed state under the
policy $\widetilde{\Omega}^*$.
%From
%$||P_k(\hat{\boldsymbol{\chi}})-\widetilde{P}_k(\hat{\boldsymbol{\chi}})||=O(P_{\text{cct}})+O(\epsilon)$,
%and
%$||R_k(\hat{\boldsymbol{\chi}})-\widetilde{R}_k(\hat{\boldsymbol{\chi}})||=O(P_{\text{cct}})+O(\epsilon)$,
Since \textcolor{black}{ $\widetilde\pi^*(\cdot)$} and
$\{V(\hat{\boldsymbol{\chi}}^{'})-V(\hat{\boldsymbol{\chi}}):\forall
\hat{\boldsymbol{\chi}},\hat{\boldsymbol{\chi}}^{'} \}$ are bounded and $\widetilde P_k^*(\boldsymbol{\gamma})-P^*_k(\boldsymbol{\gamma})=O(P_{\text{cct}})$,
we have
$|\boldsymbol{\phi}_{(k,P)}(\boldsymbol{\gamma})|=O(P_{\text{cct}})+O(\epsilon).$
Similarly, we have
$\big|\boldsymbol{\phi}_{(k,C)}(\boldsymbol{\gamma})|=O(P_{\text{cct}})+O(\epsilon).$ 
Denote \textcolor{black}{$\boldsymbol{\phi}(\boldsymbol{\gamma})=[\boldsymbol{\phi}_{(k,P)}(\boldsymbol{\gamma}) \
\boldsymbol{\phi}_{(k,C)}(\boldsymbol{\gamma})]^T$}. Then, we have $||\boldsymbol{\phi}(\boldsymbol{\gamma})||=O(P_{\text{cct}})+O(\epsilon)$, \textcolor{black}{which implies $||\boldsymbol{\phi}(\boldsymbol{\gamma})||^2=\delta_1+\delta_2$.}
\textcolor{black}{Now, we establish the relationship between the ODEs in \eqref{eq:gamma_ODE} (for general $\epsilon$ and $P_{\text{cct}}$) and \eqref{eq:ODE_fake} (for $\epsilon=0$ and $P_{\text{cct}}=0$) using $\boldsymbol{\phi}(\boldsymbol{\gamma})$: $\dot{\boldsymbol{\gamma}}(t)=f_{\widetilde\Omega^*}(\boldsymbol{\gamma}(t))=f_{ \Omega^*}(\boldsymbol{\gamma}(t))+\boldsymbol{\phi}(\boldsymbol{\gamma}(t))$.} 
Then, we have
\begin{equation}\begin{array}{lll}
\dot{{\red \mathcal L}}(\boldsymbol{\gamma})\triangleq
\frac{\text{d}{\red \mathcal L}}{\text{d}t}&=&\frac{\text{d}{\red \mathcal L}}{\text{d}\boldsymbol{\gamma}}\dot{\boldsymbol{\gamma}}
=\frac{\text{d}{\red \mathcal L}}{\text{d}\boldsymbol{\gamma}}(
f_{\textcolor{black}{ \Omega^*}}(\boldsymbol{\gamma})+\boldsymbol{\phi}(\boldsymbol{\gamma})
) \leq-C_3||\boldsymbol{\gamma}-\boldsymbol{\gamma}_0^*||^2+2
C_2||\boldsymbol{\gamma}-\boldsymbol{\gamma}_0^*||\cdot||\boldsymbol{\phi}(\boldsymbol{\gamma})||\\
&=&-||\boldsymbol{\gamma}-\boldsymbol{\gamma}_0^*||(C_3||\boldsymbol{\gamma}-\boldsymbol{\gamma}_0^*||-
2 C_2||\boldsymbol{\phi}(\boldsymbol{\gamma})|| )
\end{array}\nonumber
\end{equation}
Note that $\dot{{\red \mathcal L}}(\boldsymbol{\gamma})<0$ for all
$\boldsymbol{\gamma}$ s.t.
$(C_3)^2||\boldsymbol{\gamma}-\boldsymbol{\gamma}_0^*||^2\geq
4(C_2)^2||\boldsymbol{\phi}(\boldsymbol{\gamma})||^2=\textcolor{black}{\delta_1+\delta_2}$. As a result, $\boldsymbol{\gamma}^t$ converges almost surely to an
invariant set given by $ \mathcal{S}\triangleq
\left\{\boldsymbol{\gamma}:
||\boldsymbol{\gamma}-\boldsymbol{\gamma}_0^*||^2-\textcolor{black}{\delta_1+\delta_2} \leq0
\right\}$. Furthermore, from $\dot{{\red \mathcal L}}(\boldsymbol{\gamma}^*)=0$, we
have $||\boldsymbol{\gamma}^*-\boldsymbol{\gamma}_0^*||^2-\textcolor{black}{\delta_1+\delta_2}
\leq0$. Therefore, the invariance set is also given by $
\mathcal{S}\triangleq \left\{\boldsymbol{\gamma}:
||\boldsymbol{\gamma}-\boldsymbol{\gamma}^*||^2-\textcolor{black}{\delta_1+\delta_2}
\leq0 \right\}$.

\section*{Appendix F: Proof of Lemma~\ref{lem:action}} \label{app:action}

\textcolor{black}{When $\mathbf{H}=\hat{\mathbf{H}}$, there is no interference under the Pco-MIMO.} Thus, $\frac{
\partial
g_{n}^{\hat{\boldsymbol{\chi}}}(\mathbf{P},\mathbf{R}_n,\mathbf{H})}{\partial
a_k}=0$ in \eqref{eq:gradient}. Therefore, when $P_{\text{cct}}=0$ and $\epsilon=0$ (i.e.,  $\mathbf{H}=\hat{\mathbf{H}}$),  the vector form of the iterations in Algorithm \ref{Alg:gra} becomes
$\mathbf{W}^{t+1}=\mathbf{W}^{t}-\kappa_{v}(t)
\nabla h^{\hat{\boldsymbol{\chi}}}(\mathbf{W}^{t}),$
where  $\mathbf{W}^{t}=\{\mathbf{P}^t,\mathbf{R}^t\}$ denotes the vector of the control actions at frame $t$ and $\nabla h^{\hat{\boldsymbol{\chi}}}(\mathbf{W}^{t})$ denotes the vector of the partial gradients (the first term   in \eqref{eq:gradient}). \textcolor{black}{Note that, when  $\mathbf{H}=\hat{\mathbf{H}}$, $\nabla h^{\hat{\boldsymbol{\chi}}}(\mathbf{W}^{t})$} is deterministic instead of stochastic. Using the standard gradient update argument \textcolor{black}{\cite[Chap. 10]{Borkar:2008}}, $\mathbf{W}^{t}$ tracks the
trajectory of the ODE $\dot{ \mathbf{W}}(t)=-\nabla
h^{\hat{\boldsymbol{\chi}}}( \mathbf{W}(t))$ and $ \mathbf{W}^{t}$
converges to a local minimum   $\hat{\mathbf{W}}^*=\{\hat{\mathbf{P}}^*,\hat{\mathbf{R}}^*\}$ in
$\hat{\mathcal{W}}^*$ \textcolor{black}{as $t\to \infty$}. When $P_{\text{cct}}\neq0$ and $\epsilon\neq0$, the vector form \textcolor{black}{of the iterations in Algorithm \ref{Alg:gra}} can be written as $ \mathbf{W}^{t+1} =  \mathbf{W}^{t} +
\kappa_{v}(t)[ -\nabla
h^{\hat{\boldsymbol{\chi}}}( \mathbf{W}^{t})
-
\boldsymbol{\phi}( \mathbf{W}^{t}) + \mathbf{M}^{t+1}
],$ where
$||\boldsymbol{\phi}( \mathbf{W}^{t})||=O(P_{\text{cct}})+O(\epsilon)$ 
and $ \mathbf{M}^{t+1}$ is a Martingale difference
noise with
$\mathbb{E}[\mathbf{M}^{t+1}|\hat{\mathbf{H}}]=0$.
Following the same argument in the proof of Lemma
\ref{lem:LM_converge}, we can prove Lemma~\ref{lem:action}.

%\bibliographystyle{IEEEtran}
%% argument is your BibTeX string definitions and bibliography database(s)
%\bibliography{IEEEabrv,backhaul}

% Generated by IEEEtran.bst, version: 1.13 (2008/09/30)

\end{document}